\let\csname equation*\endcsname\relax %chktex 41
\let\csname endequation*\endcsname\relax %chktex 41
\newcommand{\1}{{\rm 1\hspace{-0.9mm}l}}
\newtheorem{theorem}{Theorem}
\newtheorem{lemma}{Lemma}
\newtheorem{definition}{Definition}
\newtheorem{problem}{Problem}
\renewcommand\bra[1]{{\langle{#1}|}}
\renewcommand\ket[1]{%
  \@ifnextchar\bra{\k@t{#1}\!}{\k@t{#1}}%
}
\newcommand\k@t[1]{{|{#1}\rangle}}
\renewcommand{\vec}[1]{\boldsymbol{#1}}
\newcommand{\estim}[1]{\hat{#1}}
\newcommand{\adj}[1]{#1^\dagger}
\newcommand{\abs}[1]{\left\vert #1 \right\vert} %chktex 1
\newcommand{\norm}[1]{\left\Vert #1 \right\Vert} %chktex 1
\newcommand{\mineig}{\mathrm{mineig}\,}
\newcommand{\maxeig}{\mathrm{maxeig}\,}
\newcommand{\dd}{\mathrm{d}}
\newcommand{\ii}{\mathrm{i}}
\newcommand{\ee}{\mathrm{e}}
\newcommand{\Eps}{\mathcal{E}}
\newcommand{\States}{\mathcal{S}^+}
\newcommand{\HermTrace}{\mathcal{S}}
\newcommand{\Nhalf}{\tfrac{N}{2}}
\newcommand{\Prob}{\mathbb{P}}
\newcommand{\Vol}{\mathrm{Vol}}
\newcommand{\BR}{\mathbb{R}}
\newcommand{\BC}{\mathbb{C}}
\newcommand{\CR}{\mathcal{C}}
\newcommand{\red}[1]{{\color{red} #1}}
\begin{document}

\title[Error regions in quantum state tomography]{Error regions in quantum state tomography: computational complexity caused by geometry of quantum states}

\author{Daniel Suess,
{\L}ukasz Rudnicki, and
David Gross}
\address{Institute for Theoretical Physics, University of Cologne, Germany}

\author{Thiago O.\ Maciel}
\address{Departamento de F\'{\i}sica -- ICEx -- Universidade Federal de Minas Gerais, Brazil}
\begin{abstract}
	The outcomes of quantum mechanical measurements are inherently random.
	It is therefore necessary to develop stringent methods for quantifying the degree of statistical uncertainty about the results of quantum experiments.
	For the particularly relevant task of quantum state tomography, it has been shown that a significant reduction in uncertainty can be achieved by taking the positivity of quantum states into account.
	However -- the large number of partial results and heuristics notwithstanding -- no efficient general algorithm is known that produces an optimal uncertainty region from experimental data, while making use of the prior constraint of positivity.
	Here, we provide a precise formulation of this problem and show that the general case is NP-hard.
	Our result leaves room for the existence of efficient approximate solutions, and therefore does not in itself imply that the practical task of quantum uncertainty quantification is intractable.
	However, it does show that there exists a non-trivial trade-off between optimality and computational efficiency for error regions.
	We prove two versions of the result: One for frequentist and one for Bayesian statistics.
\end{abstract}

\maketitle

\section{Introduction}
\label{sec:intro}

The outcomes of quantum mechanical measurements are subject to intrinsic randomness.
As a result, all information we obtain about quantum mechanical systems are subject to statistical uncertainty.
It is thus necessary to develop stringent methods for quantifying the degree of uncertainty.
These allow one to decide whether an observed feature can be trusted to be real, or whether it may have arisen from mere statistical fluctuations (a ``fluke'').

In this paper, we concentrate on uncertainty quantification for quantum state tomography (QST)\footnote{%
  This subsumes the more general problem of standard quantum \emph{process} tomography, by way of the Choi-Jamiolkowski isomorphism~\cite{Nielsen_2010_Quantum,Jezek_2003_Quantum,Altepeter_2003_AncillaAssisted}.
}
Here, the task is to infer a density matrix $\varrho_0$, associated with a preparation procedure of a finite-dimensional quantum system, from the outcomes of measurements on independent copies of the system.
In addition to an estimate $\hat \varrho$ for the unknown true state $\varrho_0$, a tomography procedure should rigorously quantify the remaining statistical uncertainty.

We note that QST is an established experimental tool -- in particular in quantum information-inspired setups.
It has been used to characterize quantum states in a large number of different platforms -- Refs.~\cite{Brien_2004_Quantum,Lundeen_2009_Tomography,Terriza_2004_Triggered,Karpinski_2008_FiberOptic,Rippe_2008_Experimental,Steffen_2006_Measurement,Childress_2006_Coherent,Riebe_2007_Quantum,Schwemmer_2014_Experimental} are an incomplete list.

From a technical point of view, uncertainty quantification in QST may seem to be straight-forward.
After choosing a measurement to perform (i.e.\ by specifying a POVM), the probability distribution over the outcomes is a linear function of the unknown state $\varrho_0$.
Inference and uncertainty quantification in linear models are well-studied problems of mathematical statistics.
What makes the QST problem special is the additional constraint that the density matrix $\varrho_0$ be positive semi-definite (\emph{psd}) and have unit-trace.
This \emph{shape constraint} can lead to a significant reduction in uncertainty -- in particular if the true state $\varrho_0$ is close to the boundary of state space:
In this case, it is plausible that a large fraction of possible estimates that seem compatible with the observations
can be discarded, as they lie outside of state space.

Indeed, it is known that taking the psd constraint into account can result in a dramatic -- even unbounded -- reduction in uncertainty.
Prime examples are results that employ positivity to show that even \emph{informationally incomplete} measurements can be used to identify a quantum state with arbitrarily small error~\cite{Cramer_2010_Efficient,Gross_2010_Quantum,Gross_2011_Recovering,Flammia_2012_Quantum,Nickl_2013_Confidence,Kalev_2015_Quantum}.
More precisely, these papers describe ways to rigorously bound the size of a confidence region for the quantum state based only on the observed data and on the knowledge that the data comes from measurements on a valid quantum state.
While these uncertainty bounds can always be trusted without further assumptions, only in very particular situations have they been proven to actually become small.
These situations include the cases where the true state $\varrho_0$ is of low rank~\cite{Flammia_2012_Quantum,Nickl_2013_Confidence}, or admits an economical description as a matrix-product state~\cite{Cramer_2010_Efficient}.

It stands to reason that there are further cases -- not yet identified -- for which the size of an error region can be substantially reduced simply by taking into account the quantum shape constraints.
This motivates the research program this paper is part of: \emph{Understand the general impact of positivity constraints on uncertainty quantification in QST}.

The positive results cited above notwithstanding, it is not obvious how to take the a priori information of positive semi-definiteness into account algorithmically.
The fact that no practical and optimal general-purpose algorithm for quantum uncertainty quantification has been identified could either reflect a limit in our current understanding -- or it could indicate that no efficient algorithm for this problem exists.

In this work, we present first evidence that optimal quantum uncertainty quantification is algorithmically difficult.
We give rigorous notions of optimality both from the point of view of Bayesian statistics (where this concept is fairly canonic) and of ``orthodox'' statistics (where some choices have to be made).
We exhibit special cases for which there does exist an efficient algorithm that identifies optimal error regions.
However, our main result proves that in general, finding these regions is NP-hard and thus computationally intractable.
By working under assumptions that render the unconstrained problem tractable, we show that computational intractability arises solely due to the quantum constraints and not due to the general difficulties of high-dimensional statistics.

The present results do not by themselves imply that the practical problem of uncertainty quantification is unfeasible.
For applications, ``almost-optimal'' regions would be completely satisfactory.
And indeed, a number of techniques for tackling this problem in theory and practice have been proposed (e.g.\ based on \emph{sample splitting}, \emph{resampling}, or on approximations for Bayesian posterior distributions -- c.f.\ Sec.~\ref{sub:intro.related}).
Each of these methods is known analytically or from numerical experiments to perform well in some regimes.
However, this paper does establish that \emph{there is a non-trivial trade-off between optimality and computational efficiency in quantum uncertainty quantification}.
What is more, our work might help guide future efforts that aim to design efficient and optimal estimators: With a very natural construction proven not to possess an efficient algorithm in general, it is now clear that researchers must focus on approximations that circumvent our hardness results.
In general, we hope that this work establishes a framework for future positive and negative results, which will eventually allow us to understand which performance can be achieved.

The rest of this paper is structured as follows. In the subsections below, we comment on use-cases of full QST for high-dimensional quantum systems, summarize related works, and clarify the (non-trivial) issue of ``optimality'' in uncertainty quantification.
We then establish the main result for orthodox statistics in Section~\ref{sec:orthodox} and follow up with a Bayesian treatment in Section~\ref{sec:bayesian}.

\subsection{The need for full tomography}
\label{sub:tomography}

A large number of tomography experiments for quantum systems with hundreds of dimensions has been published, e.g.~\cite{Haeffner_2005_Scalable}.
However, it is not completely obvious that this approach will continue to make sense as dimensions scale up further.

Indeed, a variety of theoretical tools for \emph{quantum hypothesis testing}, \emph{certification}, and scalar \emph{quantum parameter estimation}~\cite{Donnell_2015_Quantum,Audenaert_2008_Asymptotic,Guehne_2009_Entanglement,Flammia_2011_Direct,Schwemmer_2015_Systematic,Li_2016_Optimal} have been developed in the past years, that avoid the costly step of full QST.
Examples include entanglement witnesses~\cite{Guehne_2009_Entanglement} and \emph{direct fidelity estimation}~\cite{Flammia_2011_Direct}.

However, there remain use cases that necessitate full-fledged QST.
We see a particularly important role in the emergent field of \emph{quantum technologies}:
Any technology requires means of certifying that components function as intended and, should they fail to do so, identify the way in which they deviate from the specification.

As an example, consider the implementation of a quantum gate that is designed to act as a component of a universal quantum computing setup.
One could use a certification procedure -- direct fidelity estimation, say -- to verify that the implementation is sufficiently close to the theoretical target that it meets the stringent demands of the quantum error correction threshold.
If it does, the need for QST has been averted.
However, should it fail this test, the certification methods give no indication \emph{in which way} it deviated from the intended behavior.
They yield no actionable information that could be used to adjust the preparation procedure.
The pertinent question ``what went wrong'' cannot be cast as a hypothesis test.

Thus, while many estimation and certification schemes can -- and should -- be formulated without resorting to full tomography, the above example shows that QST remains an important primitive.

\subsection{Error Regions}
\label{sub:intro.error_regions}

As inference based on empirical data is one of the main topics of statistics, it is natural to apply the established notions of uncertainty quantification to QST.
These are either \emph{confidence regions} in orthodox statistics~\cite{Kiefer_2012_Introduction} or \emph{credible regions} in Bayesian statistics~\cite{Bolstad_2007_Introduction}.
The two approaches give rise to different techniques, but most importantly, have very distinct interpretations~\cite{Jaynes_1976_Confidence}.\\

In orthodox (or frequentist) statistics, the task of parameter estimation can be summarized as follows:
We assume that the observed data is generated from a parametric model with true parameter $\Theta$, which is unknown.
From a finite number of observations $X_1, \ldots, X_N$, we must construct an estimate $\hat\Theta$ that should be ``close to'' the true value $\Theta$ in some sense.
The function that maps data to such an estimate is called a (point) estimator.
A \emph{confidence region $\CR$ with coverage $\alpha$} is a region estimator -- that is a function that maps observed data to a subset of the parameter space -- such that the true parameter is contained within it with probability greater than $\alpha$
\begin{equation}
  \label{eq:intro.confidence_region}
  \Prob_\Theta(\CR(X_1,\ldots,X_N) \owns \Theta) \ge \alpha.
\end{equation}
Note that the defining property of a confidence region concerns the behavior of the random function $\CR$ over the course of many (hypothetical) repetitions of the experiment.
No statement is made about a single run.

Of course, Eq.~\eqref{eq:intro.confidence_region} does not uniquely determine a confidence region; it does not even guarantee a sensible quantification of uncertainty, as $\CR$ equal to the whole parameter space fulfills this condition trivially.
Therefore, we consider confidence regions that perform well with respect to (w.r.t.) some notion of optimality:
In general, smaller regions should be preferred since they convey more confidence in the estimate and exclude more alternatives.
But since the size -- as measured by volume -- of a confidence region may depend on the particular data sample as well as the true value of the parameter, different notions of optimality have been introduced~\cite{Pfanzagl_1994_Parametric}.\\

Bayesian statistics on the other hand treats the parameter $\Theta$ itself as a random variable.
The distribution over $\Theta$ reflects our knowledge about the parameters~\cite{Bolstad_2007_Introduction}.
Ahead of observing any data, one has to choose a \emph{prior distribution}, which represents our a priori beliefs.
The observed data is then incorporated using Bayes' rule to \textit{update} the distribution yielding the posterior $\Prob(\Theta | X_1, \ldots, X_N)$.
A credible region $\CR$ (we denote both confidence and credibility regions by the same letter) with credibility $\alpha$ is defined as a subset of the parameter space containing at least mass $\alpha$ of the posterior
\begin{equation}
  \label{eq:intro.credible_region}
  \Prob(\Theta \in \CR | X_1, \ldots, X_N) \ge \alpha.
\end{equation}
In contrast to the orthodox setting, here, the data is assumed to be fixed and the probability is assigned w.r.t.\ $\Theta$.

Since the posterior distribution is uniquely defined by the choice of prior and the data,
there is less ambiguity in the choice of a notion of optimality:
The most natural choice are minimal-volume credible regions.
In case the posterior has the probability density $\pi(\theta)$ w.r.t.\ the volume measure, these are given by regions of highest posterior density
\begin{equation}
  \label{eq:intro.highest_posterior}
  \CR = \{ \theta\colon \pi(\theta) \geq \lambda \},
\end{equation}
where $\lambda$ is determined by the saturation of the credibility level condition~\eqref{eq:intro.credible_region}.

\subsection{Positivity of quantum states}
\label{sub:intro.positivity}

When attempting to construct optimal error regions for QST, we should exploit the physical constraints at hand in order to reduce their size and, therefore, make them more powerful:
every valid density matrix $\varrho$ -- apart from being Hermitian and normalized -- must be positive semidefinite (psd). More formally, in a $d$-dimensional scenario it is required that
\begin{equation}
  \varrho \in \States = \{ \varrho \in \BC^{d \times d}\colon \varrho = \varrho^\dagger, \tr \varrho = 1, \varrho \ge 0 \}.
\end{equation}
Here, $\States$ denotes the set of valid mixed quantum states, which is a proper subset of the real vector space $\HermTrace$ of Hermitian matrices with unit trace.

While the first two properties (hermiticity and normalization) are linear constraints and therefore easy to take into account by virtue of an appropriate parametrization, positivity is far more challenging to employ constructively.
A prime example where this structural information is crucial in the construction of optimal error regions is the application of compressed sensing techniques to QST~\cite{Gross_2010_Quantum,Flammia_2012_Quantum,Carpentier_2015_Uncertainty}.
Compressed sensing allows to recover a low-rank state from informationally incomplete measurements.
Without further assumptions, this can lead to unbounded error regions -- c.f.\ the discussion of Pauli designs in~\cite{Carpentier_2015_Uncertainty} and Sec.~\ref{sub:ortho.optimal}.
Nevertheless, the constraints implied by physical states allow for the construction of confidence regions in this setting~\cite{Carpentier_2015_Uncertainty}, that are of finite size and that become arbitrarily small as the individual measurement errors tend to zero.

However, as the cited work is specifically tailored to the compressed sensing scenario, it is not clear how to extend  it to the general setting of QST.
The purpose of this work is to explore the degree to which positivity can be taken into account in general, if one assumes that computational power is bounded.

\subsection{State of the art}
\label{sub:intro.related}

In practice (e.g.~\cite{Haeffner_2005_Scalable}), uncertainty quantification for tomography experiments is usually based on general-purpose resampling techniques such as ``bootstrapping''~\cite{Efron_1994_Introduction}.
A common procedure is this: For every fixed measurement setting, several repeated experiments are performed.
This gives rise to an empirical distribution of outcomes for this particular setting.
One then creates a number of simulated data sets by sampling randomly from a multinomial distribution with parameters given by the empirical values.
Each simulated data set is mapped to a quantum state using maximum likelihood estimation.
The variation between these reconstructions is then reported as the uncertainty region.
There is no indication that this procedure grossly misrepresents the actual statistical fluctuations.
However, it seems fair to say that its behavior is not well-understood.
Indeed, it is simple to come up with pathological cases in which the method would be hopelessly optimistic:
E.g.\ one could estimate the quantum state by performing only one repetition each, but for a large number of randomly chosen settings.
The above method would then spuriously find a variance of zero.\\

On the theoretical side, some techniques to compute rigorously defined error bars for quantum tomographic experiments have been proposed in recent years.
The works of Blume-Kohout~\cite{Kohout_2012_Robust} as well as Christandl, Renner, and Faist~\cite{Christandl_2012_Reliable,Faist_2015_Practical} exhibit methods for constructing confidence regions for QST based on likelihood level sets.
While very general, neither paper provides a method that has both a runtime guarantee and also adheres to some notion of non-asymptotic optimality~\cite{Kiefer_2012_Introduction,Cam_2012_Asymptotic}.

Some  authors have proposed a ``sample-splitting'' approach, where the first part of the data is used to construct an estimate of the true state, whereas the second part serves to construct an error region around it~\cite{Flammia_2012_Quantum} (based on~\cite{Flammia_2011_Direct}), as well as~\cite{Carpentier_2015_Uncertainty}.
These approaches are efficient, but rely on specific measurement ensembles (operator bases with low operator norm), approach optimality only up to poly-logarithmic factors, and -- in the case of~\cite{Flammia_2012_Quantum, Flammia_2011_Direct} -- rely on adaptive measurements.

Regarding Bayesian methods, the \emph{Kalman filtering} techniques of~\cite{Audenaert_2008_Asymptotic} provide a efficient algorithm for computing credible regions.
This is achieved by approximating all Bayesian distributions over density matrices by Gaussians and restricting attention to ellipsoidal credible regions.
The authors develop a heuristic method for taking positivity constraints into account -- but the degree to which the resulting construction deviates from being optimal remains unknown.
A series of recent papers aim to improve this construction by employing the \emph{particle filter} method for Bayesian estimation and uncertainty quantification
\cite{Granade_2016_Practicala,Wiebe_2015_Bayesian,Ferrie_2014_High}.
Here, Bayesian distributions are approximated as superpositions of delta distributions and credible regions constructed using Monte Carlo sampling. These methods lead to fast algorithms and are more flexible than Kalman filters with regard to modelling prior distributions that may not be well-approximated by any Gaussian. However, once more, there seems to be no rigorous estimate for how far the estimated credible regions deviate from optimality.
Finally, the work in~\cite{Shang_2013_Optimal} constructs optimal credible regions w.r.t.\ a different notion of optimality:
Instead of penalizing sets with larger volume, they aim to minimize the prior probability as suggested by~\cite{Evans_2006_Optimally}.

\section{Orthodox Confidence Regions}
\label{sec:orthodox}

In this section we are going to present the first major result of this work concerned with orthodox confidence regions in QST.
Optimal confidence regions for such high-dimensional parameter estimation problems are quite intricate even without any constraints on the allowed parameters.
There are only few elementary settings, where optimal error regions are known and easily characterized.

Since the goal of this work is to demonstrate that quantum shape constraints severely complicate even ``classically'' simple confidence regions, in the further discussion we restrict the discussion to a simplified setting:
We focus on confidence ellipsoids for Gaussian distributions, which are one of the few easily characterizable examples.
Furthermore, by local asymptotic normality, these arise as a natural approximation in the limit of many measurements.
As we show in the following, even characterizing these highly simplifying ellipsoids with the quantum constraints taken into account constitutes a hard computational problem.
On the other hand, as indicated in the introduction, these structural assumptions may help to reduce the uncertainty tremendously.
Therefore, our work can be interpreted a trade-off between computational efficiency and statistical optimality in QST.

\subsection{Optimal confidence regions for quantum states}
\label{sub:ortho.optimal}

As already indicated in Sec.~\ref{sub:intro.positivity}, the additional information that the true quantum state $\varrho_0$ must belong to the set of positive semidefinite matrices $\States \subset \HermTrace$ can be exploited to possibly improve any confidence region for QST.
This is especially clear for notions of optimality with a loss function stated in terms of volume\footnote{%
  Throughout this work, the volume is taken with respect to the flat Hilbert-Schmidt measure on $\HermTrace$.
}
$\Vol(\cdot)$, as we will show in this section.

We consider an especially simple procedure to take the positivity constraints into account, namely truncating all non-positive matrices from tractable confidence regions for the unconstrained problem.
This approach is mainly motivated by the goal to show that the computational intractability exclusively stems from the quantum constraints and is not caused by difficulties of high-dimensional statistics in general.
Furthermore, we prove in Lemma~\ref{lem:ortho.admissible_truncation} that some notions of optimality, e.g.\ admissibility, are preserved under truncation.
In other words, there are notions of optimality such that truncation of an optimal confidence region for the unconstrained problem gives rise to an optimal region for the constrained one.\\

First, let us introduce the notion of admissibility as given by~\cite[Def. 2.2]{Joshi_1969_Admissibility}.
\begin{definition}\label{def:ortho.admissible}
  A confidence region $\CR$ for the parameter estimation of $\varrho_0 \in \HermTrace$ is called (weakly) \emph{admissible} if there is no other confidence region $\CR'$ that fulfills
  \begin{enumerate}
    \item(equal or smaller volume) $\Vol(\CR'(\vec y)) \le \Vol(\CR(\vec y))$ for almost all observations $y \in \BR^m$
    \item(same or better coverage) $\Prob(\CR' \owns \varrho_0) \ge \Prob(\CR \owns \varrho_0)$ for all $\varrho_0 \in \HermTrace$
    \item(strictly better) strict inequality holds for one $\varrho_0 \in \HermTrace$ in (ii) or on a set of positive measure in (i).
  \end{enumerate}
\end{definition}
In words, $\CR$ is admissible if there is no other confidence region $\CR'$ that performs at least as good as $\CR$ and strictly better for some settings.
The conditions in Def.~\ref{def:ortho.admissible} are stated only for ``almost all'' $\vec y$, since one can always modify the region estimators on sets of measure zero without changing their statistical performance.
A different approach is to state condition (i) in terms of the expected volume\footnote{%
  Here, the average is taken over to the obtained data for a fixed true state $\varrho_0$.
}
, which leads to the notion of strong admissibility~\cite[Def.~7.1]{Joshi_1969_Admissibility}.

Def.~\ref{def:ortho.admissible} can also be stated for the parameter estimation with physical constraints, i.e.\ when $\varrho_0 \in \States$.
The question is: Can we obtain admissible confidence regions $\CR^+\subset\States$ for the constrained setting from admissible confidence regions $\CR\subset\HermTrace$ of the unconstrained estimation problem?
The following Lemma answers this question with a simple geometric construction:
\begin{lemma}\label{lem:ortho.admissible_truncation}
  Let $\CR$ denote an admissible confidence region for the unconstrained estimation problem for the parameter $\varrho_0 \in \HermTrace$.
  Then, $\CR^\cap := \CR\cap\States$ is an admissible confidence region for the constrained problem with $\varrho_0 \in \States$.
\end{lemma}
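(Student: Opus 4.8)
The plan is to argue by contradiction via a gluing construction that modifies $\CR$ only inside the state space. First I would record the elementary but essential observation that truncation does not affect coverage at points of $\States$: for every $\varrho_0 \in \States$ one has $\varrho_0 \in \CR^\cap(\vec y)$ if and only if $\varrho_0 \in \CR(\vec y)$, so $\Prob_{\varrho_0}(\CR^\cap \owns \varrho_0) = \Prob_{\varrho_0}(\CR \owns \varrho_0) \ge \alpha$ by~\eqref{eq:intro.confidence_region}. Hence $\CR^\cap$ is at least a valid confidence region for the constrained problem, and it remains only to establish admissibility.

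Suppose then that $\CR^\cap$ were \emph{not} admissible. Applying Definition~\ref{def:ortho.admissible} to the constrained problem yields a competing confidence region $\CR' \subseteq \States$ with $\Vol(\CR'(\vec y)) \le \Vol(\CR^\cap(\vec y))$ for almost all $\vec y \in \BR^m$, with $\Prob(\CR' \owns \varrho_0) \ge \Prob(\CR^\cap \owns \varrho_0)$ for all $\varrho_0 \in \States$, and with strict inequality on a set of positive measure in the volume condition or at some $\varrho_0 \in \States$ in the coverage condition. (If $\CR'$ is not already contained in $\States$, replace it by $\CR' \cap \States$, which only decreases volume and leaves coverage at points of $\States$ unchanged.) The key idea is to manufacture a competitor for the \emph{unconstrained} problem by pasting $\CR'$ into the state space while retaining the original region outside it:
\begin{equation*}
  \tilde\CR(\vec y) := \CR'(\vec y) \cup \bigl(\CR(\vec y) \setminus \States\bigr).
\end{equation*}

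Checking that $\tilde\CR$ dominates $\CR$ then splits into two regimes. Since $\CR'(\vec y) \subseteq \States$ and $\CR(\vec y)\setminus\States$ are disjoint, their volumes add, and using $\CR(\vec y) = \CR^\cap(\vec y) \cup (\CR(\vec y)\setminus\States)$ I would conclude $\Vol(\tilde\CR(\vec y)) \le \Vol(\CR(\vec y))$ almost everywhere. For coverage, a point $\varrho_0 \in \States$ lies in $\tilde\CR(\vec y)$ exactly when it lies in $\CR'(\vec y)$, whereas a point $\varrho_0 \in \HermTrace \setminus \States$ lies in $\tilde\CR(\vec y)$ exactly when it lies in $\CR(\vec y)$. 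Thus the coverage of $\tilde\CR$ equals that of $\CR'$ on $\States$ and that of $\CR$ off $\States$; in particular it is $\ge \alpha$ everywhere, so $\tilde\CR$ is a legitimate unconstrained confidence region, and it matches or beats $\CR$ in conditions (i) and (ii). This contradicts the admissibility of $\CR$ and completes the argument.

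The step requiring the most care is the bookkeeping of the strictness clause (iii): I must verify that whichever strict improvement $\CR'$ enjoys over $\CR^\cap$ survives the gluing. Strict volume reduction persists because the $\CR(\vec y)\setminus\States$ contribution is identical on both sides, so any positive-measure volume gain of $\CR'$ over $\CR^\cap$ transfers to $\tilde\CR$ over $\CR$; and a strict coverage gain at some $\varrho_0 \in \States$ persists because coverage at such a point depends only on the part of the region inside $\States$. A secondary, more technical point is measurability: one should confirm that $\tilde\CR$ is again an admissible region estimator, which holds because $\States$ is a closed — hence Borel — subset of $\HermTrace$, so gluing two jointly measurable region estimators along it preserves measurability.
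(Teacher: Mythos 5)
Your proposal is correct and follows essentially the same route as the paper: the gluing construction $\tilde\CR = \CR' \cup (\CR\setminus\States)$ is exactly the paper's competitor $\CR^+ \cup \CR^c$ with $\CR^c = \CR\setminus\CR^\cap$, and the contradiction with admissibility of $\CR$ is derived identically. If anything, your case analysis of the strictness clause (iii) is slightly more careful than the paper's ``w.l.o.g.\ equal coverage'' reduction, which quietly collapses the coverage-improvement case into the volume-improvement one.
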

\begin{proof}
  Under the assumption that $\CR^\cap$ is not admissible, there must exist a ``better'' confidence region $\CR^+$ for the constrained parameter estimation problem.
  W.l.o.g.\ assume that both $\CR^+$ and $\CR^\cap$ have the same coverage.
  Therefore, we must have $\Vol(\CR^+(\vec y)) \leq \Vol(\CR^\cap(\vec y))$ for almost all observations $y \in \BR^m$, and there is a set $Y \subset \BR^m$ of non-zero measure such that $\Vol(\CR^+(\vec y)) < \Vol(\CR^\cap(\vec y))$ for $\vec y \in Y$.
  Define a new confidence region for the unconstrained problem
  \begin{equation}
    \CR' := \CR^+ \cup \CR^c,
    \label{eq:ortho.new_region}
  \end{equation}
  where $\CR^{c}=\CR\setminus \CR^{\cap}$ denotes the compliment of $\CR^{\cap}$ in $\CR$.
  Then, $\CR'$ has the given coverage level, since $\CR^+$ provides coverage for $\varrho_0 \in \States$, whereas $\CR^c$ provides coverage for the case $\varrho_0 \in \HermTrace \setminus \States$.
  Furthermore, we have for almost all $\vec y$
  \begin{equation}
    \label{eq:ortho.volumes}
    \begin{split}
      \Vol(\CR'(\vec y))
      &= \Vol(\CR^+(\vec y)) + \Vol(\CR^c(\vec y)) \\
      &\leq \Vol(\CR^\cap(\vec y)) + \Vol(\CR^c(\vec y)) \\
      &= \Vol(\CR(\vec y)).
    \end{split}
  \end{equation}
  Finally, strict inequality holds in Eq.~\eqref{eq:ortho.volumes} for all $\vec y \in Y$ due to the assumption on $\CR^+$.
  However, this would imply $\CR$ not being admissible in contradiction to the assumptions of the Lemma.
\end{proof}

One criticism raised against the use of the truncated confidence regions is the possibility that they may yield empty realizations and, hence, are considered ``unphysical''~\cite{Feldman_1998_Unified}.
However, according to the standard definition in Sec.~\ref{sub:intro.error_regions}, a procedure that reports 95\% confidence regions is allowed to give any result 5\% of the time.

Furthermore, a different strategy often adopted for point estimator is to use an unconstrained parametrization for the constrained parameter space.
A typical example is a coin toss model with bias $p \in [0, 1]$.
Instead of $p$, the problem can also be parameterized in terms of of log-odds $\log\frac{p}{1 - p}$, which can take any value in $(-\infty,\infty)$.
Similar, one could use the following parametrization for quantum states guaranteed to give a positive semidefinite, Hermitian matrix with trace 1
\[
  \rho(X) = \frac{X \adj X}{\Tr X \adj X}
\]
with $X \in \mathbb{C}^{d \times d}$.
Although this parametrization can certainly be advantageous for point estimation, it is unlikely to be helpful for uncertainty quantification:
While $X$ and $\rho(X)$ carry equivalent information, the size of a region measured in ``$X$-space'' is hardly related to the size of a region in the physical state space.
This is necessarily so, as any map from an unbounded space onto the compact quantum state space must grossly distort the geometry.
So, having obtained a ``small confidence region'' in parameter space does not imply that the state has been well-estimated w.r.t.\ any physically relevant metric.

\subsection{Confidence Regions from Linear Inversion}
\label{sub:ortho.linear_inversion}

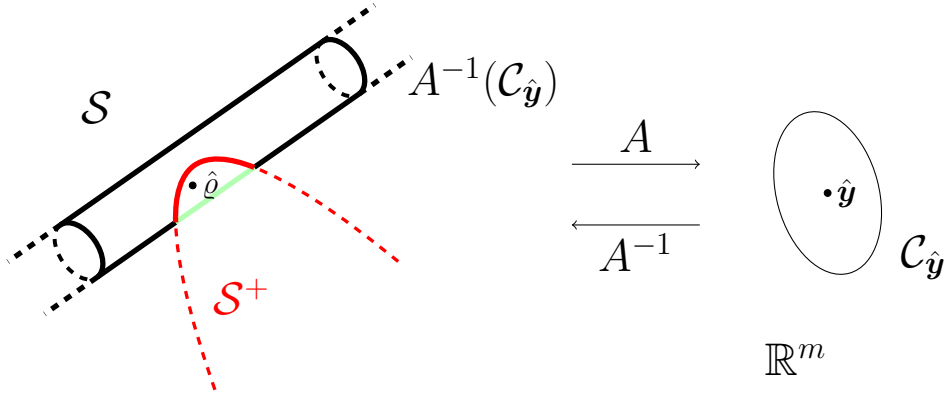
\begin{figure*}
  \centering
  \begin{tikzpicture}[scale=0.85]
    \begin{scope}[shift={(0,0)},rotate=35]
      \draw[line width=2](0,0)--(5,0)  (0,-1)--(1.5,-1)  (3,-1)--(5,-1);
      \draw[dashed, line width=2](0,0)--(-1,0)  (0,-1)--(-1,-1)  (5,0)--(6,0)  (5,-1)--(6,-1);
      \draw[color=green!30!white,line width=2](1.5,-1)--(3,-1);
      \draw[dashed, line width=1.5](0,0) arc (90:270:.3cm and .5cm);
      \draw[dashed, line width=1.5](5,0) arc (90:270:.3cm and .5cm);
      \draw[line width=2](0,0) arc (90:-90:.3cm and .5cm);
      \draw[line width=2](5,0) arc (90:-90:.3cm and .5cm);
      \draw[color=red, line width =2](1.5,-1) parabola bend (2.25,-7/16) (3,-1) ;
      \draw[color=red, dashed, line width =1.25](.5,-3.5) parabola[bend at end] (2.25,-7/16);
      \draw[color=red, dashed, line width =1.25](4,-3.5) parabola[bend at end] (2.25,-7/16);
      \draw[fill=black] (2.05,-.68) circle (1.5pt)  node[anchor=west]{{\large $\estim{{\varrho}}$}};
      \draw[color=red] (1.2,-2.2)  node[anchor=west]{{\LARGE $\States$}};
      \draw (1.2,1.4)  node[anchor=west]{{\LARGE $\HermTrace$}};
      \draw (5.6,-1.2)  node[anchor=west]{{\LARGE $A^{-1}(\CR_{\estim{\vec y}})$}};
    \end{scope}
    \begin{scope}[shift={(7,-2.5)}]
      \draw (0,0) (2,3.5) node[anchor=south]{{\LARGE $A$}} (2,2.5) node[anchor=north]{{\LARGE $A^{-1}$}};
      \draw[->] (1,3.45) -- (3,3.45);
      \draw[<-] (1,2.5) -- (3,2.5);
    \end{scope}
    \begin{scope}[shift={(10,-2.5)}]
      \draw[fill=black] (0,0) (2,3) circle (1.5pt) node[anchor=west]{{\large $\estim{\vec{y}}$}};
      \draw[rotate around={15:(2,3)}] (0,0) (2,3) ellipse (.8 and 1.3);
      \draw (0,0) (3.5,1.5) node[anchor=south]{{\LARGE $\CR_{\estim{\vec y}}$}};
      \draw (0,0) (1.5,0) node[anchor=south]{{\LARGE $\mathbb{R}^m$}};
    \end{scope}
  \end{tikzpicture}
  \caption{\label{fig:ortho.geometry}
    Geometric construction of confidence region for $\estim\varrho$.
    Quantum states are mapped by a measurement matrix $A$ to the respective quantum expectation values $\vec y$.
    Conversely, the pre-image of a confidence region $\CR_{\estim{\vec y}}$ under $A$ gives rise to a confidence region for $\estim\varrho$.
    These may be unbounded if the measurements are not tomographically complete -- a drawback that can be cured by taking into account the physical constraints on quantum states, i.e.\ positivity.
    }
\end{figure*}

A particularly simple method to transform estimates of measurement data to estimates of quantum states is the method of linear inversion, which we are going to review now:
First, assume that the \emph{true} but unknown quantum state is represented by a $d\times d$ density matrix $\varrho_0$ and the QST performed by measuring $m\geq d^{2} - 1$ tomographically-complete measurement projectors $E_{1},\ldots,E_{m}$.
By $y_{k}=\tr\left(E_{k}\varrho_0\right)$, $k=1,\ldots,m$ we denote the (quantum) expectation values of $E_{k}$ for the true state $\varrho_0$.
Since these relations are linear, we can rewrite them as $\vec{y} = A \vec{\varrho}$, where $\vec{\varrho}$ stands for the quantum state interpreted as a vector and $A$ is the measurement (or design) matrix independent of $\varrho$.
The desired (pseudo)-inverse of the above relation is
\begin{equation}
  \label{eq:ortho.linear_inversion}
  \vec{\varrho}={\left(A^{T}A\right)}^{-1}A^{T}\vec{y}
\end{equation}
and simplifies to $\vec{\varrho}=A^{-1}\vec{y}$ if $m=d^{2} - 1$.

Of course, in an experiment, the expectation values $\vec{y}$ are unknown and can only be approximated by some estimate $\estim{\vec y}$ based on the observed data.
The linear inversion estimate for the quantum state $\estim{\varrho}$ is then given by Eq.~\eqref{eq:ortho.linear_inversion} with the probabilities $\vec y$ replaced by the empirical frequencies $\estim{\vec y}$.
However, due to statistical fluctuations the estimated state $\estim{\varrho}$ is not necessarily positive semidefinite~\cite{Knips_2015_How}, which led to the development of estimators enforcing the physical constraints such as the maximum likelihood estimator~\cite{Hradil_2004_3}.
Although the linear inversion and maximum likelihood estimator solve two distinct problems -- namely the unconstrained and constrained one, respectively -- in certain cases the two are related.
 More precisely, if the outcomes approximately follow a Gaussian distribution, a fast projection algorithm computes the maximum likelihood estimate from the linear inversion estimate directly~\cite{Smolin_2012_Maximum}.

Here, we take a similar approach.
First, the simple geometric interpretation of the linear inversion estimator (see Fig.~\ref{fig:ortho.geometry}) allows us to map confidence regions for the expectation values to confidence regions for the state without taking into account the positivity constraint:
If $\CR_{\estim{\vec y}}$ is a confidence region for $\estim{\vec y}$ with confidence level $\alpha$, then so is its pre-image under the measurement map
\begin{equation}
  \label{eq:ortho.linear_confidence}
  \CR_{\estim{\vec \varrho}} := A^{-1}(\CR_{\estim{\vec y}})
\end{equation}
for $\estim{\vec \varrho}$.
Second, the truncation construction from Sec.~\ref{sub:ortho.optimal} yields an improved confidence region for the problem with quantum constraints taken into account.
As shown in Lemma~\ref{lem:ortho.admissible_truncation}, this approach yields admissible confidence region provided the original region was admissible.

The same construction can also be carried out for tomographically incomplete measurements, i.e.\ for $m<d^{2}$:
Since the measurement matrix $A$ is non-invertible in this case, the estimate for the state $\estim{\varrho}$ satisfying $A\estim{\vec\varrho} = \estim{\vec y}$ is not uniquely defined.
However, under additional structural assumptions, one can single out a unique estimate~\cite{Gross_2010_Quantum,Flammia_2012_Quantum}.
The singularity of the measurement map $A$ also reflects in the confidence region defined by Eq.~\eqref{eq:ortho.linear_confidence}.
Even if $\CR_{\estim{\vec y}}$ is a bounded region, the confidence region for the state $\CR_{\estim{\vec \varrho}}$ extends to infinity in the directions ``unobserved by $A$''.
In both cases, the tomographically complete and incomplete, we can use the intersection with the psd states to reduce the the region's size while not sacrificing coverage.
This improvement is especially far-reaching in the latter case, where it turns an unbounded region to a bounded one just by taking into account the physical constraints.\\

Of course, the question is whether we can somehow characterize the truncated confidence region $\CR_{\estim{\vec \varrho}}^\cap := A^{-1}(\CR_{\estim{\vec y}}) \cap \States$ computationally efficiently.
Since our goal is to show that this is an intractable problem exclusively due to the quantum constraints -- and not because of the complexity of high-dimensional statistics in general -- we are going to make the simplifying assumption that the measured frequencies are approximately Gaussian distributed.
Furthermore, we are going to focus on a class of confidence regions that are efficiently characterizable in the unconstrained setting, namely Gaussian confidence ellipsoids or, more precisely, ellipsoidal balls of the form
\begin{equation}
  \CR_{\estim{\vec y}} = \left\{ \vec y \in \BR^m\colon  {\left(\vec{y}-\estim{\vec{y}}\right)}^T B\left(\vec{y}-\estim{\vec{y}}\right) \leq 1  \right\}
  \label{eq:ortho.confelip}
\end{equation}
centered at the the empirical frequencies  $\estim{\vec y}$.
The $m\times m$, symmetric, positive semidefinite matrix $B$ completely specifies the ellipsoidal shape of this confidence region.
These are the natural generalizations of the well-known $2\sigma$ confidence intervals to multivariate Gaussian distributions.

However, in the unconstrained setting, the ellipsoidal construction~\eqref{eq:ortho.confelip} is known to be admissible only for $m=\left\{ 1,2\right\}$~\cite{Joshi_1969_Admissibility}, while it is not admissible for $m\geq3$~\cite{Joshi_1967_Inadmissibility} due to Stein's phenomenon~\cite{Stein_1956_Inadmissibility}.
Smaller confidence ellipsoids with the same coverage can be obtained by shifting the center slightly~\cite{Tseng_1997_Good,Hwang_1982_Minimax}.
Furthermore, other constructions similar to an egg~\cite{Shinozaki_????_Improved} or the non-convex Pascal lima\c{c}on~\cite{Brown_1995_Optimal} are known to outperform the standard ellipsoids.
Nevertheless, non of these constructions is known to be optimal and, to the best of the author's knowledge, no optimal confidence region for multivariate Gaussians in higher dimensions is known.

But, since our discussion is focused on the question how the physical psd constraints can be used to improve confidence regions, we are still going to use the ellipsoids~\eqref{eq:ortho.confelip} as a tractable example:
As we will prove later, it is impossible to characterize the truncated ellipsoids efficiently although they are fully described by a few parameters, namely $\estim{\vec y}$ and $B$ in the unconstrained case.
In other words, we show that there exists a trade-off between computational and statistical efficiency for the problem of determining ``good'' confidence regions in QST.\\

In the remainder of this section, we are going to discuss a useful parametrization for the aforementioned ellipsoids~\eqref{eq:ortho.confelip}.
To this end we use the fact that any $d\times d$ Hermitian matrix can be expanded in a basis formed by the identity $\1$ and $d^{2}-1$ traceless Hermitian matrices $\sigma_{i}$, $i=1,\ldots,d^{2}-1$, normalized according to $\textrm{Tr}(\sigma_{i}\sigma_{j})=2\delta_{ij}$.
With the symbols $\sigma_{i}$ we associate here the most common choice of the basis elements~\cite{Kimura_2003_Bloch}  --  explicitly provided in~\ref{sec:parametrisation} -- while any other $\sigma_{i}'=\sum_{j}O_{ji}\sigma_{j}$, given in terms of an orthogonal $d^{2}-1$ dimensional matrix $O$, are valid alternatives.
For $d=2$ the choice stated in~\ref{sec:parametrisation} is simply the Bloch basis of Pauli matrices: $\sigma_{1}\equiv\sigma_{x}$, $\sigma_{2}\equiv\sigma_{y}$ and $\sigma_{3}\equiv\sigma_{z}$.
In higher dimensions the matrices $\sigma_{i}$ maintain the Bloch basis structure:
Let
\begin{equation}
  \label{eq:ortho.x_yz_index}
  i_{d}=d(d-1)/2,
\end{equation}
then their construction mimics $\sigma_{x}$ for $1\leq i\leq i_{d}$, $\sigma_{y}$ for $i_{d}+1\leq i\leq2i_{d}$ and $\sigma_{z}$ for $2i_{d}+1\leq i\leq d^{2}-1$.
Therefore, we are going to refer to the $\sigma_{i}$ as (generalized) Bloch representation.

We are in position to provide the first result of this paper, falling into the category of geometry of quantum states:
\begin{theorem}\label{thm:ortho.ellipsoids}
  For the tomographically complete case $m \geq d^2 - 1$, the pre-image under the measurement matrix of any confidence ellipsoid of the form~\eqref{eq:ortho.confelip} can be represented as
  \begin{equation}
    \CR = \left\{ \estim{\varrho} + \sum_{i}R_{i}u_{i}\sigma_{i}'\colon \vec{u}^{T}\vec{u}\leq1 \right\},
    \label{eq:ortho.ellipsoid}
  \end{equation}
  where $\estim{\varrho}$ is the linear inversion estimator, that is a Hermitian matrix with $\Tr \estim{\varrho} = 1$, and the $R_{i}>0$ ($i=1,\ldots,d^{2}-1$) are the ellipsoid's radii in the directions given by $\sigma_{i}'=\sum_{j}O_{ji}\sigma_{j}$, respectively.
  The orthogonal matrix $O\in\mathcal{O}\left(d^{2}-1\right)$ furnishes any orientation of the semi-major axes of the ellipsoid.
\end{theorem}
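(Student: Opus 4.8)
The plan is to reduce the statement to a single application of the spectral theorem on the $(d^2-1)$-dimensional real space of traceless Hermitian matrices. First I would fix the generalized Bloch representation (see~\ref{sec:parametrisation}): every $\varrho\in\HermTrace$ is written uniquely as $\varrho=\frac{1}{d}\1+\sum_{i=1}^{d^2-1}r_i\sigma_i$ with coordinates $\vec r\in\BR^{d^2-1}$, so that $\HermTrace$ is identified with $\BR^{d^2-1}$ up to the constant factor coming from $\Tr(\sigma_i\sigma_j)=2\delta_{ij}$. Under this identification the measurement map $\vec\varrho\mapsto A\vec\varrho$ acts on the traceless part through the fixed linear map $M$ with entries $M_{ki}=\Tr(E_k\sigma_i)$, while the $\frac{1}{d}\1$ term contributes only a constant offset. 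The target~\eqref{eq:ortho.ellipsoid} is then nothing but the principal-axis form of the ellipsoid in these coordinates.

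Second, I would rewrite the defining inequality of $A^{-1}(\CR_{\estim{\vec y}})$ in Bloch coordinates. In the tomographically complete case $A$ has full column rank $d^2-1$, and the linear inversion estimator $\estim\varrho\in\HermTrace$ (automatically Hermitian with $\Tr\estim\varrho=1$) satisfies $A\estim{\vec\varrho}=\estim{\vec y}$; for $m=d^2-1$ this is Eq.~\eqref{eq:ortho.linear_inversion} with a genuine inverse, and for $m>d^2-1$ one uses the weighted least-squares solution, whose normal equations $M^T B(A\estim{\vec\varrho}-\estim{\vec y})=\vec 0$ kill the linear term. Writing $\varrho-\estim\varrho=\sum_i\delta_i\sigma_i$, the offsets cancel and $A\vec\varrho-\estim{\vec y}=M\vec\delta$, so the condition~\eqref{eq:ortho.confelip} becomes $\vec\delta^T Q\vec\delta\le 1$ with the symmetric $(d^2-1)\times(d^2-1)$ matrix $Q:=M^T B M$.

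Third, I would diagonalize. Since $M$ has full column rank and $B$ is positive definite, $Q$ is symmetric positive definite, and the spectral theorem gives $Q=O D O^T$ with $O\in\mathcal{O}(d^2-1)$ and $D=\mathrm{diag}(R_1^{-2},\dots,R_{d^2-1}^{-2})$, $R_i>0$. Setting $u_i:=R_i^{-1}(O^T\vec\delta)_i$ turns the constraint into $\vec u^T\vec u\le 1$; inverting gives $\delta_j=\sum_i O_{ji}R_i u_i$. Substituting into $\varrho=\estim\varrho+\sum_j\delta_j\sigma_j$ and interchanging the sums yields $\varrho=\estim\varrho+\sum_i R_i u_i\sigma_i'$ with $\sigma_i'=\sum_j O_{ji}\sigma_j$, which is exactly~\eqref{eq:ortho.ellipsoid}. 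That every orientation $O$ is attainable follows since the eigenbasis of $Q$ is unconstrained as $B$ ranges over positive-definite matrices.

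The change of variables is routine; the point that needs care is the overdetermined case $m>d^2-1$, where the pre-image must be read as $\{\varrho:A\vec\varrho\in\CR_{\estim{\vec y}}\}$. There I would verify that completing the square around the weighted least-squares estimator $\estim\varrho$ leaves the right-hand side equal to $1$ rather than $1-(A\estim{\vec\varrho}-\estim{\vec y})^T B(A\estim{\vec\varrho}-\estim{\vec y})$; this holds precisely when $\estim{\vec y}$ lies in the range of $A$, and for general data one either rescales the radii by the residual or restricts, as the statement implicitly does, to $\estim{\vec y}$ consistent with some Hermitian unit-trace matrix. The other small obstacle is establishing positive definiteness of $Q$, hence finiteness and strict positivity of all $R_i$; this is exactly where the hypothesis $m\ge d^2-1$ (tomographic completeness, i.e.\ full column rank of $M$) enters.
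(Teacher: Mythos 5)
Your proposal is correct and takes essentially the same route as the paper's proof: parametrize in the generalized Bloch basis, pull the quadratic form back through the measurement map (your $Q = M^{T}BM$ is the paper's $B' = Q^{T}BQ$), diagonalize via the spectral theorem, then rescale and rotate the basis to $\sigma_i'$. The only substantive difference is that you treat the overdetermined case $m > d^{2}-1$ more carefully: the paper's proof silently assumes $\estim{\vec y} = \textrm{Tr}(\vec E\estim{\varrho})$, i.e.\ data lying in the range of the measurement map, which is exactly the residual-term caveat you make explicit when completing the square around the weighted least-squares estimator.
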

\begin{proof}
  Note that whenever the sum has no limits specified (like in Eq.~\eqref{eq:ortho.ellipsoid}), by default it extends from $1$ to $d^{2}-1$.
  Let us parameterize both $\varrho \in \CR$ and $\estim{\varrho}$ in the Bloch representation with coordinates $w_{i}$ and $\estim{w}_i$, respectively:
  \begin{equation}
    \varrho=\frac{\1}{d}+\sum_{i}w_{i}\sigma_{i},\qquad\estim{\varrho}=\frac{\1}{d}+\sum_{i}\estim{w}_{i}\sigma_{i}.
  \end{equation}
  Since $\vec{y}=\textrm{Tr}\left(\vec{E}\varrho\right)$, and $\estim{\vec{y}}=\textrm{Tr}\left(\vec{E}\estim{\varrho}\right)$ we find
  \begin{equation}
    \vec{y}-\estim{\vec{y}}=Q\left(\vec{w}-\estim{\vec{w}}\right),
  \end{equation}
  where $Q$ is a $m\times(d^{2}-1)$ matrix with elements $Q_{ki}=\textrm{Tr}\left(E_{k}\sigma_{i}\right)$.
  In other words, the Bloch coordinates satisfy the same ellipsoid equation (\ref{eq:ortho.confelip}) as the measurement outcomes with $B$ substituted by the $d^{2}-1$ dimensional square matrix $B'=Q^{T}BQ$.
  Since $B$ is symmetric and positive definite, the same holds for $B'$.
  Hence, $B'$ can be diagonalized to the form $B'=ODO^{T}$, where $O$ is some orthogonal $d^{2}-1$ dimensional matrix and $D=\textrm{diag}(R_{1}^{-2},\ldots,R_{d^{2}-1}^{-2})$ is the diagonal matrix with positive entries.
  If we rescale $\vec{w}-\estim{\vec{w}}=OD^{-1/2}\vec{u}$, then $\vec{u}^{T}\vec{u}\leq1$ and
  \begin{equation}
    \varrho-\estim{\varrho}=\sum_{j}\left(\sum_{i}O_{ji}R_{i}u_{i}\right)\sigma_{j}.
  \end{equation}
  In the last step of the proof we simply change the orientation of the basis to $\sigma_{i}'=\sum_{j}O_{ji}\sigma_{j}$.
\end{proof}

\subsection{Computational Intractability of Truncated Ellipsoids}
\label{sub:ortho.hard}

Guided by the discussion from the previous section we now study the confidence region for the linear inversion QST defined as
\begin{equation}
  \label{eq:ortho.truncated_ellipsoid}
  \CR^\cap := \CR \cap \States = A^{-1}(\CR_{\estim y}) \cap \States,
\end{equation}
where $\CR$ is given by the ellipsoid~\eqref{eq:ortho.ellipsoid} for the tomographically complete case $m=d^2 - 1$.
In this section, we are going to show that in contrast to the full ellipsoid $\CR$, the truncated ellipsoid $\CR^\cap$ cannot be characterized computationally efficiently.
This shows, for example, in the fact that there is no efficient algorithm to answer the following question:
How much does taking into account the physical constraints reduce the size of the confidence region on a particular set of observed data?
Note that we will not be concerned with properties of the region estimator but with a single instance corresponding to a fixed set of data.
By abuse of notation, we are going to refer to these instances as $\CR$ and $\CR^\cap$ as well.

More precisely, we are concerned with the question if for a fixed ellipsoid $\CR$ there is any reduction in size due to constraints in Eq.~\eqref{eq:ortho.truncated_ellipsoid} or if $\CR$ is fully contained in the set of psd states.
For the precise formulation, we use the representation of ellipsoids from Thm.~\ref{thm:ortho.ellipsoids}.
\begin{problem}\label{prob:ortho.ellpos}
  Given the center $\estim\varrho$, radii $R_i$, and a basis $\sigma'_i$ for $\HermTrace$.
  Is there a $\vec u \in \BR^{d^2 - 1}$ with $\vec{u}^{T}\vec{u} \leq 1$ such that
  \begin{equation}
    \estim{\varrho} + \sum_{i}R_{i}u_{i}\sigma_{i}' \in \HermTrace \setminus \States?
  \end{equation}
\end{problem}
The main result of this section is the following statement on the computational complexity of the aforementioned problem.
\begin{theorem}\label{thm:ortho.hard}
  Problem~\ref{prob:ortho.ellpos} is NP-hard.
\end{theorem}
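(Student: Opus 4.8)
The plan is to rewrite Problem~\ref{prob:ortho.ellpos} as an eigenvalue optimization over the ellipsoid and then reduce a classical NP-hard problem to it. Every matrix $M(\vec u) = \estim{\varrho} + \sum_i R_i u_i \sigma'_i$ is Hermitian with unit trace, so $M(\vec u) \in \HermTrace\setminus\States$ is equivalent to $\lambda_{\min}(M(\vec u)) < 0$, and Problem~\ref{prob:ortho.ellpos} asks whether $\min_{\norm{\vec u}\le 1}\lambda_{\min}(M(\vec u)) < 0$. First I would insert the variational formula $\lambda_{\min}(M) = \min_{\norm{\psi}=1}\langle\psi|M|\psi\rangle$ and exchange the two minimizations (both are minima, so this is free). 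For fixed $\psi$ the inner problem minimizes the affine map $\vec u \mapsto \langle\psi|\estim\varrho|\psi\rangle + \sum_i R_i u_i \langle\psi|\sigma'_i|\psi\rangle$ over the unit ball, with value $\langle\psi|\estim\varrho|\psi\rangle - \bigl(\sum_i R_i^2\langle\psi|\sigma'_i|\psi\rangle^2\bigr)^{1/2}$. Hence the ellipsoid protrudes from $\States$ exactly when some pure state satisfies $\langle\psi|\estim\varrho|\psi\rangle^2 < \sum_i R_i^2\langle\psi|\sigma'_i|\psi\rangle^2$, where I arrange $\estim\varrho \succ 0$ so that passing to the squared form is harmless.

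The second step is to read the right-hand side as a quartic form in $\psi$, turning the condition into a quartic-positivity (equivalently, copositivity) question, which is NP-hard. Concretely, I would restrict the constructed instance to real, diagonal data: choosing $\estim\varrho$ and the pertinent generators $\sigma'_i$ diagonal, the substitution $t_j := \abs{\psi_j}^2 \ge 0$ makes both sides quadratic forms in $\vec t$. With $w_j = (\estim{\varrho})_{jj}$ and $v_{i} = ((\sigma'_i)_{jj})_j$, protrusion becomes $\exists\,\vec t\ge 0\colon \vec t^{\,T} P\,\vec t > 0$, where $P = \sum_i R_i^2 v_i v_i^{T} - w w^{T}$, and by homogeneity $\vec t$ may range over the whole nonnegative orthant. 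Setting $A := -P$, protrusion is equivalent to $A$ \emph{not} being copositive; since copositivity is co-NP-complete, its complement is NP-hard, and the reduction yields NP-hardness of Problem~\ref{prob:ortho.ellpos}. (Equivalently, one may phrase the squared condition as $\max_\psi \langle\psi\psi|W|\psi\psi\rangle > 0$ for a two-copy operator $W = \sum_i R_i^2\,\sigma'_i\otimes\sigma'_i - \estim\varrho\otimes\estim\varrho$ and invoke the known hardness of product-state energy optimization.)

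The main obstacle is to honor the structural constraints baked into Problem~\ref{prob:ortho.ellpos} while still encoding a hard instance. Only $d-1$ of the traceless generators are diagonal, yet the statement demands a full orthonormal basis $\sigma'_i$ of $\HermTrace$ with \emph{all} radii strictly positive, so the off-diagonal directions cannot be switched off. I would give these directions radii $\epsilon$ that are exponentially small compared with the spectral margin by which an interior center $\estim\varrho \succeq \mu\1$ keeps the diagonal sub-ellipsoid inside the PSD cone in the copositive case. A quantitative perturbation estimate then shows those extra directions can never manufacture a spurious protrusion, so protrusion of the full ellipsoid reduces to protrusion of its diagonal restriction. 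Controlling this margin from below by $2^{-\mathrm{poly}}$, and exhibiting a genuinely hard family of forms $P$ compatible with the side constraints forced by tracelessness ($v_i^{T}\mathbf 1 = 0$, whence the positive part is supported on $\mathbf 1^{\perp}$) and normalization ($w^{T}\mathbf 1 = 1$, whence $A = ww^{T}-S$ with $S\succeq 0,\ S\mathbf 1 = 0$), is where the real care lies. The residual bookkeeping—orthonormalizing the $\sigma'_i$, fixing the orientation $O$ as in Thm.~\ref{thm:ortho.ellipsoids}, and verifying polynomial bit-length of all data—is routine.
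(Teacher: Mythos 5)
Your first step---the variational reformulation, i.e.\ that the ellipsoid protrudes from $\States$ iff some pure state satisfies $\bra{\Psi}\estim\varrho\ket{\Psi} < \bigl(\sum_i R_i^2 \bra{\Psi}\sigma'_i\ket{\Psi}^2\bigr)^{1/2}$---is correct and is exactly how the paper's proof begins. The second step, however, fails, and for two reasons. The first is logical direction: showing that protrusion of your diagonal instances is \emph{equivalent} to non-copositivity of the structured matrix $A = ww^T - S$ only embeds Problem~\ref{prob:ortho.ellpos} \emph{into} the complement of copositivity. Hardness of \emph{general} copositivity implies nothing about this special subfamily; what is needed is a reduction \emph{from} a hard problem into the constrained class, and you explicitly defer this (``where the real care lies''). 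The same objection applies to your parenthetical appeal to product-state energy optimization.

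The second reason is decisive: the deferred step is impossible, because the constrained class is polynomial-time decidable. For $w$ entrywise nonnegative and $S \succeq 0$, the matrix $A = ww^T - S$ is copositive if and only if $w_j \ge \sqrt{S_{jj}}$ for all $j$. Necessity follows from $t = e_j$; for sufficiency take any Gram decomposition $S_{jk} = b_j \cdot b_k$, so that for $t \ge 0$
\[
  \sqrt{t^T S t} = \norm{\textstyle\sum_j t_j b_j} \le \sum_j t_j \norm{b_j} = \sum_j t_j \sqrt{S_{jj}} \le \sum_j t_j w_j = w \cdot t,
\]
and if some $w_j<0$ the center itself is not psd, so the instance trivially protrudes. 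In other words, for commuting (diagonal) data the worst-case $\psi$ is a standard basis vector and the whole question collapses to $d$ explicit inequalities. This is no accident: passing to $t_j = \abs{\psi_j}^2$ erases precisely the sign degrees of freedom of $\psi$, and the trace normalization forces the positive part of $A$ to be rank one, whereas hard copositivity instances (Motzkin--Straus type) require a high-rank positive part. The paper's construction lives exactly where your reduction cannot: the center $\estim\varrho = \frac{q}{d}\1 + \frac{1-q}{a^2}\ket{\vec a}\bra{\vec a}$ is \emph{not} diagonal, the larger radius $R_1$ is placed on the off-diagonal $\sigma_x$-type generators, and the containment condition becomes a genuine quartic inequality $f(\vec\psi) - C_2 (\vec a \cdot \vec\psi)^4 \le C_1$ over the sphere whose near-maximizers are $\pm1$-vectors; Lemma~\ref{lem:ellpos.gap_or_no_gap} then provides a $1/\mathrm{poly}$ gap separating instances $\vec a$ that admit a balanced-sum partition (a $\pm1$ vector with $\vec a \cdot \vec\psi = 0$) from those that do not. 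Finally, note that your perturbation idea (exponentially small radii on the non-diagonal directions) is self-defeating: if it worked as claimed, your instances would reduce to the diagonal subproblem, which by the above is easy---so they could never constitute a hard family.
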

As a consequence of Thm.~\ref{thm:ortho.hard}, the problem of ``characterizing'' the truncated confidence ellipsoids $\CR_{\estim{\vec \varrho}}^\cap := A^{-1}(\CR_{\estim{\vec y}}) \cap \States$ defined in Sec.~\ref{sub:ortho.linear_inversion} computationally is hard in general.
By ``characterizing'' we mean computing any property of $\CR_{\estim{\vec \varrho}}^\cap$ that is sensitive to whether the truncation influences the original ellipsoid or not, e.g.\ computing the volume of the truncated ellipsoid or its distance to boundary of the quantum state space with high enough precision.
Note, however, that there are also properties such as the diameter that might be unaffected by the truncation in certain special cases and, hence, their computational complexity cannot be classified using Thm.~\ref{thm:ortho.hard}.
Therefore, the more general problem of computing truncated confidence regions (without the Gaussian approximation) is hard as well since it subsumes Prob.~\ref{prob:ortho.ellpos}.

Another consequence of the theorem concerns confidence regions for the constrained problem, which output ``good regions'' for the unconstrained problem when the constraints are not active:
More precisely, it is extremely natural to use likelihood ratio-based ellipsoidal confidence regions for unconstrained Gaussian data although they cannot be optimal due to Stein's phenomenon.
So it is natural to require any quantum region estimator to behave this way in the particular case that the likelihood function is concentrated well away from the boundary of state space.
What Thm.~\ref{thm:ortho.hard} shows is that any region estimator subject to this criterion must necessarily solve NP-hard problems.\\

Finally, the remainder of this section is dedicated to give some insight to the proof of the main theorem and to discuss a tractable solvable special case.
The proof of Thm.~\ref{thm:ortho.hard} is inspired by a similar result due to Ben-Tal and Nemirovski~\cite{Tal_1998_Robust} in robust optimization theory, who showed that the following problem is NP-complete.
\begin{problem}\label{prob:ortho.bental}
  Given $k$ $d\times d$ symmetric matrices $A^{1},\ldots,A^{k}$, check whether there is a $\vec u \in \BR^k$ with $\vec{u}^{T}\vec{u} \leq 1$ such that $\sum_{i=1}^{k}u_{i}A^{i} > \1_{d}$.
\end{problem}
Although the two problems are strongly related, the intractability result~\cite{Tal_1998_Robust} cannot be applied directly to our tomography related problem due to the following crucial difference:
The proof of NP-completeness of Prob.~\ref{prob:ortho.bental} deals with the case $k=d(d-1)/2+1$ and a set of real symmetric matrices ${(A_k)}_k$, which are not necessarily pairwise orthogonal to each other~\cite[Sec.~3.4.1]{Tal_1998_Robust}.
However, in Prob.~\ref{prob:ortho.ellpos}, the $\sigma'_i$ ($i=1,\ldots,d^2 - 1)$ form an orthogonal basis of the space of complex Hermitian, traceless matrices.
Hence, we need to adapt the original proof strategy to deal with the restrictions imposed by our tomography related problem.

Let us start the outline of the proof of Thm.~\ref{thm:ortho.hard} with a simplified example, when the ellipsoid in question is a ball, i.e.\ when $R_{i}=R$ for all $i=1,\ldots,d^{2}-1$.
With no loss of generality, we can assume $\sigma'_i = \sigma_i$.
The following Lemma, which is proven in~\ref{sec:spheres}, provides an easily checkable, necessary, and sufficient condition to decide Prob~\ref{prob:ortho.ellpos} for this special case.
\begin{lemma}\label{lem:ortho.spheres}
  Let $\CR$ denote a ball parameterized according to Thm.~\ref{thm:ortho.ellipsoids} with with radii $R_i=R$ and midpoint $\estim{\varrho}$.
  $\CR$ is fully contained in the set of psd density matrices if and only if
  \begin{equation}
    R\leq\sqrt{\frac{d}{2\left(d-1\right)}} \, \mineig \estim{\varrho},
  \end{equation}
  where $\mineig\estim{\varrho}$ denotes the smallest eigenvalue of $\estim{\varrho}$.
\end{lemma}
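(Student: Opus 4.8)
The plan is to recast the geometric containment condition as a single distance computation in the flat Hilbert--Schmidt geometry, and then to reduce that distance to a two-parameter optimization that can be solved in closed form. First I would rewrite the ball intrinsically. Because all radii coincide, $R_i = R$, the displacement $\sum_i R u_i \sigma_i$ ranges, as $\vec u^T \vec u \le 1$, over exactly the traceless Hermitian matrices $\Delta$ with $\norm{\Delta} \le \sqrt2\,R$: the normalization $\Tr(\sigma_i\sigma_j) = 2\delta_{ij}$ gives $\norm{\Delta}^2 = \Tr\Delta^2 = 2\,\vec u^T\vec u$. Hence
\begin{equation}
  \CR = \left\{\varrho = \varrho^\dagger : \Tr\varrho = 1,\ \norm{\varrho - \estim\varrho} \le \sqrt2\,R\right\},
\end{equation}
a Euclidean ball of HS-radius $\sqrt2\,R$ inside the affine hyperplane $\HermTrace$. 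Since $\States$ is closed and convex and, in the nontrivial case $\mineig\estim\varrho > 0$, contains $\estim\varrho$ in its relative interior, the inclusion $\CR \subseteq \States$ holds if and only if $\sqrt2\,R \le \mathrm{dist}(\estim\varrho, \partial\States)$, the HS distance from the center to the boundary of state space. The whole lemma thereby reduces to evaluating this one distance.

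Next I would compute that distance. The boundary $\partial\States$ consists of the trace-one psd matrices with a vanishing eigenvalue, so the nearest boundary point is the nearest trace-one Hermitian matrix that fails positivity. I would therefore minimize $\norm\Delta$ over traceless Hermitian $\Delta$ subject to $\mineig(\estim\varrho + \Delta) \le 0$. Fixing the offending unit vector $\ket{\psi}$ first turns the spectral constraint into the single linear inequality $\langle\psi|\Delta|\psi\rangle \le -\mu$ with $\mu := \langle\psi|\estim\varrho|\psi\rangle$, on top of $\Tr\Delta = 0$. These are two linear conditions in the HS inner product with ``gradients'' $\1$ and $\ket{\psi}\bra{\psi}$, so the minimal-norm solution is forced into the ansatz $\Delta = a\,\1 + b\,\ket{\psi}\bra{\psi}$. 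Solving $\Tr\Delta = ad + b = 0$ together with $\langle\psi|\Delta|\psi\rangle = a + b = -\mu$ yields $a = \mu/(d-1)$ and $b = -d\mu/(d-1)$, and a short computation then gives $\norm{\Delta}^2 = \Tr\Delta^2 = \tfrac{d}{d-1}\,\mu^2$.

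Finally I would optimize over the direction $\ket{\psi}$. Since $\mu = \langle\psi|\estim\varrho|\psi\rangle \ge \mineig\estim\varrho$, the value $\sqrt{\tfrac{d}{d-1}}\,\mu$ is smallest when $\ket{\psi}$ is the eigenvector of the least eigenvalue, giving $\mathrm{dist}(\estim\varrho, \partial\States) = \sqrt{\tfrac{d}{d-1}}\,\mineig\estim\varrho$. Inserting this into $\sqrt2\,R \le \mathrm{dist}$ reproduces exactly the claimed bound $R \le \sqrt{\tfrac{d}{2(d-1)}}\,\mineig\estim\varrho$.

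I expect the main obstacle to be the rigorous reduction from the nonconvex spectral constraint $\mineig(\estim\varrho + \Delta) \le 0$ to the per-direction linear program, together with the exchange of the two minimizations. I would justify it by observing that for each fixed $\ket{\psi}$ the displayed value is a genuine lower bound on the attainable norm and is achieved by the explicit $\Delta$; that the infimum over $\ket{\psi}$ is attained at the minimal eigenvector; and that the optimal $\Delta = a\,\1 + b\,\ket{\psi}\bra{\psi}$ commutes with $\estim\varrho$, driving precisely the smallest eigenvalue to zero while raising every other eigenvalue by $a > 0$. This last check confirms that the candidate matrix genuinely lands on $\partial\States$ (psd of rank $d-1$) rather than merely somewhere outside $\States$, closing the argument.
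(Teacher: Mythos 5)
Your proof is correct, and it takes a genuinely different route from the paper's. The paper's proof is a short specialization of the machinery built for the hardness reduction: containment of the ball in $\States$ is written as $\bra{\Psi}\varrho\ket{\Psi}\geq 0$ for all $\varrho$ in the ball and all $\ket{\Psi}$, the minimum over the ball is taken first via Cauchy--Schwarz (this is Eq.~\eqref{eq:ellpos.worst_case} with $R_1=R_2=R$), and the resulting factor $\sqrt{\sum_i v_i^2(\vec\psi)}$ is a constant, $\sqrt{2(d-1)/d}$, because all pure-state Bloch vectors have the same norm; minimizing the remaining Rayleigh quotient over $\ket{\Psi}$ then yields $\mineig\estim{\varrho}$. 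You instead argue in the Hilbert--Schmidt geometry: the ball has HS radius $\sqrt{2}R$, containment in the convex body $\States$ is equivalent to $\sqrt{2}R \leq \mathrm{dist}(\estim{\varrho},\partial\States)$, and you compute that distance by a minimal-norm projection onto the two active linear constraints, obtaining the explicit perturbation $\Delta = a\1 + b\ket{\psi}\bra{\psi}$ and the value $\sqrt{d/(d-1)}\,\mineig\estim{\varrho}$. The two arguments are dual formulations of the same per-direction inequality $\bra{\psi}\estim{\varrho}\ket{\psi} \geq R\sqrt{2(d-1)/d}$, but yours is self-contained, exhibits the nearest boundary state explicitly (a psd matrix with a zero eigenvalue, confirming the optimum is attained on $\partial\States$ rather than merely outside $\States$), and recovers the inradius $\sqrt{1/(2d(d-1))}$ of state space quoted in the main text when $\estim{\varrho}=\1/d$. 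The paper's computation, by contrast, is uniform in all cases, including $\mineig\estim{\varrho}\leq 0$, where your distance-to-boundary characterization needs the (trivial, and essentially flagged by your ``nontrivial case'' remark) observation that a positive-radius ball centered outside the relative interior of $\States$ is never contained in $\States$.
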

The statement is a straightforward but interesting extension of the known result that the largest ball centered at the completely mixed state and fully contained in the set of psd density matrices has radius $R_{\mathrm{max}}=\sqrt{\frac{1}{2d\left(d-1\right)}}$.
Intuitively, when the center of the ball is moved away from the completely mixed state, the allowed radii become smaller.
This correction happens to be quantified by the smallest eigenvalue of the new center.
In conclusions, spherical ellipsoids do not constitute hard instances of Problem~\ref{prob:ortho.ellpos} provided that the minimal eigenvalue of $\estim\varrho$ can be computed efficiently with high enough accuracy.

However, it turns out that a slightly more complicated setting is already enough to proof the computational intractability.
The ellipsoids under consideration still have their semi-major axes aligned with the generalized Bloch basis, that is we assume $\sigma'_i = \sigma_i$.
The only change compared to the previous setting is the choice of radii.
We consider the same radius $R_{1}$ for all directions generalizing the $x$-direction to higher dimensions and the distinct radius $R_{2}$ for the remaining directions:
\begin{equation}
  \label{eq:ortho.subclass}
  \begin{split}
    R_{i}=R_{1} &\quad i=1,\ldots,i_{d}\\
    R_{i}=R_{2} &\quad i=i_{d}+1,\ldots,d^{2}-1.
  \end{split}
\end{equation}
Recall $i_d$ defined in Eq.~\eqref{eq:ortho.x_yz_index}.

Now, in order to prove the computational intractability of Problem~\ref{prob:ortho.ellpos}, we use a reduction from the \emph{balanced sum} problem, which is known to be NP-complete.
\begin{problem}\label{prob:ellpos.balanced_sum}
  Given a vector $\vec a \in \mathbb{N}^d$, decide whether there exists a vector $\vec \psi$ with
  \begin{equation}
    \forall{k}\:\psi_{k}\in\left\{ -1,1\right\} \;\textrm{ and }\quad \vec a \cdot \vec\psi=0.
    \label{eq:ellpos.partition_vector}
  \end{equation}
\end{problem}
In case there is such a vector $\vec \psi$ one says that the instance $\vec a$ allows for a balanced sum partition because the sum of components of $\vec a$ labeled by $\psi_i = 1$ is equal to the sum of components $a_i$ labeled by $\psi_i = -1$.
The main technical difficulty is now to identify the values of $R_1$ and $R_2$ as well as $\estim{\rho}$ depending on an instance of the balanced sum problem $\vec a$ such that the corresponding ellipsoid $\CR$ given by Thm.~\ref{thm:ortho.ellipsoids} contains an element with negative eigenvalues if and only if $\vec a$ has a balanced sum partition.
For the details, please see~\ref{sec:ellpos}.

\section{Bayesian Credibility regions}
\label{sec:bayesian}

\subsection{MVCR for Gaussians}
\label{sub:bayesian.gaussian}

We now turn to the question of minimal volume credible regions (MVCR) in the Bayesian framework:
In the unconstrained case, Gaussian posteriors are one of the few examples of multivariate distributions, where the MVCR are simple geometric objects, namely ellipsoids.
In practice, Gaussian posteriors arise in the following scenario:
Consider a random vector $\vec X\sim\mathcal{N}(\vec\mu, \Sigma)$, where the covariance matrix $\Sigma$ is known and we wish to estimate its mean $\vec\mu$.
If we furthermore assume a Gaussian prior for the mean, the posterior will be Gaussian as well due to the fact that the Gaussian distribution is its own conjugated prior.

This is one of the few cases, in which the Bayesian update as well as the computation of an optimal credible region can be carried out analytically.
First, computing the parameters for the Gaussian posterior distribution can be done by means of \emph{linear Kalman filter update equations}, see e.g.~\cite[Sec.\ 2.4]{Audenaert_2009_Quantum}.
Second, for the credible region, assume that after the Bayesian update, the posterior distribution of $\vec\mu$ is parameterized by its mean $\vec\theta \in \BR^N$ and covariance matrix $\Sigma \in \BR^{N \times N}$.
Therefore, the posterior of $\vec\mu$ has probability density
\begin{equation}
  \label{eq:bayesian.gaussian_density}
  \pi_{\vec\theta,\Sigma}(\vec x) = {(2\pi)}^{-\Nhalf} \abs{\Sigma}^{-\frac{1}{2}} \exp\left( -\frac{1}{2} \norm{\vec x - \vec\theta}_\Sigma^2 \right).
\end{equation}
where
\begin{equation}
  \norm{\vec x - \vec\theta}_\Sigma := \sqrt{{(\vec x - \vec\theta)}^T \Sigma^{-1} (\vec x - \vec\theta)}
\end{equation}
is the Mahalanobis distance and  $\abs{\Sigma}$ denotes the determinant of $\Sigma$.
As elaborated in Sec.~\ref{sub:intro.error_regions}, the MVCRs are exactly the highest posterior density sets as defined in Eq.~\eqref{eq:intro.highest_posterior}.
Therefore, the MVCR with credibility $\alpha$ for the density Gaussian~\eqref{eq:bayesian.gaussian_density} is given by
\begin{equation}
  \label{eq:bayesian.gaussian_cr}
  \CR = \{ \vec x \in \BR^N\colon \norm{\vec x - \vec\theta}_\Sigma \le r_{\alpha} \} =: \Eps(r_{\alpha}).
\end{equation}
This is an ellipsoid centered at $\vec\theta$ with radius $r_{\alpha}$ determined by the saturated credibility condition~\eqref{eq:intro.credible_region}:
\begin{equation}
  \label{eq:bayesian.radius}
  \begin{split}
    \alpha
    &= {(2\pi)}^{-\Nhalf} \abs{\Sigma}^{-\frac{1}{2}} \int_{\Eps(r_{\alpha})} \exp\left( -\frac{1}{2} \norm{\vec x - \vec\theta}_\Sigma^2 \right) \dd^N x \\
    &= \frac{\gamma\left( \Nhalf, \frac{r^2_{\alpha}}{2} \right)}{\Gamma\left( \Nhalf \right)}
    \equiv P\left( \Nhalf, \tfrac{r^2_{\alpha}}{2} \right).
    \end{split}
\end{equation}
By $\gamma(\cdot,\cdot)$ we denote the incomplete $\Gamma$-function and $P(\cdot,\cdot)$ is its normalized version.
The above condition fixes $r_{\alpha}$ uniquely since $x \mapsto P(\Nhalf, x)$ is strictly monotonic for any $N > 0$.
Hence, determining the MVCR for a multivariate Gaussian posterior with known mean and covariances reduces to computing the radius $r_\alpha$, which is formalized in the following problem.
\begin{problem}\label{prob:bayesian.cr}
  For given mean $\vec\theta \in \BR^N$, covariance matrix $\Sigma \in \BR^{N \times N}$ with $\Sigma \geq 0$, credibility $\alpha \in [0,1]$, and accuracy $\delta$ with $\delta^{-1} \in \mathbb{N}$, determine the radius of the MVCR $r_{\alpha}$ defined in Eq.~\eqref{eq:bayesian.radius} with given accuracy.
\end{problem}

An efficient algorithm for solving Prob.~\ref{prob:bayesian.cr} is outlined in the following.
To ease notation, we set $x = r^2_{\alpha}/2$.
\begin{enumerate}
  \item W.l.o.g.\ we can assume that $\alpha \le 0.9$ (or some other arbitrary constant).
  Otherwise, the problem can be restated in terms of $Q(\Nhalf, x) = 1 - P(\Nhalf, x)$, which allows for a similar analysis.
  The condition $\alpha \le 0.9$ restricts the search space for $x$ to some finite interval $[0, t_\mathrm{max}]$.
  Note that the upper bound $t_\mathrm{max}$ grows at worst polynomially in $\Nhalf$.
  \item The above restriction, the finite precision, and the fact that $x \mapsto P(\Nhalf, x)$ is strictly monotonic allow for interpreting the problem of finding $x$ given $\alpha$ as a search in an ordered, finite list of size $M \sim \tfrac{t_\mathrm{max}}{\delta}$.
  \item Each entry of this list can be evaluated with exponential precision in polynomial time using a power series expansion of $P(\Nhalf, x)$ (for more details see Lemma~\ref{lem:bayesian.normalization_constant} in~\ref{sec:proof_bayesian}).
  \item Since finding $x$ in this list only requires $\log M$ evaluations using binary search, the whole problem can be solved in polynomial time.
\end{enumerate}

\subsection{Bayesian QST}
\label{sub:bayesian.tomography}

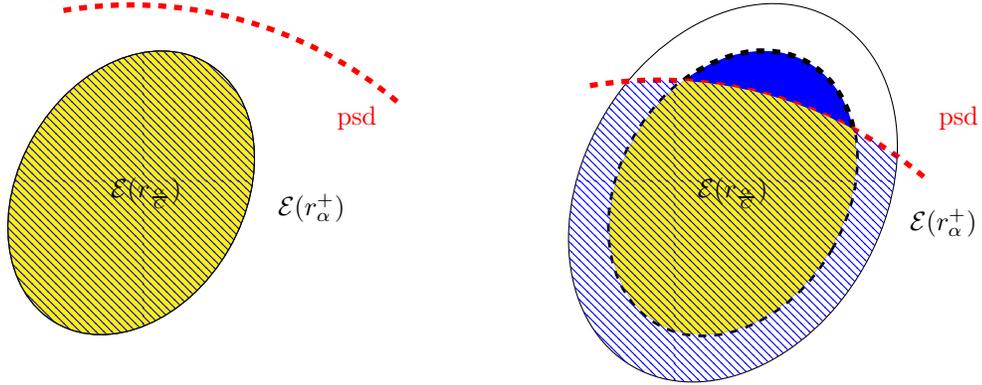
\begin{figure*}[t]
  \centering
  \begin{tikzpicture}[
    elip/.style={fill=yellow, line width=0},
    truncated/.style={pattern=north west lines, pattern color=blue},
    truncelip/.style={dashed, line width=2pt, fill=blue},
    psd/.style={dashed, color=red, line width=2pt}
  ]
    \clip (-3,-3) rectangle (13,3);

    \def\origelip{(0,0) ellipse [x radius=1.5, y radius=2, rotate=-30]}
    \def\truncatedelip{(0,0) ellipse [x radius=2, y radius=2.66666, rotate=-30]}

    \def\psdradius{5.5}
    \def\psdcircle{(psdcenter) [draw=none] circle (\psdradius)}

    \begin{scope}[shift={(0,0)}]
      \draw[elip]\origelip;
      \draw[truncated]\origelip;
      \node at (0.2,0) {$\Eps(r_{\frac{\alpha}{C}})$};
      \node at (2.4,-.2) {$\Eps(r^+_{\alpha})$};

      \node at (0,-3) (psdcenter) {};
      \node at (3,1) {\red{psd}};
      \draw[psd] ([shift=(50:\psdradius)]psdcenter) arc (50:100:\psdradius);
    \end{scope}

    \begin{scope}[shift={(8,0)}]
      \draw[truncelip]\origelip;
      \draw\truncatedelip;

      \node at (-1,-4) (psdcenter) {};
      \node at (3,1) {\red{psd}};

      \draw[psd] ([shift=(50:\psdradius)]psdcenter) arc (50:100:\psdradius);
      \begin{scope}
        \clip \psdcircle;
        \fill[elip] \origelip;
        \fill[truncated]\truncatedelip;
      \end{scope}

      \node at (0,0) {$\Eps(r_{\frac{\alpha}{C}})$};
      \node at (2.8,-.4) {$\Eps(r^+_{\alpha})$};
    \end{scope}
  \end{tikzpicture}
  \caption{\label{fig:bayesian.ellipsoids}
    The two possible cases for the credible regions.
    \emph{Left:} The original ellipsoid $\Eps(r_\frac{\alpha}{C})$ with credibility $\frac{\alpha}{C}$ (yellow) lies completely inside the psd states and is, therefore, equal to the ellipsoid taking into account positivity $\Eps(r^+_{\alpha})$ with credibility $\alpha$ (blue hatched).
    \emph{Right:}  Parts of the original ellipsoid $\Eps(r_\frac{\alpha}{C})$ lie outside the psd states (blue).
    Hence, the ellipsoid that takes into account positivity $\Eps(r^+_{\alpha})$ has to have a larger radius in order to achieve the sought for credibility.
  }
\end{figure*}

Let us now turn to the application of Bayesian methods to QST, for a more thorough discussion see e.g.~\cite{Granade_2015_Practical}.
In order to incorporate the prior knowledge of positive semidefiniteness, we chose a prior that is concentrated on $\States$ and vanishes on its complement.
As before% in Sec.~\ref{sub:bayesian.gaussian}
, we choose a (truncated) Gaussian prior and, therefore, Gaussian posteriors.
Hence, the density $\pi_{\theta,\Sigma}^+(\varrho)$ of a Gaussian posterior on $\States$ with respect to the flat Hilbert-Schmidt measure $\dd\varrho$ on $\HermTrace$ can be written as
\begin{equation}
  \label{eq:bayesian.density_plus}
  \pi^+_{\theta,\Sigma}(\varrho) = C_{\theta,\Sigma}\ \chi(\varrho)\ \pi_{\theta,\Sigma}(\varrho).
\end{equation}
Here, $\pi_{\theta,\Sigma}$ is the multivariate Gaussian from Eq.~\eqref{eq:bayesian.gaussian_density} with $\theta \in \HermTrace$.
The other factors in Eq.~\eqref{eq:bayesian.density_plus} ensure that $\pi^+_{\theta,\Sigma}$ is a proper probability distribution supported on $\States$:
$\chi(\varrho)$ is the indicator function of $\States$ and $C_{\theta,\Sigma}$ is the normalization constant defined by
\begin{equation}
  C_{\theta,\Sigma}^{-1} = \int_{\States} \pi_{\theta,\Sigma}(\varrho) \,\dd\varrho.
\end{equation}
From now on we will drop the subscripts indicating the mean $\theta$ and the covariance matrix $\Sigma$ if no confusion arises. It is then important to remember that the constant in question is denoted by $C$, while the credibility region is $\CR$.

The problem we try to solve is the following:
Given the mean $\theta$, covariance matrix $\Sigma$, and credibility $\alpha$, can we find the MVCR for the Gaussian distribution supported on $\States$?
Since the posterior density~\eqref{eq:bayesian.density_plus} is supported on the psd states and MVCRs are highest-density sets due to~\eqref{eq:intro.highest_posterior}, the MVCR is of the form
\begin{equation}
  \Eps(r^+_{\alpha}) \cap \States = \{ \varrho \in \States \colon \norm{\varrho - \theta}_\Sigma \le r^+_{\alpha} \}.
  \label{eq:bayesian.mvcr}
\end{equation}
Similar to Eq.~\eqref{eq:bayesian.radius}, the radius is determined by the credibility condition
\begin{equation}
  \label{eq:bayesian.radius_plus}
  \alpha = C \int_{\Eps(r^+_{\alpha}) \cap \States} \pi_{\theta,\Sigma}(\varrho) \dd \varrho.
\end{equation}
However, this case involves the normalization constant $C$ from~\eqref{eq:bayesian.density_plus} and the integral is restricted to the psd states.
Also, there is no closed-form analogue to Eq.~\eqref{eq:bayesian.radius} due to the psd constraint.

\subsection{Computational Intractability}
\label{sub:bayesian.hard}

Our main result from this section concerns MVCR for Gaussian posteriors that are fully supported on the psd states.
We will show that the following problem is computational hard.
\begin{problem}\label{prob:bayesian.trucated_cr}
  For given mean $\theta \in \HermTrace$, covariance matrix $\Sigma$, credibility $\alpha \in [0,1]$, and accuracy $\delta$ with $\delta^{-1} \in \mathbb{N}$, determine the radius of the MVCR $r^+_{\alpha}$ defined in Eq.~\eqref{eq:bayesian.radius_plus} with given accuracy.
\end{problem}
In other words, there is no efficient algorithm that outputs smallest volume credibility regions for every Gaussian distribution on $\HermTrace$ restricted to the positive semidefinite states and every credibility $\alpha$.
Consequently, there cannot be an efficient algorithm to solve the problem of MVCR for QST, since the latter more general problem contains the instances of Prob.~\ref{prob:bayesian.trucated_cr}.
To prove Prob.~\ref{prob:bayesian.trucated_cr}, we use a reduction from Problem~\ref{prob:ortho.ellpos}, which has already been shown to be NP-hard.
This reduction runs along the following lines:
\begin{enumerate}
  \item Assume that Prob.~\ref{prob:bayesian.trucated_cr} can be solved efficiently.

  \item As we will prove later, every ellipsoid $\Eps^*$ in $\HermTrace$ can be encoded as a minimum volume credible ellipsoid for some Gaussian distribution $\pi$ with a suitable choice of $\theta$, $\Sigma$, and $R$:
  \begin{equation}
    \Eps^* = \Eps_{\theta,\Sigma}(R).
  \end{equation}
  Note that only $\theta$ is uniquely defined.
  $\Sigma$ is defined only up to a multiplicative, positive constant, since every rescaling of $\Sigma$ can be compensated by an appropriate rescaling of $R$.

  \item  Using the assumed efficient algorithm for Prob.~\ref{prob:bayesian.trucated_cr}, we can compute the normalization constant $C$ of the truncated distribution~\eqref{eq:bayesian.density_plus} for given $\theta$ and $\Sigma$ with sufficient precision in polynomial time.

  \item Based on this, we can compute a credibility $\alpha$ such that $R = r_\frac{\alpha}{C}$ and, therefore,
  \begin{equation}
    \Eps^* = \Eps_{\theta,\Sigma}(r_\frac{\alpha}{C}).
  \end{equation}

  \item The crucial observation is that this ellipsoid is contained in the psd states if and only if the corresponding MVCR for the truncated distribution $\pi^+$ fulfills
  \begin{equation}
    \label{eq:bayesian.criterion}
    r^+_{\alpha} = r_\frac{\alpha}{C}.
  \end{equation}
  See Fig.~\ref{fig:bayesian.ellipsoids} for an illustration.
  Since we can compute $r^+_{\alpha}$ efficiently by assumption, checking Eq.~\eqref{eq:bayesian.criterion} allows us to decide Prob.~\ref{prob:ortho.ellpos}.
\end{enumerate}

In conclusion, the main result from this section is the following lower bound on the computational complexity of Problem~\ref{prob:bayesian.cr}.
\begin{theorem}\label{thm:bayesian.hardness}
  If Problem~\ref{prob:bayesian.trucated_cr} has a polynomial time algorithm, then we can also decide Problem~\ref{prob:ortho.ellpos} in polynomial time.
  Therefore, there is no efficient algorithm for Problem~\ref{prob:bayesian.trucated_cr} unless $\mathrm{P} = \mathrm{NP}$.
\end{theorem}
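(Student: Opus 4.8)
The plan is to prove Theorem~\ref{thm:bayesian.hardness} as a polynomial-time Turing reduction from Problem~\ref{prob:ortho.ellpos} (\ELLPOS), which is NP-hard by Thm.~\ref{thm:ortho.hard}, to the truncated-MVCR radius problem, Problem~\ref{prob:bayesian.trucated_cr}, fleshing out the six-step outline above. Given an \ELLPOS instance $(\estim\varrho, \{R_i\}, \{\sigma'_i\})$ put $N := d^2-1$. First I dispose of the degenerate cases: if $\mineig \estim\varrho \le 0$, the centre already lies in $\HermTrace\setminus\States$ or on its boundary, so a nondegenerate ellipsoid must protrude and I output \textsc{yes}. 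Otherwise $\estim\varrho$ is interior to $\States$. I then encode the ellipsoid~\eqref{eq:ortho.ellipsoid} as a Mahalanobis ball by setting $\theta := \estim\varrho$ and letting $\Sigma$ be the positive definite matrix that is diagonal in the $\sigma'_i$ basis with entries $R_i^2$; then $\Eps_{\theta,\Sigma}(R) = \{\theta + \sum_i R\,R_i u_i \sigma'_i : \vec u^{T}\vec u \le 1\}$, so the given ellipsoid equals $\Eps_{\theta,\Sigma}(R)$ for the choice $R=1$ (any other positive $R$ works after rescaling $\Sigma$). The engine of the reduction is the monotonicity relation read off from~\eqref{eq:bayesian.radius_plus}: since $\int_{\Eps_{\theta,\Sigma}(r)\cap\States}\pi \le \int_{\Eps_{\theta,\Sigma}(r)}\pi = P(\Nhalf, r^2/2)$, the constrained radius obeys $r^+_\alpha \ge r_{\alpha/C}$, with equality \emph{precisely} when $\Eps_{\theta,\Sigma}(r^+_\alpha)\subseteq\States$, i.e.\ when no Gaussian mass of the ball escapes the psd states.

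The one step the outline leaves implicit is the computation of the normalisation constant $C$ in step~(iii); here I would exploit that $\theta$ is now interior. By Lemma~\ref{lem:ortho.spheres}, a Hilbert--Schmidt ball of radius $\sqrt{d/(2(d-1))}\,\mineig\estim\varrho$ about $\theta$ lies inside $\States$, and dividing by $\sqrt{\lambda_{\max}(\Sigma)}$ converts this into a polynomially bounded Mahalanobis radius $r_0 > 0$ with $\Eps_{\theta,\Sigma}(r_0)\subseteq\States$. For this $r_0$ the truncation is inactive, so $\alpha_0 := C\, P(\Nhalf, r_0^2/2)$. Invoking the assumed polynomial-time oracle for Problem~\ref{prob:bayesian.trucated_cr} together with the strict monotonicity of $\alpha\mapsto r^+_\alpha$, I binary-search over $\alpha$ to locate the value $\alpha_0$ with $r^+_{\alpha_0} = r_0$ and recover $C = \alpha_0 / P(\Nhalf, r_0^2/2)$; the denominator is computable to exponential precision in polynomial time by the power-series method underlying the algorithm of Sec.~\ref{sub:bayesian.gaussian}.

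With $C$ in hand I set $\alpha^* := C\, P(\Nhalf, R^2/2)$, so that $r_{\alpha^*/C} = R$, query the oracle once more for $r^+_{\alpha^*}$, and decide \ELLPOS by the criterion~\eqref{eq:bayesian.criterion}: the ellipsoid $\Eps_{\theta,\Sigma}(R)$ is contained in $\States$ iff $r^+_{\alpha^*} = R$. I therefore output \textsc{no} when $r^+_{\alpha^*} = R$ and \textsc{yes} (some point leaves $\States$) when $r^+_{\alpha^*} > R$. Since every stage uses only polynomially many oracle calls and polynomial-time arithmetic, a polynomial-time algorithm for Problem~\ref{prob:bayesian.trucated_cr} would yield one for the NP-hard Problem~\ref{prob:ortho.ellpos}, establishing the conditional lower bound and hence $\mathrm{P}=\mathrm{NP}$ were the former efficiently solvable.

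The delicate point—and the one I expect to require the most care—is finite precision. The oracle returns $r^+_\alpha$ only to accuracy $\delta$, and the two binary searches together with the evaluation of $P(\Nhalf,\cdot)$ contribute further error, so the clean equalities above degrade into approximate comparisons. To make the decision robust I must exhibit an inverse-polynomial separation: when $\Eps_{\theta,\Sigma}(R)\not\subseteq\States$, the protruding tail forces a \emph{credibility deficit} $C\big(\int_{\Eps_{\theta,\Sigma}(R)}\pi - \int_{\Eps_{\theta,\Sigma}(R)\cap\States}\pi\big)$ that I need to bound below by $1/\mathrm{poly}$ for integer/rational input, which then lower-bounds $r^+_{\alpha^*}-R$ via the controlled modulus of continuity of $P(\Nhalf,\cdot)$ on the relevant interval. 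Proving this lower bound on the excluded mass, and certifying that an accuracy $\delta$ that is still only $1/\mathrm{poly}$ survives propagation through the binary searches and the computation of $C$, is the technical heart of the argument and is where I would concentrate the bulk of the effort.
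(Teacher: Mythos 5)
Your reduction skeleton coincides with the paper's: encode the ellipsoid as a Gaussian MVCR, recover the normalization constant $C$ by querying the oracle at a radius where containment in $\States$ is guaranteed via Lemma~\ref{lem:ortho.spheres}, then decide containment through the criterion $r^+_{\alpha}=r_{\alpha/C}$; your binary search for $\alpha_0$ is a harmless variant of the paper's direct inversion $C=\tilde\alpha/P\bigl(\Nhalf,(r^+_{\tilde\alpha})^2/2\bigr)$ (Lemma~\ref{lem:bayesian.normalization_constant}), and the $R=1$ versus $r_{\alpha/C}=\sqrt 2$ discrepancy is just the $\tr\sigma_i\sigma_j=2\delta_{ij}$ normalization. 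However, the proposal stops exactly where the paper's actual proof begins: the quantitative separation guaranteeing that when $\Eps(R)\not\subseteq\States$ the gap $r^+_{\alpha^*}-R$ is large enough to survive oracle imprecision and error propagation. You explicitly defer this step, and it is not a routine estimate — it is the content of Lemma~\ref{lem:bayesian.positivity_violation} and Lemma~\ref{lem:bayesian.r_separation}, which together constitute the technical heart of the paper's argument. As written, your proof is therefore incomplete at its decisive point.

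Moreover, the route you sketch for closing the gap would fail in two ways. First, you hope for an inverse-polynomial lower bound on the escaping credibility mass. This is impossible even for well-separated instances: the protruding region is a thin cap of an ellipsoid in dimension $N=d^2-1$, so its Gaussian mass is generically exponentially small in $N$; correspondingly, the paper only proves ${r^+_\alpha}^2-{r_{\alpha/C}}^2\ge 2^{-p(\log\norm{\vec a}_1)}$ and must call the oracle with exponentially small, binary-encoded accuracy $\delta=\Delta/4$ — an inverse-polynomial $\delta$ does not suffice. Second, and more fundamentally, you attempt the reduction from \emph{arbitrary} rational \ELLPOS instances, for which no lower bound on the protrusion exists at all: an ellipsoid can exceed $\States$ by an amount not controlled by any fixed function of the input size (bounding it is a real-algebraic separation problem). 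The paper avoids this by reducing only from the structured instances $\Eps_{\vec a}$ that encode balanced-sum instances in the proof of Theorem~\ref{thm:ortho.hard}; for those, the explicit gap $2/p(ad)$ of Lemma~\ref{lem:ellpos.gap_or_no_gap} can be traced back to show that non-containment forces an eigenvalue below $-\tilde p{(\norm{\vec a})}^{-1}$, which Bhatia's perturbation inequality and a volume estimate convert into the detectable radius gap. Deciding this subfamily already decides balanced sum, which is all NP-hardness requires. Without restricting to such structured instances (or supplying an alternative separation bound), the comparison step of your algorithm cannot be certified correct.
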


The proof runs along the lines outlined above and can be found in~\ref{sec:proof_bayesian}.
Here, the main technical problem is that we are dealing with finite-precision arithmetic.

\section{Conclusion \& Outlook}
\label{sec:outlook}

The goal of this work is to provide an absolute ``upper bound'' on what we can expect from algorithms computing error regions for QST and to demonstrate that there is a trade-off between optimality and efficiency.
This paper should not be understood as providing a no-go theorem for efficient algorithms in practice since the negative result of this work does not rule out efficient algorithms for practically acceptable approximations to optimal regions.
Also, there is no indication that the various approaches used in practice give rise to regions that are far from optimal or do not have the advertised coverage.
The reason our result leaves room for feasible approaches in practice are twofold:
First, like any result showing NP-hardness, we prove that there is no efficient algorithm solving the exact problem deterministically for any instances.
Hence, our result neither precludes the existence of efficient approximate or probabilistic algorithms, nor cannot make any statement about average case hardness.
Second, although the experimental effort necessary for full-fledged tomography scales polynomially in the dimension of the system -- and is, therefore, efficient in the sense of computational complexity -- in practice other characterization techniques such as randomized benchmarking or direct fidelity estimation become more important for larger dimensions.
It should now be the goal of future work to further close down the gap between existing positive results and the proven no-go theorems from either side.

More specifically, due to the simplifying assumptions made we investigate computational intractability that is solely caused by the quantum constraints and not by the general complications in high-dimensional statistics.
In the Bayesian settings we show that minimal volume (w.r.t.\ the Hilbert-Schmidt measure) credible regions for truncated Gaussian posterior distributions are hard to compute.
Therefore, the problem of determining MVCR for QST cannot be solved efficiently as well, since any algorithm solving the latter must also be able to solve instances with the specific prior used in Prob.~\ref{prob:bayesian.trucated_cr}.

The result for frequentist confidence regions is somewhat weaker since optimal confidence regions for high-dimensional Gaussian distributions are not known for most natural notions of optimality.
Nevertheless, Gaussian confidence ellipsoids constitute a viable choice due to their simplicity and tractability.
However, our results show that the constraints imposed by quantum mechanics render the task of characterizing the confidence regions for the constrained problem computationally intractable -- even under the simplifying assumptions made.
Of course, any more general setting encompassing the Gaussian approximation will be at least as hard to treat as the one used in this work.
Furthermore, it also shows that computing any confidence region estimator yielding ellipsoids when the constraints are not active (and anything possibly better when they are) involves solving NP-hard problems.\\

Recently, the mathematical statistics community has started to analyze the trade-offs between computational complexity and optimality in inference problems -- see e.g.\
\cite{Berthet_2013_Complexity,Berthet_2013_Computational,Zhang_2014_Lower}.
Early papers concentrated on the problem of \emph{sparse principal component analysis}, which roughly asks whether the covariance matrix of a random vector possess a sparse eigenvector with large eigenvalue~\cite{Berthet_2013_Complexity,Berthet_2013_Computational,Zhang_2014_Lower}.
Later works have addressed the much better-studied problem of sparse inference~\cite{Zhang_2014_Lower}.
The main difference between these papers and the present one is that we always condition on a data set and show that certain operations for quantifying uncertainty given the data are hard.
This approach is canonical for a Bayesian analysis, but merely ``natural'' for orthodox error regions (c.f.~Sec.~\ref{sub:intro.error_regions}).
In contrast, Refs.~\cite{Berthet_2013_Complexity,Berthet_2013_Computational,Zhang_2014_Lower} analyze the ``global'' performance of orthodox estimators -- i.e.\ they do not require looking at worst-case scenarios over the data.
References~\cite{Berthet_2013_Complexity,Berthet_2013_Computational,Zhang_2014_Lower} achieve this by reducing a certain problem (``hidden clique'') -- that is conjectured to be hard in the average case -- to the sparse PCA problem; while~\cite{Zhang_2014_Lower} employs a more subtle argument involving the non-uniform complexity class $P/\mathrm{poly}$.
It would be very interesting to adapt such arguments to the problem of quantum uncertainty quantification.

Of course, from the practical point of view, ``positive'' results -- i.e.\ new algorithms to solve the problem -- would be more beneficial.
Here, recent work on sampling distributions restricted to convex bodies~\cite{Cousins_2013_Cubic,Cousins_2015_Bypassing} could be a starting point for further investigations.

Beside quantum state tomography, our results might also be relevant to problems involving psd constraints such as the estimation of covariance matrices.

\section*{Acknowledgments}
This work has been supported by the Excellence Initiative of the German Federal and State Governments (Grants ZUK 43 \& 81), the ARO under contract W911NF-14-1-0098 (Quantum Characterization, Verification, and Validation), and the DFG projects GRO 4334/1,2 (SPP1798 CoSIP).

\section*{References}

\providecommand{\newblock}{}

\appendix

\section{Generalized Bloch Representation}
\label{sec:parametrisation}

Here, we provide the particular generalizations $\sigma_{i}$ of the Pauli matrices used in Sec.~\ref{sub:ortho.linear_inversion}.
These are exactly the generators of the group $SU(d)$, see e.g.~\cite{Kimura_2003_Bloch,Byrd_2003_Characterization} for more details.
Since the exact order of the $\sigma_{i}$ is not important for our purposes, we present them as finite sets of matrices generalizing the $\sigma_\mathrm{X}$, $\sigma_\mathrm{Y}$, and $\sigma_\mathrm{Z}$ matrix, respectively:
\begin{equation}
  \left\{ \sigma_{i}:i=1,\ldots,i_{d}\right\} =\left\{ \Xi_{jk}^{(\textrm{Re})}:1\leq j<k\leq d\right\} ,
\end{equation}
\begin{equation}
  \left\{ \sigma_{i}:i=i_{d}+1,\ldots,2i_{d}\right\} =\left\{ \Xi_{jk}^{(\textrm{Im})}:1\leq j<k\leq d\right\} ,
\end{equation}
\begin{equation}
  \left\{ \sigma_{i}:i=2i_{d}+1,\ldots,d^{2}-1\right\} =\left\{ \Xi_{l}^{(\textrm{diag})}:1\leq l\leq d-1\right\} .
\end{equation}
Recall that $i_{d}=d(d-1)/2$.
The matrices on the right hand side are defined in terms of some orthonormal basis ${\{\ket{i}\}}_i$:
\begin{equation}
  \Xi_{jk}^{(\textrm{Re})} =  \ket{j}\bra{k} + \ket{k}\bra{j},
\end{equation}
\begin{equation}
  \Xi_{jk}^{(\textrm{Im})} = -\ii \Big(\ket{j}\bra{k} - \ket{k}\bra{j}\Big),
\end{equation}
\begin{equation}
  \Xi_{l}^{(\textrm{diag})} = \sqrt{\frac{2}{l\left(l+1\right)}}\left(\sum_{j=1}^{l} \ket{j}\bra{j} - l \ket{l+1}\bra{l+1} \right).
\end{equation}

\section{Proof of Theorem~\ref{thm:ortho.hard}}
\label{sec:ellpos}

We shall start the current discussion with a word of clarification concerning the dual notation already used in the definition of the Bloch vector. We utilize an alternative representation of the state $\ket{\Psi} $ in terms of a complex vector $\vec{\psi}$ with coordinates
\begin{equation}
  \psi_{k}=\left\langle k \ket{\Psi} \right.,\qquad k=1,\ldots,d,\label{coordinates}
\end{equation}
specified with respect to the orthonormal basis fixed in~\ref{sec:parametrisation}.
Consequently,  $\sqrt{\left\langle \Psi \ket{\Psi} \right.}$ is the  norm of $\ket{\Psi}$, while $\norm{\vec\psi}$ denotes the norm of $\vec\psi$. Obviously both norms assume the same value.

In a first step of the proof we write down the positivity condition for the ellipsoid under investigation:
The confidence ellipsoid $\CR$ is fully contained in the set of psd states if and only if for all $\rho \in \CR$ and all $\ket{\Psi}$,
\(
  \bra{\Psi} \rho \ket{\Psi} \ge 0.
\)
holds.
In the parametrization from Thm.~\ref{thm:ortho.ellipsoids}, this condition can be rewritten as
\begin{equation}
  \bra{\Psi}\estim{\varrho} \ket{\Psi} +R_{1}\sum_{i=1}^{i_{d}}u_{i}v_{i}\left(\vec\psi\right)+R_{2}\sum_{i=i_{d}+1}^{d^{2}-1}u_{i}v_{i}\left(\vec\psi\right)\geq0,
  \label{eq:ellpos.positivity}
\end{equation}
where we have already restricted our attention to the special case from Eq.~\eqref{eq:ortho.subclass}.
Furthermore, we have used the shorthand $v_{i}\left(\vec\psi\right)=\bra{\Psi}\sigma_{i} \ket{\Psi} $, which are the rescaled Bloch coordinates of the density matrix $\ket{\Psi} \bra{\Psi}$.
Condition~\eqref{eq:ellpos.positivity} is independent of the norm of $\ket{\Psi}$ thus, we can fix $\left\langle \Psi \ket{\Psi} \right. = d$.
Recall that Eq.~\eqref{eq:ellpos.positivity} has to hold for all values of $\vec u$ with $\vec u^T \vec u \le 1$.
Since the left hand side assumes its minimal value for
\begin{equation}
  u_{i} = -\frac{v_{i}\left(\vec\psi\right)}{\sqrt{\sum_{j}v_{i}^{2}\left(\vec\psi\right)}},
\end{equation}
we find that Eq.~\eqref{eq:ellpos.positivity} is equivalent to
\begin{equation}
 \bra{\Psi}\estim{\varrho} \ket{\Psi} -\sqrt{R_{1}^{2}\sum_{i=1}^{i_{d}}v_{i}^{2}\left(\vec\psi\right)+R_{2}^{2}\sum_{i=i_{d}+1}^{d^{2}-1}v_{i}^{2}\left(\vec\psi\right)}\geq0.
  \label{eq:ellpos.worst_case}
\end{equation}
Using the unusual normalization of $\ket\Psi$, we find
\begin{equation}
  \sum_{i}v_{i}^{2}\left(\vec\psi\right)=2 d\left(d-1\right) =: \mathcal{P},
\end{equation}
which can be utilized to simplify~\eqref{eq:ellpos.worst_case}
\begin{equation}
 g(\vec\psi) := \bra{\Psi}\estim{\varrho} \ket{\Psi} -\sqrt{\mathcal{P}R_{2}^{2}+\left(R_{1}^{2}-R_{2}^{2}\right)\sum_{i=1}^{i_{d}}v_{i}^{2}\left(\vec\psi\right)}\geq0.
  \label{eq:ellpos.major}
\end{equation}
In the following, we restrict our attention to  $R_{1}>R_{2}$, so that both term inside the square root are manifestly non-negative.\\

In the second step of the proof we show and utilize the following lemma:
\begin{lemma}\label{lem:ellpos.real_min}
  If $\estim{\varrho}$ is a symmetric, real matrix w.r.t.\ $\ket{i}$, then the minimum of $g(\vec\psi)$ is attained by a vector $\vec{\psi}$ with real coordinates.
\end{lemma}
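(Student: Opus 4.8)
The plan is to reduce the minimization of $g$ over complex $\vec\psi$ to a minimization over a single real symmetric positive semidefinite matrix, and then to show that the minimizer can be taken to have rank one -- which is precisely the statement that $\vec\psi$ may be chosen real.

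First I would write $\vec\psi = \vec a + \mathrm{i}\vec b$ with $\vec a,\vec b \in \BR^d$ and expand both terms of $g$. Because $\estim\varrho$ is real and symmetric, the cross terms cancel and $\bra{\Psi}\estim{\varrho}\ket{\Psi} = \vec a^T\estim\varrho\,\vec a + \vec b^T\estim\varrho\,\vec b = \mathrm{tr}(\estim\varrho\,G)$, where $G := \vec a\vec a^T + \vec b\vec b^T$. Moreover the $x$-type Bloch coordinates obey $v_i(\vec\psi) = 2\,\mathrm{Re}(\bar\psi_j\psi_k) = 2G_{jk}$ for the pair $(j,k)$ indexing $\sigma_i = \Xi_{jk}^{(\mathrm{Re})}$, so that $\sum_{i=1}^{i_d}v_i^2 = 2\sum_{j\neq k}G_{jk}^2$ also depends only on $G$. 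Hence $g(\vec\psi) = \tilde g(G)$ with
\begin{equation}
  \tilde g(G) = \mathrm{tr}(\estim\varrho\, G) - \sqrt{\mathcal{P}R_2^2 + 2(R_1^2 - R_2^2)\sum_{j\neq k}G_{jk}^2}.
\end{equation}
The feasible $G$ are exactly the real symmetric psd matrices of rank at most two with $\mathrm{tr}(G) = \norm{\vec\psi}^2 = d$, and a real $\vec\psi$ corresponds precisely to $G$ of rank at most one. The point of this reduction is that the global-phase freedom $\vec\psi \mapsto \ee^{\mathrm{i}\vartheta}\vec\psi$ is absorbed into the invariance of $G$, so no information is lost.

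The key step is a concavity argument. Given a rank-two feasible $G_0 = \lambda_1\vec e_1\vec e_1^T + \lambda_2\vec e_2\vec e_2^T$ with orthonormal $\vec e_1,\vec e_2$ and $\lambda_1 + \lambda_2 = d$, I would consider the trace-preserving segment $G(t) = (\lambda_1 + t)\vec e_1\vec e_1^T + (\lambda_2 - t)\vec e_2\vec e_2^T$ for $t \in [-\lambda_1,\lambda_2]$, along which $G(t)$ stays psd and whose two endpoints are the rank-one matrices $d\,\vec e_2\vec e_2^T$ and $d\,\vec e_1\vec e_1^T$. Along this segment $\mathrm{tr}(\estim\varrho\,G(t))$ is affine, while $\sum_{j\neq k}G(t)_{jk}^2$ is a sum of squares of affine functions of $t$, hence a convex quadratic $pt^2 + qt + s$ that is nonnegative for \emph{every} real $t$. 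Adding the strictly positive constant $\mathcal{P}R_2^2$ (using $R_2 > 0$ and $R_1 > R_2$, already in force) produces a quadratic that is strictly positive everywhere, so its discriminant is negative; the identity $f'' = (4ps - q^2)/(4f^3)$ for $f = \sqrt{pt^2 + qt + s}$ then shows the square-root term is convex in $t$. Therefore $\tilde g(G(t))$ is concave on $[-\lambda_1,\lambda_2]$.

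Since a concave function on an interval attains its minimum at an endpoint, $\tilde g(G_0) \ge \min\{\tilde g(d\,\vec e_1\vec e_1^T),\,\tilde g(d\,\vec e_2\vec e_2^T)\}$; that is, every rank-two $G$ is matched or beaten by a rank-one one. As the rank-one feasible set $\{d\,\vec e\vec e^T : \norm{\vec e} = 1\}$ is compact and $\tilde g$ continuous, the minimum is attained there, hence by a real $\vec\psi$, which is the claim. I expect the only delicate point to be the convexity of the square-root term, i.e.\ confirming that the quadratic inside is genuinely bounded away from zero so that its discriminant is strictly negative; this is exactly where the hypotheses $R_2 > 0$ and $R_1 > R_2$ are needed.
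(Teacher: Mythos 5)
Your proposal is correct, but it takes a genuinely different route from the paper's. You lift the problem to the matrix $G = \vec a\vec a^T + \vec b\vec b^T = \mathrm{Re}\,(\vec\psi\,\vec\psi^\dagger)$, observe that $g$ factors through $G$ (because $\estim\varrho$ and the $\Xi_{jk}^{(\textrm{Re})}$ are real symmetric, so $\bra\Psi\estim\varrho\ket\Psi = \tr(\estim\varrho G)$ and $v_i = 2G_{jk}$), and then run an extreme-point argument: along the trace-preserving segment that shifts eigenvalue weight between the two eigenvectors of $G$, the linear term is affine and the square-root term is convex, so $\tilde g$ is concave and is minimized at a rank-one endpoint, i.e.\ at a (phase times a) real $\vec\psi$. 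The paper instead splits $\ket\Psi = \ket{\Psi_1} + \ii\ket{\Psi_2}$ and packages the square-root term as the Euclidean norm of $\vec x^1 + \vec x^2$, with $x^\alpha_0 \propto R_2\norm{\vec\psi_\alpha}^2$ and $x^\alpha_i \propto v_i(\vec\psi_\alpha)$; the triangle inequality then yields the superadditivity $g(\vec\psi) \ge g(\vec\psi_1) + g(\vec\psi_2)$, from which non-negativity on real vectors transfers to complex ones. The paper's argument is shorter and delivers exactly the implication used downstream in the hardness reduction (it suffices to test positivity over real vectors); yours is more structural: it makes the global-phase invariance and the rank-$\le 2$ geometry explicit, and it directly establishes that the minimum is \emph{attained} at a real vector, which is literally the lemma's phrasing (the paper's superadditivity needs the degree-two homogeneity of the extended $g$ to reach that conclusion). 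Both arguments use $R_1 > R_2$ at the same spot, to keep the coefficient of the data-dependent term under the root nonnegative.

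One small imprecision: in degenerate cases the discriminant of the quadratic under the root is zero rather than strictly negative --- e.g.\ if $e_1, e_2$ are standard basis vectors, all off-diagonal entries of $G(t)$ vanish identically, so $p = q = 0$. This costs nothing, since $4ps - q^2 \ge 0$ and hence $f'' \ge 0$ still holds, and weak concavity of $\tilde g(G(t))$ already forces the minimum to an endpoint. You can also bypass the discriminant computation entirely: a nonnegative convex quadratic can be written in vertex form $p(t-t_0)^2 + c$ with $p, c \ge 0$, so $\sqrt{p(t-t_0)^2 + c}$ is the Euclidean norm of the affine map $t \mapsto (\sqrt p\,(t-t_0), \sqrt c)$ and therefore convex, with no case distinction needed.
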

\begin{proof}
  Note that we can decompose any vector $\ket\Psi$ into its real and imaginary part
  \begin{equation}
    \ket\Psi = \ket{\Psi_1} + \ii\Ket{\Psi_2},
  \end{equation}
  where the $\ket{\Psi_i}$ are given by real vectors $\vec\psi_i$.
  Therefore, for $\estim{\varrho}$ being real and symmetric, we find
  \begin{equation}
    \bra{\Psi} \estim\varrho \ket{\Psi} = \bra{\Psi_1} \estim\varrho \ket{\Psi_1} + \bra{\Psi_2} \estim\varrho \ket{\Psi_2}.
  \end{equation}
  A similar equality holds with $\estim\varrho$ replaced by $\1$ or $\sigma_i$ for $i=1,\ldots,i_d$, since the latter matrices are symmetric and real as well.
  To shorten the notation, we now define two $i_d+1$ dimensional vectors $\vec x^1$ and $\vec x^2$ with components ($\alpha=1,2$)
  \begin{equation}
    \begin{split}
      x^\alpha_0 &= \frac{\sqrt{\mathcal{P}}}{d} R_2 \norm{\vec\psi_\alpha}^2 \\
      x^\alpha_i &= \sqrt{R_1^2 - R_2^2}\;v_i\left(\vec\psi_\alpha\right) \qquad (i=1,\ldots,i_d).
    \end{split}
  \end{equation}
  Since $d = \norm{\vec\psi}^2 = \norm{\vec\psi_1}^2 + \norm{\vec\psi_2}^2$, we find
  \begin{equation}
  \sqrt{ \mathcal{P} R_2^2 + (R_1^2 - R_2^2) \sum_{i=1}^{i_d} v_{i}^{2}\left(\vec\psi\right)} = \norm{\vec x^1 + \vec x^2} \le \norm{\vec x^1} + \norm{\vec x^2},
  \end{equation}
  where we used triangle inequality in the last step.
  Therefore
\begin{equation}
 g(\vec\psi) \geq  g(\vec\psi_1) + g(\vec\psi_2)
\end{equation}
so that if  $g(\vec\psi)$ is non-negative for all real vectors, it is also non-negative for every complex vector $\vec\psi$. More intuitively, the above result is true because the construction of $g(\vec\psi)$ utilizes only the generalized $\sigma_x$ Pauli matrices, which by construction pick up certain real parts of $\vec\psi^*\otimes\vec\psi$ (imaginary contribution could appear only due to $\sigma_y$).
\end{proof}

The next step of the proof, which is crucial for encoding an instance  the balanced sum problem, is the choice of the ellipsoid's center.
We choose
\begin{equation}
  \estim{\varrho}=\frac{q}{d}\1+\frac{1-q}{a^{2}} \ket{\vec a} \bra{\vec a},\qquad0\leq q\leq1,\qquad a=\norm{\vec a},
  \label{eq:ellpos.rho0}
\end{equation}
with $q$ to be specified below and $\ket{\vec a} =\sum_{k}a_{k}\ket{k} $ denoting a state represented by a real, \emph{integral} vector $\vec{a}$ playing the role of the instance of Prob.~\ref{prob:ellpos.balanced_sum}. % to be encoded.
Since $\estim\varrho$ given by Eq.~\eqref{eq:ellpos.rho0} is manifestly real and symmetric we can restrict our attention to $\vec\psi \in \BR^d$ due to Lemma~\ref{lem:ellpos.real_min}. We find
\begin{equation}
  \bra{\Psi}\estim{\varrho} \ket{\Psi} =q+\frac{1-q}{a^{2}}{\left(\vec{a}\cdot\vec{\psi}\right)}^{2},
\end{equation}
and
\begin{equation}
  \sum_{i=1}^{i_{d}}v_{i}^{2}\left(\vec\psi\right)=4\sum_{1\leq j<k\leq d}\psi_{j}^{2}\psi_{k}^{2}\equiv2d^{2}-2\sum_{k=1}^{d}\psi_{k}^{4}.
\end{equation}

Before we will be ready to take an advantage of the above encoding we need to  perform a sequence of tedious algebraic manipulations. In short, the function we work with has an algebraic form $g(\vec\psi)=\kappa-\sqrt{\Delta}$, with both $\kappa$ and $\Delta$ being non-negative. Testing if this function is non-negative is thus equivalent to checking the inequality $\kappa^2- \Delta\geq 0$. If we divide this inequality by $2(R_1^2-R_2^2)$ and fix $q=q_+$ or $q=q_-$ with
\begin{equation}
  q_{\pm}=\frac{1}{2}\left(1\pm\sqrt{1-8d\left(R_{1}^{2}-R_{2}^{2}\right)\frac{a^{2}}{1+a^{2}}}\right).\label{eq:ellpos.q}
\end{equation}
we can rearrange it to the convenient form
\begin{equation}
  f\left(\vec{\psi}\right)-C_{2}{\left(\vec{a}\cdot\vec{\psi}\right)}^{4}\leq C_{1},
  \label{eq:ellpos.condition}
\end{equation}
where:
\begin{align}
  f\left(\vec{\psi}\right) &= 2d^{2}-\sum_{k=1}^{d}\psi_{k}^{4}-2d\frac{{\left(\vec{a}\cdot\vec{\psi}\right)}^{2}}{1+a^{2}}, \\
  \label{eq:ellpos.def_c1}
  C_{1}&=d^{2}+\frac{1}{R_{1}^{2}-R_{2}^{2}}\left[\frac{q_{\pm}^{2}}{2}-d\left(d-1\right)R_{2}^{2}\right], \\
  C_{2}&=\frac{q_{\mp}^{2}}{2a^{4}\left(R_{1}^{2}-R_{2}^{2}\right)}>0
\end{align}
Both solutions~\eqref{eq:ellpos.q} assure that~\eqref{eq:ellpos.condition} is free from additional terms proportional to ${\left(\vec{a}\cdot\vec{\psi}\right)}^{2}$, except those already hidden in $f$.

Hence, the original problem of deciding whether the ellipsoid $\Eps$ centered at $\estim\varrho$ and with radii~\eqref{eq:ortho.subclass} is contained in the psd states can be rephrased as deciding whether the maximum of the left hand side of Eq.~\eqref{eq:ellpos.condition} is smaller or equal to some constant:
\begin{equation}
  \Eps\subset\States
  \iff
  \max_{\vec\psi\in\mathbb{S}^{d-1}_{d}}\left[f\left(\vec\psi\right)-C_{2}{\left(\vec a \cdot \vec\psi\right)}^{4}\right]\leq C_{1}.
  \label{eq:ellpos.max_condition}
\end{equation}
Here, $ \mathbb{S}^{d-1}_{\zeta}$ denotes a $(d-1)$-dimensional sphere with radius $\sqrt\zeta$, i.e.
\begin{equation}
 \vec\psi\in\mathbb{S}^{d-1}_{d} \iff    \vec\psi\in\mathbb{R}^{d}\;\wedge \; \left\Vert \vec\psi\right\Vert^{2}=d.
\end{equation}
The relation of Problem~\ref{prob:ortho.ellpos} to the balanced sum problem (Problem~\ref{prob:ellpos.balanced_sum}) is derived in the following Lemma.
\begin{lemma}\label{lem:ellpos.gap_or_no_gap}
  If the instance $\vec a$ of Problem~\ref{prob:ellpos.balanced_sum} allows for a balanced sum partition, then
  \begin{equation}
    \max_{\vec\psi\in\mathbb{S}^{d-1}_{d}}\left[f\left(\vec\psi\right)-C_{2}{\left( \vec a \cdot \vec\psi\right)}^{4}\right]
    = 2d^{2}-d \label{eq:ellpos.def_pi0}.
  \end{equation}
  On the other hand, if there is no such partition, we have
 \begin{align}
    \max_{\vec\psi\in\mathbb{S}^{d-1}_{d}}\left[f\left(\vec\psi\right)-C_{2}{\left( \vec a \cdot \vec\psi\right)}^{4}\right]
    &<&   \max_{\vec\psi\in\mathbb{S}^{d-1}_{d}}f\left(\vec\psi\right)\label{eq:ellpos.def_pi}\\
    &\leq&2d^{2}-d - \frac{2}{p(a d)}
    \label{eq:ellpos.def_pi2}.
  \end{align}
  where $p(x)=2 x^4$ is a non-negative polynomial.
\end{lemma}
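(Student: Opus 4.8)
The plan is to study the unconstrained objective $f(\vec\psi)=2d^{2}-\sum_{k}\psi_{k}^{4}-2d\,(\vec a\cdot\vec\psi)^{2}/(1+a^{2})$ on the sphere $\mathbb{S}^{d-1}_{d}$ directly, treating the extra summand $-C_{2}(\vec a\cdot\vec\psi)^{4}$ as a non-positive correction that can only lower the maximum. I would first record one clean pointwise bound that drives both cases. On $\mathbb{S}^{d-1}_{d}$ we have $\sum_{k}\psi_{k}^{2}=d$, so Cauchy--Schwarz gives $d^{2}=(\sum_{k}\psi_{k}^{2})^{2}\le d\sum_{k}\psi_{k}^{4}$, i.e.\ $\sum_{k}\psi_{k}^{4}\ge d$ with equality exactly when all $\psi_{k}^{2}=1$, that is $\vec\psi\in\{-1,1\}^{d}$. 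Since the remaining term is non-negative, $f(\vec\psi)\le 2d^{2}-d$ everywhere, with equality iff simultaneously $\vec\psi\in\{-1,1\}^{d}$ and $\vec a\cdot\vec\psi=0$ -- i.e.\ iff $\vec\psi$ is a balanced-sum vector for Problem~\ref{prob:ellpos.balanced_sum}. This settles~\eqref{eq:ellpos.def_pi0}: if a partition $\vec s$ exists then $f(\vec s)=2d^{2}-d$ while $C_{2}(\vec a\cdot\vec s)^{4}=0$, and since $f-C_{2}(\vec a\cdot\vec\psi)^{4}\le f\le 2d^{2}-d$ throughout, the maximum equals $2d^{2}-d$.

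The substance is the gap~\eqref{eq:ellpos.def_pi2}, for which I would prove the stronger, purely pointwise claim that $f(\vec\psi)\le 2d^{2}-d-2/p(ad)$ for every $\vec\psi\in\mathbb{S}^{d-1}_{d}$ whenever no balanced partition exists; here $2/p(ad)=a^{-4}d^{-4}$, and we may assume $\vec a\neq \vec 0$ (so $a\ge 1$) and $d\ge 2$, the other cases being trivially balanced. Set $\epsilon:=\sum_{k}\psi_{k}^{4}-d\ge 0$ and split on the threshold $\epsilon_{0}:=a^{-4}d^{-4}$. If $\epsilon\ge\epsilon_{0}$ the first term alone gives $f\le 2d^{2}-(d+\epsilon_{0})$. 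If $\epsilon<\epsilon_{0}$, I would use the identity $\sum_{k}(\psi_{k}^{2}-1)^{2}=\sum_{k}\psi_{k}^{4}-2\sum_{k}\psi_{k}^{2}+d=\epsilon$, which forces $\vec\psi$ close to the sign vector $\vec s$ with $s_{k}=\operatorname{sgn}\psi_{k}$: from $(|\psi_{k}|-1)^{2}\le(\psi_{k}^{2}-1)^{2}$ one gets $\|\vec\psi-\vec s\|^{2}\le\epsilon$. Because $\vec a$ is integral and $\vec a\cdot\vec s\neq0$ (no partition), $|\vec a\cdot\vec s|\ge 1$, whence $|\vec a\cdot\vec\psi|\ge 1-\|\vec a\|\,\|\vec\psi-\vec s\|\ge 1-a\sqrt{\epsilon}>\tfrac34$ for $a\ge1,\ d\ge2$. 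The middle term is then bounded below by $2d(\tfrac34)^{2}/(1+a^{2})=\tfrac{9d}{8(1+a^{2})}$, which exceeds $\epsilon_{0}$ by the elementary estimate $9a^{4}d^{5}\ge 8(1+a^{2})$. In both branches $f(\vec\psi)\le 2d^{2}-d-\epsilon_{0}$, giving~\eqref{eq:ellpos.def_pi2}.

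Assembling the chain, the pointwise bound yields $\max_{\vec\psi}f\le 2d^{2}-d-2/p(ad)$ (the right half of the display), while $C_{2}>0$ and $(\vec a\cdot\vec\psi)^{4}\ge0$ give $f-C_{2}(\vec a\cdot\vec\psi)^{4}\le f$ and hence $\max_{\vec\psi}[f-C_{2}(\vec a\cdot\vec\psi)^{4}]\le\max_{\vec\psi}f$. This already strictly separates the two cases, which is all the reduction in~\eqref{eq:ellpos.max_condition} requires: in the no-partition case the maximum is at most $2d^{2}-d-2/p(ad)$, strictly below the value $2d^{2}-d$ of the partition case. The delicate point is the \emph{strict} inequality asserted in~\eqref{eq:ellpos.def_pi}: equality $\max[f-C_{2}(\cdots)]=\max f$ would require a global maximizer $\vec\psi_{0}$ of $f$ lying on the hyperplane $\vec a\cdot\vec\psi_{0}=0$, and excluding this needs a short Lagrange-multiplier analysis of the critical points of $f$ on that hyperplane; I would handle it there but emphasize that it is inessential, since only the combined bound $\max[f-C_{2}(\cdots)]\le 2d^{2}-d-2/p(ad)$ feeds into Theorem~\ref{thm:ortho.hard}. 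The main obstacle is thus the middle paragraph: converting the discrete statement ``no exact balanced partition'' into a quantitative continuous gap on the sphere, where the identity linking $\sum_{k}\psi_{k}^{4}$ to the distance to the nearest sign vector, together with the integrality of $\vec a$, is the decisive tool; the constant-chasing ($9a^{4}d^{5}\ge8(1+a^{2})$ and the choice of $\epsilon_{0}$) is routine.
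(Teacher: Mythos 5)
Your proof is correct, and while it rests on the same underlying mechanism as the paper's proof of Lemma~\ref{lem:ellpos.gap_or_no_gap} --- the dichotomy that $\vec\psi$ is either far from every sign vector (so that $\sum_k\psi_k^4$ exceeds $d$ by a quantifiable amount) or close to one (so that integrality of $\vec a$ plus the absence of a balanced partition forces $\abs{\vec a\cdot\vec\psi}$ away from zero) --- your execution differs enough to be worth recording. The paper splits on the distance $\norm{\vec\delta}$ to the nearest sign vector with threshold $\lambda/a$, $\lambda=d^{-3/4}$, and converts distance into a quartic deficit via a binomial expansion of $\sum_k(\tilde z_k+\delta_k)^4$ together with Jensen's inequality; you split on the quartic excess $\epsilon=\sum_k\psi_k^4-d$ itself and pass to the nearest sign vector through the one-line bound $\norm{\vec\psi-\vec s}^2=\sum_k(\abs{\psi_k}-1)^2\le\sum_k(\psi_k^2-1)^2=\epsilon$, which is shorter and, importantly, more robust: in the near-sign-vector branch you simply square the lower bound $\abs{\vec a\cdot\vec\psi}>3/4$, whereas the paper at the corresponding point upper-bounds $-(\vec a\cdot\vec\psi)^2$ by $\sqrt{d}\,a\min\{0,a\norm{\vec\delta}-1\}$ by invoking $\abs{\vec a\cdot\vec\psi}\le a\sqrt d$ --- a step whose inequality direction does not follow as written (the correct factor is $-\max\{0,1-a\norm{\vec\delta}\}^2$; the paper's final gap $2/p(ad)$ survives this correction with $\lambda=d^{-3/4}$, but your constants need no repair at all). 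Your handling of \eqref{eq:ellpos.def_pi0} is also more complete: the paper only exhibits the partition vector attaining $2d^2-d$, while your Cauchy--Schwarz bound $\sum_k\psi_k^4\ge d$ supplies the matching upper bound that makes the asserted equality rigorous. The one loose end you leave, strictness in \eqref{eq:ellpos.def_pi}, is equally loose in the paper (nonnegativity of $C_2$ yields only $\le$, not $<$); as you correctly observe, only the combined bound $\max\left[f-C_2(\vec a\cdot\vec\psi)^4\right]\le 2d^2-d-2/p(ad)$ enters the reduction behind Theorem~\ref{thm:ortho.hard} through the choice $C_1=2d^2-d-p(ad)^{-1}$, so nothing downstream is affected --- though a fully literal proof of the Lemma as stated would still owe that argument.
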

For the sake of clarity we relegate the proof of the above lemma to the end of this section.
As a consequence of Lemma~\ref{lem:ellpos.gap_or_no_gap} the choice,
\begin{equation}
  C_{1}=2d^{2}-d-p{(a d)}^{-1},
  \label{eq:ellpos.choice}
\end{equation}
implies that an efficient algorithm deciding whether the inequality~\eqref{eq:ellpos.condition} is satisfied or not is also capable of deciding Prob.~\ref{prob:ellpos.balanced_sum} efficiently.
This is exactly the statement of Thm.~\ref{thm:ortho.hard}.\\

The last step we need to make is to find the parameters $R_{1}$ and $R_{2}$ leading to the choice~\eqref{eq:ellpos.choice}.
To this end, we set $R_{2}=\epsilon R_{1}$ with $0<\epsilon<1$ and introduce two positive parameters
\begin{equation}
  B_{1}=p{(a d)}^{-1},\qquad B_{2}=\frac{d a^2}{1+a^2}.
\end{equation}
Note that if $1\leq j\leq d$ is such that $|a_j|=\min_k |a_k|$, then for $\vec\psi^j$ given by $\psi^j_k=\sqrt{d} \delta_{jk}$ the function $f(\vec\psi^j)$ is equal to
\begin{equation}
f\left(\boldsymbol{\psi}^{j}\right)=\frac{d^{2}}{1+a^{2}}\left(1+a^{2}-2a_{j}^{2}\right).
\end{equation}
Since $a^2-2a_j^2\geq(d-2)a_j^2$ the quantity $f(\boldsymbol{\psi}^{j})$ is non-negative, so is the right hand side of Eq.~\eqref{eq:ellpos.def_pi}.
From~\eqref{eq:ellpos.def_pi2} we can find the bound
\begin{equation}
  \label{eq:ellpos.b1_bound}
  B_{1} \le d^{2}-d/2.
\end{equation}
Furthermore, $B_{2} \le d$.

Rearranging Eq.~\eqref{eq:ellpos.def_c1}, taking the square root and substituting~\eqref{eq:ellpos.choice} we can see that $R_1$ is implicitly defined by the relation
\begin{equation}
  \sqrt{2}\sqrt{\left(d^{2}-d-B_1\right)\left(1-\epsilon^{2}\right)+d\left(d-1\right)\epsilon^{2}}R_{1}=q_\pm.
  \label{eq:ellpos.equiv}
\end{equation}
If the left hand side of~\eqref{eq:ellpos.equiv} happens to be bigger than $1/2$, we need to take the $q_+$ solution on the right hand side (and $q_-$ in the opposite case). In order for the square roots in Eq.~\eqref{eq:ellpos.equiv} to be real-valued, we need to assume
\begin{equation}
  \left(d^{2}-d-B_1\right)\left(1-\epsilon^{2}\right)+d\left(d-1\right)\epsilon^{2}\geq0.
\end{equation}
and
\begin{equation}
  1-8R_{1}^{2}\left(1-\epsilon^{2}\right)B_{2}\geq0,\label{eq:ellpos.extra_cond}
\end{equation}
The latter condition assures that $q_\pm$ are real while the former condition, as it
does not depend on $R_{1}$, can be immediately solved for $\epsilon$:
\begin{equation}
  \epsilon^{2}\geq1-\frac{d\left(d-1\right)}{B_{1}}.
  \label{eq:ellpos.condition_epsilon1}
\end{equation}
However, Eq.~\eqref{eq:ellpos.condition_epsilon1} does not yield a universal bound for acceptable values of $\epsilon$ since $B_1$ depends on the particular instance $\vec a$.
To obtain a lower bound independent of $\vec a$, we use Eq.~\eqref{eq:ellpos.b1_bound}, obtaining:
\begin{equation}
\epsilon^2 \geq \frac{1}{2d - 1}.
\end{equation}
Since both sides of~\eqref{eq:ellpos.equiv} are non-negative, we can take the square of this relation and turn it it into a quadratic equation for $R_1$. Surprisingly, this equation has a trivial solution $R_1=0$ (only relevant while dealing with $q_-$) and a single  non-trivial solution which can be simplified to the form:
\begin{equation}
  R_{1}=\frac{1}{\sqrt{2}}\frac{\sqrt{d\left(d-1\right)-B_{1}\left(1-\epsilon^{2}\right)}}{d\left(d-1\right)-\left(B_{1}-B_{2}\right)\left(1-\epsilon^{2}\right)},
  \label{eq:ellpos_r1}
\end{equation}
The condition~\eqref{eq:ellpos.extra_cond} becomes trivially satisfied, while the left hand side of Eq. (\ref{eq:ellpos.equiv}) is greater than $1/2$ (relevant for $q_+$) for
\begin{equation}
  \epsilon^{2}\geq1-\frac{d\left(d-1\right)}{\left(B_{1}+B_{2}\right)}.
  \label{eq:ellpos.condtion_epsilon2}
\end{equation}
In the opposite case the inequality is reversed.
When~\eqref{eq:ellpos.condtion_epsilon2} occurs, we find that
\begin{align}
  q_{+} &= \frac{d\left(d-1\right)-B_{1}\left(1-\epsilon^{2}\right)}{d\left(d-1\right)-\left(B_{1}-B_{2}\right)\left(1-\epsilon^{2}\right)}\label{qp},\\
  q_{-} &=\frac{B_{2}\left(1-\epsilon^{2}\right)}{d\left(d-1\right)-\left(B_{1}-B_{2}\right)\left(1-\epsilon^{2}\right)},
\end{align}
while in the opposite case the parameters $q_{+}$ and $q_{-}$ swap.
These interrelations between the parameters imply that regardless of the validity of~\eqref{eq:ellpos.condtion_epsilon2}, the solution~\eqref{eq:ellpos_r1} uniquely determines  $q$ initially introduced in~\eqref{eq:ellpos.rho0} as given by the formula~\eqref{qp}. This parameter is manifestly smaller than $1$ and due to~\eqref{eq:ellpos.condition_epsilon1} it is also non-negative.
With the given choice of parameters (\ref{eq:ellpos_r1},~\ref{eq:ellpos.condtion_epsilon2}) and $q$ specified as above, we complete the reduction of the balanced sum problem to Prob.~\ref{prob:ortho.ellpos}.
To finalize the proof of Theorem~\ref{thm:ortho.hard}, we now state the proof of Lemma~\ref{lem:ellpos.gap_or_no_gap}.
\begin{proof}[Proof of Lemma~\ref{lem:ellpos.gap_or_no_gap}]
  The first part of the proof -- Eq.~\eqref{eq:ellpos.def_pi0} --  follows from a simple calculation utilizing the partition vector $\vec\psi$ defined in~\eqref{eq:ellpos.partition_vector}. Note that as $\vec a \cdot\vec \psi=0$, we immediately obtain the first equality in~\eqref{eq:ellpos.def_pi0}, which since $C_2$ is non-negative turns into inequality in~\eqref{eq:ellpos.def_pi}.

  To prove~\eqref{eq:ellpos.def_pi2}, we define the set of all possible ($2^d$ in total) partition vectors
  \begin{equation}
    \mathcal{Z} := \left\{ \vec z \in \BR^d\colon \forall i \, z_i = \pm 1 \right\}
  \end{equation}
  and (for an arbitrary $0<\lambda<1$) the set of vectors that are ``close'' to some element from $\mathcal{Z}$
  \begin{equation}
    \mathcal{B} := \left\{ \vec\psi\in\BR^d\colon \min_{\vec z\in \mathcal{Z}} \norm{\vec\psi - \vec z} \leq \frac{\lambda}{a} \right\}.
  \end{equation}
Because $a \geq 1$, the set $\mathcal{B}$ can be thought of as a disjoint union of $2^d$ balls centered around the elements of $\mathcal{Z}$.
For further convenience we denote $\tilde{\vec{z}} = \mathrm{argmin}_{\vec z\in\mathcal{Z}} \, \norm{\vec\psi - \vec z}$, and $\vec\delta := \vec\psi - \tilde{\vec{z}}$.
By construction $\tilde{z}_k=\mathrm{sign}\,\psi_k$ so that for all $k=1,\ldots,d$
\begin{equation}
\tilde{z}_k\delta_k=\tilde{z}_k\psi_k-\tilde{z}_k^2=|\psi_k|-1\geq-1.
\end{equation}
Since $\norm{\vec\psi}^2=d$ we find that
  \begin{equation}
    2\tilde{\vec{z}} \cdot\vec\delta = - \norm{\vec\delta}^2.
  \end{equation}
Using all the above, the fact that $\tilde{z}_k^2=1$ and  $\tilde{z}_k^3=\tilde{z}_k$, and the Jensen inequality we can further estimate
\begin{equation}
-\sum_{k=1}^d \psi_k^4\leq-d-\sum_{k=1}^d \delta_k^4\leq-d-\frac{\norm{\vec\delta}^4}{d}.
\end{equation}

As $\vec a$ does not allow for a balanced sum partition and both, $\tilde{\vec z}$ and $\vec a$ are integral, we must necessarily have $\vert \vec a \cdot \tilde{\vec z} \vert \geq 1$. Thus
\begin{equation}
1\leq\left|\boldsymbol{a}\cdot\tilde{\vec z}\right|=\left|\boldsymbol{a}\cdot\left(\boldsymbol{\psi}-\boldsymbol{\delta}\right)\right|\leq\left|\boldsymbol{a}\cdot\boldsymbol{\psi}\right|+\left|\boldsymbol{a}\cdot\boldsymbol{\delta}\right|\leq\left|\boldsymbol{a}\cdot\boldsymbol{\psi}\right|+a\norm{\boldsymbol{\delta}},
\end{equation}
so that
\begin{equation}
-\left|\boldsymbol{a}\cdot\boldsymbol{\psi}\right|\leq\min\left\{0,a\norm{\boldsymbol{\delta}}-1\right\},
\end{equation}
Taking all the above results together with $\left|\boldsymbol{a}\cdot\boldsymbol{\psi}\right|\leq a\norm{\boldsymbol{\psi}}=a\sqrt{d}$ we obtain
\begin{equation}
f(\vec \psi)\leq 2d^{2}-d-\frac{\norm{\vec\delta}^4}{d}+2 d^{3/2}a\frac{\min\left\{0,a\norm{\boldsymbol{\delta}}-1\right\}}{1+a^{2}}.
\end{equation}

We will now study two cases. For $\psi \in \mathcal{B}$, we have $0\leq\norm{\vec \delta}\leq\lambda/a$, so that
\begin{equation}
f(\vec \psi)\leq 2d^{2}-d-2 d^{3/2}a\frac{1-\lambda}{1+a^{2}},
\end{equation}
while for the opposite case ($\psi \notin \mathcal{B}$), when  $\norm{\vec \delta}>\lambda/a$, one finds
 \begin{equation}
f(\vec \psi)\leq 2d^{2}-d-\frac{\lambda^4}{d a^4}.
\end{equation}
Therefore, we have for any $\vec\psi \in \BR^d$ with $\norm{\vec\psi}^2 = d$
\begin{equation}
f(\vec \psi)\leq 2d^{2}-d- \min\left\{2 d^{3/2}a\frac{1-\lambda}{1+a^{2}},\frac{\lambda^4}{d a^4}\right\},
\end{equation}
so that by setting $\lambda=d^{-3/4}$ we obtain the desired result with $p(ad)=2{(ad)}^4$.
\end{proof}

\section{Proof of Lemma~\ref{lem:ortho.spheres}}
\label{sec:spheres}

To check whether a sphere with radius $R$ centered at $\estim\varrho$ is contained in the set of psd states, specialize Eq.~\eqref{eq:ellpos.worst_case} to the special case $R_1 = R_2$:
\begin{equation}
  \bra{\Psi}\estim{\varrho} \ket{\Psi} -R\sqrt{\sum_{i}v_{i}^2\left(\vec\psi\right)}\geq0.
 \label{eq:spheres.worst_case}
\end{equation}
Since for any pure state $ \ket{\Psi} $ the identity
\begin{equation}
  \sum_{i}v_{i}^{2}\left(\vec\psi\right)=\frac{2\left(d-1\right)}{d},\label{puresum-1}
\end{equation}
holds (Bloch vectors of pure states live on the hypersphere), the inequality in question becomes
\begin{equation}
  \bra{\Psi}\estim{\varrho} \ket{\Psi} -R\sqrt{\frac{2\left(d-1\right)}{d}}\geq0.
\end{equation}
Simple minimization with respect to $ \ket{\Psi} $ leads to the final result stated as Lemma~\ref{lem:ortho.spheres}.

\section{Proof of Theorem~\ref{thm:bayesian.hardness}}
\label{sec:proof_bayesian}

Let us now construct the polynomial time reduction of Prob.~\ref{prob:ortho.ellpos} to Prob.~\ref{prob:bayesian.cr}.
We will begin with the main observation of this reduction, namely Eq.~\eqref{eq:bayesian.criterion}.
\begin{lemma}\label{lem:bayesian.criterion}
  Let $\pi(\varrho)$ denote a Gaussian distribution on $\HermTrace$ and $\pi^+(\varrho) = C \pi(\varrho) \chi(\varrho)$ the corresponding restricted Gaussian with the same mean and covariance matrix, as defined in Eq.~\eqref{eq:bayesian.density_plus}.
  For any $\alpha \in [0,1]$, the credible ellipsoid $\Eps(r_\frac{\alpha}{C})$ with credibility $\frac{\alpha}{C}$ is contained in the psd if and only if the credible ellipsoid for $\pi^+$, $\Eps(r^+_{\alpha})$, with credibility $\alpha$ has the same radius, that is Eq.~\eqref{eq:bayesian.criterion} holds.
\end{lemma}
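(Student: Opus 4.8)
The plan is to recast the whole biconditional in terms of the $\pi$-mass of ellipsoids and to exploit that the Gaussian density is strictly positive while $\States$ is closed. First I would introduce the two mass functions
\[
  F(r) := \int_{\Eps(r)} \pi(\varrho)\,\dd\varrho, \qquad G(r) := \int_{\Eps(r)\cap\States}\pi(\varrho)\,\dd\varrho,
\]
both continuous in $r$ (the bounding ellipsoidal shell has zero volume, hence zero $\pi$-mass) and non-decreasing, with $G(r)\le F(r)$ for every $r$. Here $F(r) = P(\Nhalf, r^2/2)$ is \emph{strictly} increasing by the monotonicity remark below Eq.~\eqref{eq:bayesian.radius}. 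By construction of $r_{\alpha/C}$ and of the MVCR radius $r^+_\alpha$ (taken, as an MVCR must be, as the smallest radius reaching the prescribed credibility), I then have the defining identities $F(r_{\alpha/C}) = \alpha/C$ and $G(r^+_\alpha) = \alpha/C$, the latter from $\alpha = C\,G(r^+_\alpha)$ in Eq.~\eqref{eq:bayesian.radius_plus}. Since $C\ge 1$ and $\alpha\le 1$, the level $\alpha/C$ lies below $G(\infty)=C^{-1}$, so both radii are finite.

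The geometric heart of the argument is the equivalence
\[
  G(r) = F(r) \iff \Eps(r)\subseteq\States .
\]
The implication from right to left is immediate, since then $\Eps(r)\cap\States = \Eps(r)$. For the converse I would observe that $G(r)=F(r)$ forces $\int_{\Eps(r)\setminus\States}\pi\,\dd\varrho = 0$; because $\pi$ has a strictly positive density with respect to the flat Hilbert--Schmidt measure, this means $\Eps(r)\setminus\States$ is a $\Vol$-null set. Since $\States$ is closed, $\HermTrace\setminus\States$ is open, so any hypothetical point $\varrho^\ast\in\Eps(r)\setminus\States$ would carry a full open neighbourhood outside $\States$; intersecting it with the full-dimensional closed ellipsoid $\Eps(r)$ yields a positive-volume subset of $\Eps(r)\setminus\States$ (a neighbourhood of a boundary point of a convex body still meets its interior in positive volume). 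This contradiction gives $\Eps(r)\subseteq\States$.

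With this equivalence in hand, set $r_0 := r_{\alpha/C}$ and use $F(r_0)=\alpha/C$ to get $\Eps(r_0)\subseteq\States \Leftrightarrow G(r_0)=F(r_0)=\alpha/C$. It remains to check $G(r_0)=\alpha/C \Leftrightarrow r^+_\alpha = r_0$. The forward direction is $r^+_\alpha = r_0 \Rightarrow G(r_0)=G(r^+_\alpha)=\alpha/C$. For the reverse, $G(r_0)=\alpha/C$ gives $r^+_\alpha\le r_0$ by minimality, while for any $r<r_0$ the strict monotonicity of $F$ yields $G(r)\le F(r)<F(r_0)=\alpha/C$, so no smaller radius attains credibility $\alpha$ and hence $r^+_\alpha\ge r_0$; equality follows and the lemma is proved. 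The main obstacle is the measure-theoretic step that upgrades a null-mass difference into genuine set containment: it relies on the density being strictly positive, on $\States$ being closed, and on the ellipsoid being full-dimensional, so the argument requires $r_0>0$, i.e.\ $\alpha\in(0,1)$. The degenerate case $\alpha=0$, where the region collapses to the single point $\theta$ and full-dimensionality fails, must be excluded or handled separately.
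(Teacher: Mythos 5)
Your proof is correct and takes essentially the same route as the paper's: both hinge on the observation that $\Eps(r_{\frac{\alpha}{C}})\subseteq\States$ makes $r_{\frac{\alpha}{C}}$ satisfy the defining credibility equation for $r^+_{\alpha}$, while any nonempty part of the ellipsoid outside $\States$ carries strictly positive Gaussian mass and forces $r^+_{\alpha} > r_{\frac{\alpha}{C}}$. The extra care you add---minimality of the MVCR radius (needed since $G$ is only non-decreasing), the topological argument upgrading zero mass to genuine containment, and flagging the degenerate case $\alpha=0$---only tightens steps the paper's proof glosses over.
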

\begin{proof}
  The two cases of  $\Eps(r_\frac{\alpha}{C})$ being contained and not being contained in the psd states are illustrated in Fig.~\ref{fig:bayesian.ellipsoids}.
  First, assume that $\Eps(r_\frac{\alpha}{C}) \subset \States$, then
  \begin{equation}
    \frac{\alpha}{C} = \int_{\Eps(r_\frac{\alpha}{C})} \pi(\varrho) \, \dd \varrho.
    \implies
    \alpha = \int_{\Eps(r_\frac{\alpha}{C}) \cap \States } C \pi(\varrho) \, \dd \varrho.
  \end{equation}
  Note that the right equation is exactly the defining Eq.~\eqref{eq:bayesian.radius_plus} for the positive radius $r^+_{\alpha}$ if $r^+_\alpha = r_\frac{\alpha}{C}$.

  Now, assume that a part of the ellipsoid $O = \Eps(r_\frac{\alpha}{C})  \setminus \States \neq \emptyset$ lies outside the psd states.
  Then, as can be seen on the right side of Fig.~\ref{fig:bayesian.ellipsoids}, we need to enlarge $r^+_{\alpha}$ to compensate for the lost probability weight of $O$.
  The latter cannot be vanishing, since the Gaussian density $\pi(\varrho)$ is strictly positive.
  Therefore, $r^+_\alpha > r_\frac{\alpha}{C}$ in this case.
\end{proof}
Of course, the difference between $r_\frac{\alpha}{C}$ and $r^+_{\alpha}$ may in general become too small to be efficiently detectable.
However, we will show that for the instances of the balanced sum problem encoded in Problem~\ref{prob:ortho.ellpos}, this is not the case.
A first step toward this is the following Lemma.

\begin{lemma}\label{lem:bayesian.positivity_violation}
  Let $\vec a \in \mathbb{N}^d$ describe an instance of the balanced sum problem and
  \begin{equation}
    \label{eq:bayesian.positivity_violation.ellipsoid}
    \Eps_{\vec{a}} = \left\{ \varrho_0 + R_1 \sum_{i=1}^{i_d} u_i \sigma_i^ + R_2 \sum_{i=i_d + 1}^{d^2 - 1} u_i \sigma_i \colon \norm{\vec u}_2 = 1 \right\}
  \end{equation}
  the corresponding encoding ellipsoid for Problem~\ref{prob:ortho.ellpos} defined in~\ref{sec:ellpos}.
  There exists a polynomial $\tilde p$ such that if $\Eps_{\vec{a}}$ is not a subset of $\States$, there is an element $\varrho \in \Eps_{\vec{a}}$ with
  \begin{equation}
    \mathrm{mineig}(\varrho) \le -\tilde p{(\norm{\vec a})}^{-1} < 0.
  \end{equation}
\end{lemma}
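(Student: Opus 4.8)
The plan is to exhibit the witness $\varrho$ explicitly, using the balanced-sum partition that exists by hypothesis, and then to convert the strictly positive ``gap'' already isolated in Lemma~\ref{lem:ellpos.gap_or_no_gap} into a quantitative lower bound on the depth of the positivity violation, uniform in the instance.

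First I would recall that, for a fixed direction $\ket\Psi$ normalized to $\langle\Psi|\Psi\rangle = d$, the quantity $g(\vec\psi)$ from Eq.~\eqref{eq:ellpos.major} is precisely $\min_{\varrho\in\Eps_{\vec a}}\bra{\Psi}\varrho\ket{\Psi}$, attained at a definite $\varrho^\ast=\varrho^\ast(\vec\psi)\in\Eps_{\vec a}$ whose coefficient vector satisfies $\norm{\vec u}_2=1$. If $\Eps_{\vec a}\not\subset\States$, the instance $\vec a$ admits a balanced-sum vector $\vec\psi^\ast$ with $\psi^\ast_k\in\{-1,1\}$ and $\vec a\cdot\vec\psi^\ast=0$ (cf.~Eq.~\eqref{eq:ellpos.partition_vector}), so that $\norm{\vec\psi^\ast}^2=d$. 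I would plug $\vec\psi^\ast$ into $g$. Writing $g=\kappa-\sqrt\Delta$ with $\kappa=\bra{\Psi^\ast}\estim\varrho\ket{\Psi^\ast}$ and $\Delta$ the expression under the root, the algebraic rearrangement behind Eq.~\eqref{eq:ellpos.condition} gives the exact identity $\kappa^2-\Delta=2(R_1^2-R_2^2)\big[C_1-f(\vec\psi^\ast)+C_2(\vec a\cdot\vec\psi^\ast)^4\big]$. Since $\vec a\cdot\vec\psi^\ast=0$ and $f(\vec\psi^\ast)=2d^2-d$, while $C_1=2d^2-d-p(ad)^{-1}$ with $p(ad)=2(ad)^4$, this collapses to the clean value $\kappa^2-\Delta=-2(R_1^2-R_2^2)/p(ad)<0$.

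The core step is then to rationalize $g$ via $g(\vec\psi^\ast)=\frac{\kappa^2-\Delta}{\kappa+\sqrt\Delta}$ and to bound the pieces by polynomials. For the denominator I would use $\kappa=q\le 1$ together with $\Delta\le \mathcal{P}R_2^2+(R_1^2-R_2^2)\cdot 2d^2\le 2d^2R_1^2$, giving $\kappa+\sqrt\Delta\le 1+\sqrt{2}\,dR_1$. For the numerator I would bound $R_1^2-R_2^2=R_1^2(1-\epsilon^2)$ from below: fixing a valid $\epsilon$ with $1-\epsilon^2=\Theta(1)$ (admissible since $\epsilon^2\ge(2d-1)^{-1}$), and reading $R_1$ off Eq.~\eqref{eq:ellpos_r1}, its numerator is $\ge d(d-1)-1\ge 1$ and its denominator is $O(d^4)$, so $R_1=\Theta(1/d)$ and $R_1^2-R_2^2=\Omega(1/d^4)$. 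Combining these yields $g(\vec\psi^\ast)\le -1/\mathrm{poly}(a,d)$. Finally, passing from the unusual normalization to a genuine eigenvalue, $\mathrm{mineig}(\varrho^\ast)\le\frac1d\bra{\Psi^\ast}\varrho^\ast\ket{\Psi^\ast}=\frac1d g(\vec\psi^\ast)$, and since $a=\norm{\vec a}\ge\sqrt d$ folds every factor of $d$ into a power of $\norm{\vec a}$, we obtain $\mathrm{mineig}(\varrho^\ast)\le -\tilde p(\norm{\vec a})^{-1}$ for a suitable polynomial $\tilde p$.

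I expect the only real work to be the uniform polynomial control of the construction parameters $R_1,R_2,\epsilon,q$ from the explicit (and somewhat unwieldy) formula~\eqref{eq:ellpos_r1}: one must verify that $R_1^2-R_2^2$ stays bounded below by an inverse polynomial and that $\Delta$ stays bounded above by a polynomial, uniformly over all admissible instances $\vec a$. The conceptual content — that the $p(ad)^{-1}$ overshoot in Lemma~\ref{lem:ellpos.gap_or_no_gap} is a genuine, polynomially detectable violation rather than an artifact of the squaring step — is then immediate.
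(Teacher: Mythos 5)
Your proposal is correct and follows essentially the same route as the paper's own proof: both take the balanced-sum partition vector as the witness direction, take the minimizing coefficient vector $\vec u$ to get an explicit $\varrho^\ast\in\Eps_{\vec a}$, exploit the exact identity $\kappa^2-\Delta = -2\left(R_1^2-R_2^2\right)/p(ad)$ coming from Lemma~\ref{lem:ellpos.gap_or_no_gap} together with the choice~\eqref{eq:ellpos.choice}, and then convert this squared gap into an inverse-polynomial bound on $g$ itself (your rationalization $g=(\kappa^2-\Delta)/(\kappa+\sqrt\Delta)$ is the same elementary maneuver as the paper's bound $1-\sqrt{1+x^2}\le-\min\{x^2/4,\,2\}$), before passing to $\mineig$ via the normalized witness state. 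Your explicit parameter control ($R_1=\Theta(1/d)$ from Eq.~\eqref{eq:ellpos_r1}, $1-\epsilon^2=\Theta(1)$, $q\le 1$) is in fact somewhat more careful than the paper's bare assertion that ``all the constants can be expressed as polynomials in the input,'' but the substance of the argument is identical.
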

\begin{proof}
  The main proof idea is to trace back the proof for polynomial gap in Lemma~\ref{lem:ellpos.gap_or_no_gap}.
  Recall that Eqs.~\eqref{eq:ellpos.def_pi0} and~\eqref{eq:ellpos.choice} ensure that if $\vec a$ has a balanced sum partition, there is a $\vec\Psi \in {\{\pm 1\}}^d$ such that $\vec a \cdot \vec\Psi = 0$ and
  \begin{equation}
    d^2 - \sum_k \psi_k^4 + {\left( d - \frac{{(\vec a \cdot \vec \psi)}^2}{1 + \norm{\vec a}^2} \right)}^2 - C_2 {(\vec a \cdot \vec \psi)}^4 = C_1 + p{(\norm{\vec a})}^{-1}.
  \end{equation}
  By tracing back the steps which lead to this equation, we find for $\ket{\Psi} := \sum_{k=1}^d \psi_k / \sqrt{d} \ket{k}$
  \begin{align}
    \label{eq:bayesian.posviol_1}
    \frac{2 (R_1^2 - R_2^2)}{d} \, p{(\norm{\vec a})}^{-1} + \bra{\Psi} \varrho_0 \ket{\Psi}^2 \\
    = R_1^2 \sum_i {\left( \bra\Psi \sigma_i^{(x)} \ket\Psi \right)}^2 + R_2^2 \sum_i {\left( \bra\Psi \sigma_i^{(y,z)} \ket\Psi \right)}^2 \\
    =: \sum_i R_i^2 {\left( \bra\Psi \sigma_i \ket\Psi \right)}^2
  \end{align}
  Due to the special choice for $\varrho_0$ in~\eqref{eq:ellpos.rho0} and $\vec a \cdot \vec \psi = 0$, we have
  \begin{equation}
    \bra\Psi\varrho_0\ket\Psi = \frac{q}{d}
  \end{equation}
  with $q$ defined in~\eqref{eq:ellpos.q}.
  Therefore, we can rewrite Eq.~\eqref{eq:bayesian.posviol_1} as
  \begin{align}
    \bra\Psi \varrho_0 \ket\Psi - \sqrt{\sum_i R_i^2 \bra\Psi \sigma_i \ket\Psi^2}
    &&= \frac{q}{d} \left( 1 - \sqrt{1 + \frac{2d (R_1^2 - R_2^2)}{q^2 \, p(\norm{\vec a})}} \right) \nonumber\\
    &&\le - \min \left( \frac{R_1^2 - R_2^2}{2 q \, p(\norm{\vec a})},\, \frac{2q}{d} \right)
    \label{eq:bayesian.posviol_2}
  \end{align}
  where we have used
  \begin{equation}
    1 - \sqrt{1 + x^2} \le
    \left\{
      \begin{array}{ll}
        -x^2 / 4 \quad &  x \le 2 \sqrt{2} \\
       -2 & x > 2 \sqrt{2} \\
      \end{array}
    \right.
  \end{equation}
  Since all the constants on the right hand side of Eq.~\eqref{eq:bayesian.posviol_2} can be expressed as polynomials in the input, it defines the polynomial $\tilde p(\norm{\vec a})$ of the lemma.
  The left hand side of that equation is equal to $\bra\Psi \varrho \ket\Psi$, where
  \begin{equation}
    \varrho = \varrho_0 + \sum_i R_i u_i \sigma_i \in \Eps_{\vec{a}}
  \end{equation}
  for the special choice of $u$ from~\eqref{eq:ellpos.positivity}.
  The claim of the lemma follows for this $\varrho$ using Eq.~\eqref{eq:bayesian.posviol_2}.
\end{proof}

We will now show how the explicitly parameterized ellipsoid~\eqref{eq:bayesian.positivity_violation.ellipsoid} can be encoded as a MVCR-ellipsoid of a Gaussian distribution.

\begin{lemma}\label{lem:bayesian.encoded_ellipsoid}
  Denote by
  \begin{equation}
   \Eps^* = \left\{ \varrho_0 + \sum_{i=1}^{d^2 - 1} u_i R_i \sigma_i \colon \norm{\vec u}_2 = 1 \right\}
  \end{equation}
  an ellipsoid $\Eps^* \subset \HermTrace$, which is axis-aligned with the coordinate axes defined by the generalized Pauli operators.

  Then, $\Eps^*$ can be encoded as a $\frac{\alpha}{C}$ MVCR-ellipsoid for a Gaussian distribution with mean $\varrho_0 \in \States$ and covariance matrix $\Sigma$.
  The latter is diagonal in the generalized Bloch basis $\sigma_i$ with entries $\Sigma_{ij} = R_i^2 \delta_{ij}$ and for the corresponding radius we have $r_\frac{\alpha}{C} = \sqrt{2}$.
  Hence, the credibility is given by
  \begin{equation}
    \label{eq:bayesian.encoded_ellipsoid.credibility}
    \alpha = C\, P\left(\Nhalf, 1\right),
  \end{equation}
  which can be calculated efficiently up to exponential precision for given $C$ and $N$.
 \end{lemma}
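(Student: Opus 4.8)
The plan is to choose the Gaussian's parameters so that the level sets of its density reproduce $\Eps^*$ exactly, and then read off the credibility from the closed form in Eq.~\eqref{eq:bayesian.radius}. First I would set the mean to $\vec\theta = \varrho_0$, which is forced anyway since the MVCR of any Gaussian is centred at its mean. For the covariance I take the operator $\Sigma$ that is diagonal in the generalized Bloch basis with $\Sigma_{ij} = R_i^2 \delta_{ij}$; because all $R_i > 0$, this $\Sigma$ is positive definite and the density~\eqref{eq:bayesian.gaussian_density} is a well-defined Gaussian on the $(d^2-1)$-dimensional space $\HermTrace$, with $N = d^2 - 1$.

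The key step is to evaluate the Mahalanobis distance on the boundary of $\Eps^*$ and verify that it is constant. Writing a point as $\varrho = \varrho_0 + \sum_i c_i \sigma_i$ and passing to the orthonormal basis $e_i = \sigma_i / \sqrt{2}$ (recall $\textrm{Tr}(\sigma_i \sigma_j) = 2\delta_{ij}$), the orthonormal coordinates are $\tilde c_i = \sqrt{2}\, c_i$, so that $\norm{\varrho - \varrho_0}_\Sigma^2 = \sum_i \tilde c_i^2 / R_i^2 = 2 \sum_i c_i^2 / R_i^2$. On the boundary of $\Eps^*$ one has $c_i = R_i u_i$ with $\norm{\vec u}_2 = 1$, hence $\norm{\varrho - \varrho_0}_\Sigma^2 = 2 \sum_i u_i^2 = 2$. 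Every boundary point therefore lies at Mahalanobis radius $\sqrt{2}$, the solid ellipsoid $\Eps^*$ coincides with the Mahalanobis ball $\Eps(\sqrt{2})$, and $r_\frac{\alpha}{C} = \sqrt{2}$ as claimed.

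With the radius fixed, the credibility follows immediately from Eq.~\eqref{eq:bayesian.radius}: the mass that the full Gaussian assigns to $\Eps(\sqrt{2})$ is $P\!\left(\Nhalf, r_\frac{\alpha}{C}^2/2\right) = P(\Nhalf, 1)$. Since by construction we are encoding $\Eps^*$ as the $\frac{\alpha}{C}$-MVCR, this mass equals $\frac{\alpha}{C}$, giving $\alpha = C\, P(\Nhalf, 1)$. Finally, efficient evaluation of $\alpha$ to exponential precision reduces to evaluating $P(\Nhalf, 1)$, which is handled by the same power-series expansion already used for Prob.~\ref{prob:bayesian.cr} (cf.\ Lemma~\ref{lem:bayesian.normalization_constant}); given $C$ and $N$, computing $\alpha$ is then one such evaluation followed by a single multiplication.

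I expect the only genuinely delicate point to be the bookkeeping of the normalization $\textrm{Tr}(\sigma_i \sigma_j) = 2\delta_{ij}$: it is precisely the Hilbert--Schmidt norm $\sqrt{2}$ of the Bloch directions that converts the naive radius $1$ arising from $\norm{\vec u}_2 = 1$ into the stated $\sqrt{2}$. Everything else is routine, and in particular no optimization over the boundary is required, since the Mahalanobis radius is automatically constant once $\Sigma$ is matched to the squared radii $R_i^2$.
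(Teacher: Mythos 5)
Your proposal is correct and takes essentially the same approach as the paper's proof: the same choice of mean $\varrho_0$ and of $\Sigma$ diagonal in the Bloch basis with entries $R_i^2$, the same use of $\textrm{Tr}(\sigma_i\sigma_j)=2\delta_{ij}$ to show that every point of $\Eps^*$ lies at Mahalanobis radius $\sqrt{2}$ (the paper carries the factor $2\delta_{ij}$ directly through the quadratic form where you pass to orthonormal coordinates $\sigma_i/\sqrt{2}$), and the same identification $\alpha = C\,P\left(\Nhalf,1\right)$ via Eq.~\eqref{eq:bayesian.radius}. The efficiency claim is likewise deferred, just as in the paper, to the incomplete-$\Gamma$ power-series evaluation appearing in the proof of Lemma~\ref{lem:bayesian.normalization_constant}.
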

 \begin{proof}
   Since the generalized Pauli operators form an orthogonal system with $\tr \sigma_i \sigma_j = 2 \delta_{ij}$, we find for $\varrho \in \Eps^*$
   \begin{equation}
      \norm{\varrho}^2_2 = \sum_{i,j} u_i u_j \, R_i R_j \, {(\Sigma^{-1})}_{ij} \, 2 \delta_{ij} = 2 \norm{\vec u}_2^2.
   \end{equation}
   Therefore, $\Eps^* = \Eps(\sqrt{2})$ with mean $\varrho_0$ and the stated covariance matrix.
   The efficient computation of the credibility~\eqref{eq:bayesian.encoded_ellipsoid.credibility} is given later in the proof of Lemma~\ref{lem:bayesian.always_contained}.
 \end{proof}

\begin{figure}[t]
  \centering
  \begin{tikzpicture}[
    elip/.style={fill=white, line width=0},
    truncated/.style={pattern=north west lines, pattern color=blue},
    truncelip/.style={dashed, line width=2pt, fill=blue},
    rest/.style={pattern=north west lines, pattern color=green},
    psd/.style={dashed, color=red, line width=2pt}
  ]
    \clip (-4,-3) rectangle (5,3);

    \def\origelip{(0,0) ellipse [x radius=1.5, y radius=2, rotate=-30]}
    \def\truncatedelip{(0,0) ellipse [x radius=2, y radius=2.66666, rotate=-30]}

    \def\psdradius{5.5}
    \def\psdcircle{(psdcenter) [draw=none] circle (\psdradius)}
    \fill[rest]\truncatedelip;
    \draw[truncelip]\origelip;
    \draw\truncatedelip;

    \node at (-1,-4) (psdcenter) {};
    \node at (3,1) {\red{psd}};

    \draw[psd] ([shift=(50:\psdradius)]psdcenter) arc (50:100:\psdradius);
    \begin{scope}
      \clip \psdcircle;
      \fill[truncated]\truncatedelip;
      \fill[elip] \origelip;
    \end{scope}

    \node at (0,0) {$\Eps(r_{\frac{1-\alpha}{C}})$};
    \node at (2.8,-.4) {$\Eps(r^+_{1-\alpha})$};
  \end{tikzpicture}
  \caption{\label{fig:bayesian.r_separation}
    Same as Fig.~\ref{fig:bayesian.ellipsoids} (right).
    Note that the solid blue and hatched blue regions need to have the same volume.
  }
\end{figure}
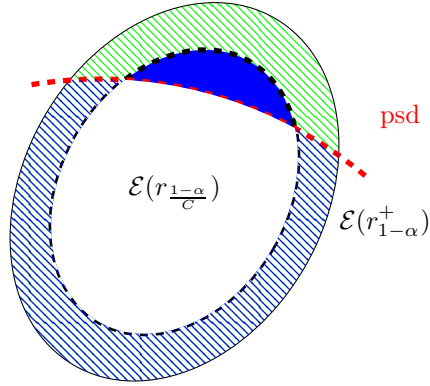

Based on the gap proven in Lemma~\ref{lem:bayesian.positivity_violation}, we will now turn to the following question:
In case Eq.~\eqref{eq:bayesian.criterion} does not hold -- that is the corresponding ellipsoid is not fully contained in the psd states -- is the corresponding gap always large enough to be efficiently detectable?
\begin{lemma}\label{lem:bayesian.r_separation}
   Let $\vec a \in \mathbb{N}^d$ be an instance of the balanced sum problem and denote by $\Eps_{\vec{a}}$ the corresponding encoding ellipsoid as given by Eq.~\eqref{eq:bayesian.positivity_violation.ellipsoid}.
  Furthermore, denote by $\pi_{\varrho_0,\Sigma}$ the Gaussian density, which encodes $\Eps_{\vec{a}} = \Eps(r_\frac{\alpha}{C})$ as an $\frac{\alpha}{C}$ credible region as given by Lemma~\ref{lem:bayesian.encoded_ellipsoid}.
  Assume that $\vec a$ has a balanced sum partition and, therefore, $\Eps_{\vec{a}}$ is not a subset of $\States$.

  Then, there exists a polynomial $p$ such that
  \begin{equation}
    {r^+_{\alpha}}^2 - {r_\frac{\alpha}{C}}^2 \ge 2^{-p(\log \norm{\vec a}_1)}.
  \end{equation}
  Here, $\norm{\vec a_1} = \sum_k \abs{a_k}$.
  In words, the gap of violation of Eq.~\eqref{eq:bayesian.criterion} can only become polynomially small in the logarithm of the size of the problem specification.
\end{lemma}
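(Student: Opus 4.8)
The plan is to convert the radius gap into a balance of Gaussian masses and then bound the two sides separately. Passing to the whitened coordinates $\vec x = \Sigma^{-1/2}(\varrho - \varrho_0)$, in which $\pi$ becomes the isotropic Gaussian $\phi$ on $\BR^N$ ($N = d^2 - 1$), the encoded ellipsoid $\Eps(r_\frac{\alpha}{C})$ is the ball $B(0,r_0)$ with $r_0=\sqrt2$ fixed by Lemma~\ref{lem:bayesian.encoded_ellipsoid}, and $\States$ becomes a convex set $S$ whose interior contains the origin (the mean $\varrho_0$ lies in the interior of $\States$ by the choice~\eqref{eq:ellpos.rho0}, which also gives $C<\infty$). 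Comparing the defining relations~\eqref{eq:bayesian.radius} and~\eqref{eq:bayesian.radius_plus} yields $\int_{B(0,r_0)}\phi = \int_{B(0,r^+_\alpha)\cap S}\phi$, which rearranges to the exact identity
\[
  L := \int_{B(0,r_0)\setminus S}\phi \;=\; \int_{(B(0,r^+_\alpha)\setminus B(0,r_0))\cap S}\phi \;=:\; I,
\]
the ``solid blue $=$ hatched blue'' balance of Fig.~\ref{fig:bayesian.r_separation}. The idea is to bound the lost mass $L$ from below and the annular mass $I$ from above linearly in the radius gap; since $L=I$, this forces the gap to be large.

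For the lower bound on $L$, I would use Lemma~\ref{lem:bayesian.positivity_violation}: there is $\varrho^\ast\in\partial\Eps_{\vec a}$ with $\mineig\varrho^\ast\le-\tilde p(\norm{\vec a})^{-1}=:-\eta$. A negative eigenvalue certifies Euclidean separation from $\States$, since for the offending eigenvector $\norm{\sigma-\varrho^\ast}_{\mathrm{HS}}\ge\eta$ for every $\sigma\in\States$; as the coordinate change rescales each Bloch axis by at most $\sqrt2\,R_{\max}$, the image $\vec x^\ast$ (with $\norm{\vec x^\ast}=r_0$) obeys $\mathrm{dist}(\vec x^\ast,S)\ge\rho$ with $\rho\ge\eta/(\sqrt2\,R_{\max})$, an inverse-polynomial quantity. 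Retracting $\vec x^\ast$ a distance $\rho/4$ toward the origin produces a ball $B(\vec c,\rho/4)\subset B(0,r_0)\setminus S$, on which $\phi\ge(2\pi)^{-N/2}e^{-1}$; hence $L\ge(2\pi)^{-N/2}e^{-1}\,\Vol(B_N)\,(\rho/4)^N$.

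For the upper bound on $I$, we may assume $r^+_\alpha\le2$, as otherwise ${r^+_\alpha}^2-r_0^2\ge2$ and the claim is trivial. Writing $I$ in radial coordinates with density $h(r)=e^{-r^2/2}r^{N-1}$, which is increasing on $[r_0,2]$ for $N\ge5$ (smaller $d$ give trivially decidable instances), gives $I\le(r^+_\alpha-r_0)\,h(2)\,\lvert S^{N-1}\rvert/(2\pi)^{N/2}$. Imposing $L\le I$, the factors $(2\pi)^{-N/2}$, the powers of $2$, and the ratio $\Gamma(N/2)/\Gamma(N/2+1)=2/N$ all cancel between the two sides, leaving $r^+_\alpha-r_0\ge(2e/N)(\rho/8)^N$ and therefore ${r^+_\alpha}^2-r_0^2\ge\sqrt2\,(2e/N)(\rho/8)^N$. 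Taking logarithms, $\log(1/\mathrm{gap})=O\!\big(N\log(1/\rho)\big)=O\!\big(d^2\,\mathrm{poly}(\log\norm{\vec a}_1)\big)$, which is polynomial in the input size; on the hard partition instances $d=O(\log\norm{\vec a}_1)$, so this is of the claimed form $2^{-p(\log\norm{\vec a}_1)}$.

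The main obstacle is the high-dimensional volume bookkeeping. Inscribing a single ball captures only a $(\rho/4)^N$ fraction of the relevant volume, so the ambient dimension $N=d^2-1$ necessarily enters the exponent; the delicate point is to verify that every dimension-dependent normalization ($(2\pi)^{N/2}$, $\Gamma(N/2)$, $\lvert S^{N-1}\rvert$, $\Vol(B_N)$) cancels between $L$ and $I$, so that the surviving estimate is genuinely $2^{-\mathrm{poly}}$ rather than doubly exponentially small. The monotonicity of $h$ on $[r_0,2]$ together with the a priori cutoff $r^+_\alpha\le2$ are exactly what make the annular bound linear in the gap while keeping the radial density controlled. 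The remaining technical steps — turning $\mineig\varrho^\ast\le-\eta$ into the whitened separation $\rho$, and checking that $0\in\mathrm{int}\,S$ so the retracted ball is admissible — are routine given Lemmas~\ref{lem:bayesian.positivity_violation} and~\ref{lem:bayesian.encoded_ellipsoid}.
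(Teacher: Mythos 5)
Your proposal is correct and follows essentially the same route as the paper's own proof: both balance the Gaussian mass lost outside $\States$ against the mass gained in the annulus, lower-bound the lost mass by a ball of inverse-polynomially small radius inscribed using the eigenvalue gap of Lemma~\ref{lem:bayesian.positivity_violation}, and upper-bound the annulus mass linearly in the radius gap before converting to the squared-radius gap. Two of your execution choices are in fact cleaner than the paper's: the trivial-case split $r^+_\alpha \le 2$ replaces the paper's a priori claim $r^+_\alpha \le 2\sqrt{2}$ (whose justification via the finite Hilbert--Schmidt radius of $\States$ conflates Hilbert--Schmidt and Mahalanobis radii), and you state explicitly that the final form $2^{-p(\log \norm{\vec a}_1)}$ --- whose exponent inevitably scales like $N = d^2 - 1$ --- requires $d = O\bigl(\mathrm{poly}(\log \norm{\vec a}_1)\bigr)$, i.e.\ restriction to the hard (exponentially-large-entry) instances, a limitation the paper shares but hides inside its implicit definition of $p$ in Eq.~\eqref{eq:bayesian.r_separation.p_diff}.
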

\begin{proof}
  First, let us lower bound the volume of $\Eps(r_\frac{\alpha}{C})$ that lies outside the psd states (the solid blue region in Fig.~\ref{fig:bayesian.r_separation}).
  From Lemma~\ref{lem:bayesian.positivity_violation} we know, that there exists a $\varrho\in \Eps(r_\frac{\alpha}{C})$ with smallest eigenvalue smaller than $- \tilde p{(\norm{\vec a})}^{-1}$ for some polynomial $\tilde p$.
  This also gives us a lower bound on
  \begin{equation}
    \mathrm{dist}(\varrho,\States) = \inf_{\varrho' \in \States} \norm{\varrho - \varrho'}_2.
  \end{equation}
  From~\cite[Theorem~III.2.8]{Bhatia_1997_Matrix} we know that for every $\varrho_+ \in \States$ the following bound holds:
  \begin{equation}
    \begin{split}
      \norm{\varrho - \varrho_+}_2
      &\ge \norm{\varrho - \varrho_-}_\infty
      \ge \norm{\vec\lambda^\uparrow(\varrho) - \vec\lambda^\uparrow(\varrho_+)}_2 \\
      &\ge \abs{\mineig(\varrho) - \mineig(\varrho_+)}
      \ge \tilde p{(\norm{\vec a})}^{-1}.
    \end{split}
  \end{equation}
  Therefore,
  \begin{equation}
    \label{eq:bayesian.r_separation.dist}
    \mathrm{dist}(\varrho,\States) \ge \tilde p{(\norm{\vec a})}^{-1}.
  \end{equation}
  This allows us to lower bound the volume of $\Eps(r_\frac{\alpha}{C})$ that lies outside the psd states by an ellipsoid with the same covariance, but radius ${(2\, \tilde p(\norm{\vec a}) \, \maxeig(\Sigma))}^{-1}$
  \begin{align}
    \label{eq:bayesian.r_separation.volume}
    \mathrm{Vol}\left( \Eps(r_\frac{\alpha}{C}) \setminus \States \right)
    &\ge \frac{\pi^{\Nhalf} \abs{\Sigma}}{\Gamma(\Nhalf + 1)} \, \frac{1}{{\left( 2 \tilde p(\norm{\vec a}) \, \maxeig(\Sigma) \right)}^{N}} \\
  \end{align}
  Furthermore, we have
  \begin{equation}
    \label{eq:bayesian.r_separation.volume2}
    \mathrm{Vol}\left(\Eps(r^+_{1-\alpha}) \setminus \Eps(r_\frac{1-\alpha}{C}) \right)
    = \mathrm{Vol}\left( \Eps(r_\frac{1-\alpha}{C}) \setminus \States \right)
  \end{equation}
  since the solid blue and hatched blue regions in Fig.~\ref{fig:bayesian.r_separation} must be of same size.

  We now relate the volume inequality~\eqref{eq:bayesian.r_separation.volume} to a lower bound for the Gaussian volume:
  Due to the set of states $\States$ having finite radius $\sqrt{\tfrac{2(d-1)}{d}}$~\cite[Eq.~(18)]{Kimura_2003_Bloch}, we must have $r^+_{\alpha} \le 2 \sqrt{2}$.
  Therefore,
  \begin{align}
    P\left( \Nhalf, \tfrac{{r^+_{\alpha}}^2}{2} \right) - P\left( \Nhalf, \tfrac{{r_\frac{\alpha}{C}}^2}{2} \right)
    &= \frac{1}{ {(2\pi)}^{\frac{N}{2}} \, \abs{\Sigma}^{\frac{1}{2}} } \, \int_{\Eps(r^+_{\alpha}) \setminus \Eps(r_\frac{\alpha}{C})} \mathrm{e}^{-\frac{1}{2} \norm{\varrho - \varrho_0}^2} \dd^N \varrho \\
    &\ge \frac{\mathrm{e}^{-4}}{{(2\pi)}^{\frac{N}{2}} \, \abs{\Sigma}^{\frac{1}{2}}} \, \mathrm{Vol}\left(  \Eps(r^+_{\alpha}) \setminus \Eps(r_\frac{\alpha}{C})  \right) \\
    &\ge \frac{\mathrm{e}^{-4} \pi^{\Nhalf} \, \abs{\Sigma}^{\frac{1}{2}}}{2^\frac{N}{2} \Gamma(\Nhalf + 1)} \, \frac{1}{{\left( 2 \tilde p(\norm{\vec a}) \, \maxeig(\Sigma) \right)}^{N}} \\
    &=: 2^{-p(\log \norm{\vec a}_1) - 1}
    \label{eq:bayesian.r_separation.p_diff}
  \end{align}
  Finally, note that the following crude inequality
  \begin{equation}
     P\left( \Nhalf, \tfrac{{r^+_{\alpha}}^2}{2} \right) - P\left( \Nhalf, \tfrac{{r_\frac{\alpha}{C}}^2}{2} \right)
     = \int_y^x \frac{t^{\Nhalf - 1} \ee^{-t}}{\Gamma(\Nhalf + 1)} \,\dd t \le x - y
  \end{equation}
  holds for $x \ge y$, since the integrand is less than 1.
  Therefore, with Eq.~\eqref{eq:bayesian.r_separation.p_diff}
  \begin{equation}
    {r^+_{\alpha}}^2 - {r_\frac{\alpha}{C}}^2 \ge 2^{-p(\log \norm{\vec a}_1)},
  \end{equation}
  which proofs the claim.
\end{proof}

We now turn to the problem of computing the normalization constant $C$ for the restricted Gaussian distribution~\eqref{eq:bayesian.density_plus}.
First, we efficiently compute a credibility $\alpha' \in [0,1]$ such that the corresponding credible ellipsoid $\Eps(r_\frac{\alpha'}{C})$ is guaranteed to be contained in the psd states without knowing the value of $C$.
This allows us to leverage Eq.~\eqref{eq:bayesian.criterion} to compute $C$.

\begin{lemma}\label{lem:bayesian.always_contained}
  Let $\vec a \in \mathbb{N}^d$ be an instance of the balanced sum problem and denote by $\Eps_{\vec{a}}$ the corresponding encoding ellipsoid as defined by Eq.~\eqref{eq:bayesian.positivity_violation.ellipsoid}.
  Denote by $\pi_{\varrho_0,\Sigma}$ the Gaussian density, which encodes $\Eps_{\vec{a}}$ as an $\alpha$ credible region according to Lemma~\ref{lem:bayesian.encoded_ellipsoid}.
  Then, the ellipsoid $\Eps(r)$ is fully contained in the psd states provided
  \begin{equation}
    \label{eq:bayesian.always_contained}
    r \le \sqrt{\frac{d}{2(d-1)}} \, \frac{\mineig\varrho_0}{\sqrt{\maxeig\Sigma}}
  \end{equation}
\end{lemma}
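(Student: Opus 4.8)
The plan is to reduce the claim to the spherical case already settled in Lemma~\ref{lem:ortho.spheres} by enclosing the ellipsoid $\Eps(r)$ in a ball centred at $\varrho_0$ and certifying positivity of that ball. The first step is to re-express $\Eps(r)$ in the Bloch parametrization of Thm.~\ref{thm:ortho.ellipsoids}. By the construction in Lemma~\ref{lem:bayesian.encoded_ellipsoid} the covariance $\Sigma$ is diagonal in the generalized Bloch basis with eigenvalues $R_i^2$, so a Hermitian matrix $\varrho=\varrho_0+\sum_i c_i\sigma_i$ lies in $\Eps(r)$ exactly when $\norm{\varrho-\varrho_0}_\Sigma^2 = 2\sum_i c_i^2/R_i^2\le r^2$. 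Equivalently, $\Eps(r)$ is the ellipsoid of Thm.~\ref{thm:ortho.ellipsoids} whose semi-axis along $\sigma_i$ has Bloch length $r R_i/\sqrt2$.

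The geometric heart of the argument is then a single inclusion. Since each semi-axis obeys $r R_i/\sqrt2 \le r\,\max_j R_j = r\sqrt{\maxeig\Sigma}$, the ellipsoid is contained in the ball about $\varrho_0$ of Bloch radius $R := r\sqrt{\maxeig\Sigma}$. It therefore suffices to guarantee that this enclosing ball lies in $\States$, and this is exactly what Lemma~\ref{lem:ortho.spheres} supplies: the ball is contained in the psd states if and only if $R \le \sqrt{\tfrac{d}{2(d-1)}}\,\mineig\varrho_0$. Substituting $R=r\sqrt{\maxeig\Sigma}$ and dividing by $\sqrt{\maxeig\Sigma}$ turns this condition into the hypothesis~\eqref{eq:bayesian.always_contained}, so any $r$ obeying~\eqref{eq:bayesian.always_contained} forces $\Eps(r)\subset\States$, which is the assertion of the lemma.

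The step I expect to be the main obstacle is not any single inequality but the careful bookkeeping of the two normalization conventions that enter. The factor $\mathrm{Tr}(\sigma_i\sigma_j)=2\delta_{ij}$ is responsible both for the relation $\norm{\varrho-\varrho_0}_\Sigma^2=2\sum_i c_i^2/R_i^2$ used above and for the value $r_{\alpha/C}=\sqrt2$ recorded in Lemma~\ref{lem:bayesian.encoded_ellipsoid}, while the pure-state Bloch identity $\sum_i v_i^2(\vec\psi)=\tfrac{2(d-1)}{d}$ from Eq.~\eqref{puresum-1} is what produces the dimension-dependent constant inside Lemma~\ref{lem:ortho.spheres}. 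Keeping these straight is precisely what makes the constant emerge as stated rather than off by a power of two.

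Finally, it is worth recording that this reduction is slightly wasteful: enclosing $\Eps(r)$ in the \emph{tight} ball of Bloch radius $r\sqrt{\maxeig\Sigma/2}$ instead of $r\sqrt{\maxeig\Sigma}$ and repeating the computation sharpens the sufficient condition to $r\le\sqrt{\tfrac{d}{d-1}}\,\mineig\varrho_0/\sqrt{\maxeig\Sigma}$. Either constant is admissible for the statement, and the cruder bound~\eqref{eq:bayesian.always_contained} already suffices for its downstream purpose of fixing a credibility whose credible ellipsoid is guaranteed positive. An equivalent route that bypasses the enclosing ball altogether — and exposes the same constants directly — is to minimize $\bra{\Psi}\varrho\ket{\Psi}$ over $\varrho\in\Eps(r)$ and over normalized pure states $\ket{\Psi}$, exactly as in the proof of Lemma~\ref{lem:ortho.spheres}.
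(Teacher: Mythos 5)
Your proposal is correct and is essentially the paper's own argument: the paper proves, for $\varrho\in\Eps(r)$, the chain $\norm{\varrho - \varrho_0} \le \norm{\varrho - \varrho_0}_\Sigma/\sqrt{\mineig\Sigma^{-1}} = \sqrt{\maxeig\Sigma}\,\norm{\varrho-\varrho_0}_\Sigma \le r\sqrt{\maxeig\Sigma} \le \sqrt{\tfrac{d}{2(d-1)}}\,\mineig\varrho_0$ and then invokes Lemma~\ref{lem:ortho.spheres}, which is precisely your step of enclosing $\Eps(r)$ in a ball of radius $r\sqrt{\maxeig\Sigma}$ about $\varrho_0$ and certifying that ball via the spherical lemma. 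Your side remarks (the explicit $\sqrt{2}$ bookkeeping coming from $\mathrm{Tr}(\sigma_i\sigma_j)=2\delta_{ij}$, and the resulting possible sharpening of the constant) are consistent with, and slightly more careful than, the paper's write-up, but do not change the approach.
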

\begin{proof}
  We know that for any $\varrho \in \Eps(r)$ with $r$ fulfilling~\eqref{eq:bayesian.always_contained} the following inequalities hold
  \begin{align*}
    \norm{\varrho - \varrho_0}
    &\le \frac{1}{\sqrt{\mineig\Sigma^{-1}}}\, \norm{\varrho - \varrho_0}_\Sigma \\
    &\le \frac{1}{\sqrt{\mineig\Sigma^{-1}}}\, r \\
    &\le \sqrt{\frac{d}{2(d-1)}}\, \mineig \varrho_0
  \end{align*}
  since $\mineig\Sigma^{-1} = {(\maxeig \Sigma)}^{-1}$.
  Therefore, $\Eps(r) \subset \States$ due to Lemma~\ref{lem:ortho.spheres}.
\end{proof}

\begin{lemma}\label{lem:bayesian.normalization_constant}
  Using the same notation as Lem.~\ref{lem:bayesian.always_contained} and assuming Prob.~\ref{prob:bayesian.trucated_cr} can be solved efficiently.
  Then, for every instance $a$ of the balanced sum problem and the corresponding $\varrho_0, \Sigma$, we can efficiently approximate the normalization constant $C$ of $\pi^+_{\varrho_0,\Sigma}$ with exponentially small error.
  More precisely, we have
  \begin{equation}
    C = \tilde C (1 + \epsilon),
  \end{equation}
  where $\tilde C$ can be computed in polynomial time making the correction term $\epsilon$ exponentially small.
\end{lemma}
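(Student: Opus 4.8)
The plan is to turn the criterion of Lemma~\ref{lem:bayesian.criterion} into a single, non-adaptive query to the assumed solver for Prob.~\ref{prob:bayesian.trucated_cr}, exploiting the explicit ``safe radius'' of Lemma~\ref{lem:bayesian.always_contained}. The preliminary observation is that truncation only removes mass, so $C^{-1} = \int_{\States}\pi_{\varrho_0,\Sigma}(\varrho)\,\dd\varrho \le 1$ and hence $C \ge 1$. First I would compute $r_0 := \sqrt{\tfrac{d}{2(d-1)}}\,\mineig\varrho_0 / \sqrt{\maxeig\Sigma}$, the radius furnished by Lemma~\ref{lem:bayesian.always_contained}; this needs only the extremal eigenvalues of $\varrho_0$ and $\Sigma$, which are approximable to exponential precision in polynomial time. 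I would also record, as a subroutine, the standard power-series evaluation of $x \mapsto P(\Nhalf,x)$ to exponential precision (the expansion alluded to in Section~\ref{sub:bayesian.gaussian}), since every step below evaluates this function.

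Next I would fix the target credibility $\alpha^* := P(\Nhalf, r_0^2/2)$, computed to exponential precision. Because $\Eps(r_0) \subset \States$ by Lemma~\ref{lem:bayesian.always_contained}, the truncated distribution assigns to $\Eps(r_0)$ the credibility $C\int_{\Eps(r_0)}\pi\,\dd\varrho = C\,\alpha^* \ge \alpha^*$, using $C \ge 1$. By monotonicity of the radius in the credibility this forces $r^+_{\alpha^*} \le r_0$, so the MVCR ellipsoid $\Eps(r^+_{\alpha^*})$ is itself contained in $\States$. Lemma~\ref{lem:bayesian.criterion} then applies and gives $r^+_{\alpha^*} = r_{\alpha^*/C}$, that is $\alpha^* = C\,P(\Nhalf,(r^+_{\alpha^*})^2/2)$, which I solve for the unknown constant:
\begin{equation}
  C = \frac{\alpha^*}{P\!\left(\Nhalf,\ (r^+_{\alpha^*})^2/2\right)}.
\end{equation}

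The computation then invokes the assumed efficient algorithm for Prob.~\ref{prob:bayesian.trucated_cr} on input $(\varrho_0,\Sigma,\alpha^*,\delta)$ to obtain $\hat r$ with $\abs{\hat r - r^+_{\alpha^*}} \le \delta$, and outputs $\tilde C := \alpha^* / P(\Nhalf,\hat r^2/2)$, the last value again evaluated to exponential precision. Writing $h(r) := P(\Nhalf, r^2/2)$, the relative error is, to first order, $\epsilon \approx \delta\, h'(r^+_{\alpha^*})/h(r^+_{\alpha^*})$, so the residual task is to show this logarithmic-derivative factor is at most singly exponential in the input size.

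The hard part will be exactly this finite-precision bookkeeping. I would bound the denominator from below as follows: since $\Eps(r_0)\subset\States$ we have $\int_{\States}\pi \ge \alpha^*$, hence $C \le 1/\alpha^*$, and therefore $h(r^+_{\alpha^*}) = \alpha^*/C \ge (\alpha^*)^2$. A polynomial upper bound on $\log(1/r_0)$ (read off from the explicit $\mineig\varrho_0$ and $\maxeig\Sigma = R_1^2$ of the encoding in~\ref{sec:ellpos}) gives $\alpha^* = P(\Nhalf, r_0^2/2) \ge 2^{-\mathrm{poly}}$, so $h(r^+_{\alpha^*}) \ge 2^{-\mathrm{poly}}$. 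For the numerator, $r^+_{\alpha^*}$ is bounded by an absolute constant by the finite-radius argument of Lemma~\ref{lem:bayesian.r_separation}, whence the $\chi^2_N$-type expression $h'(r^+_{\alpha^*})$ is at most $2^{\mathrm{poly}}$ in $N = d^2-1$. Combining these, $h'(r^+_{\alpha^*})/h(r^+_{\alpha^*}) \le 2^{\mathrm{poly}}$, so choosing $\delta = 2^{-\mathrm{poly}'}$ with a sufficiently large polynomial exponent drives $\epsilon$ below any prescribed exponentially small threshold. Since $\log\delta^{-1} = \mathrm{poly}'$ remains polynomial, the single call to the Prob.~\ref{prob:bayesian.trucated_cr} solver runs in polynomial time, which establishes the claim.
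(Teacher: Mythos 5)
Your proposal is correct and takes essentially the same route as the paper: use the safe radius of Lemma~\ref{lem:bayesian.always_contained} to define an auxiliary credibility $\alpha = P(\Nhalf, r^2/2)$ whose MVCR is guaranteed to lie inside $\States$, so that the exact identity $C = \alpha / P\bigl(\Nhalf, (r^+_{\alpha})^2/2\bigr)$ holds, then make a single query to the assumed solver and evaluate $P$ via its truncated power series with finite-precision error bookkeeping. The only repair needed is that the solver must be fed the computed truncated-series value $\tilde\alpha \le \alpha$ rather than the exact (non-representable) $\alpha^*$ -- precisely what the paper does by working with $\tilde\alpha = \alpha - \epsilon$ throughout, under which your containment argument and identity carry over verbatim -- while your explicit lower bound $P\bigl(\Nhalf,(r^+_{\alpha^*})^2/2\bigr) = \alpha^*/C \ge (\alpha^*)^2$ makes rigorous a step the paper leaves implicit.
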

\begin{proof}
  Due to Lemma~\ref{lem:bayesian.always_contained} and $\mineig\varrho_0 > 0$, we can always find an $r > 0$ such that $\Eps(r)$ is fully contained in the psd.
  Indeed, the eigenvalues of $\varrho_0$ and $\Sigma$ are readily calculated because of their particular simple form in Eq.~\eqref{eq:ellpos.rho0} and Lemma~\ref{lem:bayesian.encoded_ellipsoid}:
  \begin{equation}
    \sqrt{\frac{d}{2(d-1)}} \, \frac{\mineig\varrho_0}{\sqrt{\maxeig\Sigma}}
    =  \frac{q}{R_1 \sqrt{2d(d-1)}}
  \end{equation}
  Set\footnote{%
    Note that $\alpha$ does not denote the credibility used for encoding the ellipsoid in question, but an auxiliary ellipsoid used for computing $C$ here.
  }
  \begin{equation}
    \alpha := P\left( \Nhalf, \tfrac{r^2}{2} \right).
  \end{equation}
  Since we can choose $r$ as small as we want, we may assume that $x = \frac{r^2}{2} \ll 1 < \Nhalf$.
  In this regime, we can expand the normalized incomplete $\Gamma$-function $P$ in a power series~\cite{Gil_2012_Efficient}
  \begin{equation}
    \label{eq:bayesian.normalization_constant.incomplete_gamma}
    P\left( \Nhalf, x \right) = \frac{x^{\Nhalf} \ee^{-x}}{\Gamma\left( \Nhalf + 1 \right)} \sum_{k=0}^\infty \frac{x^k}{{\left( \Nhalf + 1 \right)}_k},
  \end{equation}
  where
  \begin{equation}
    {\left( \Nhalf + 1 \right)}_k = \frac{\Gamma\left( \Nhalf + k + 1 \right)}{\Gamma\left( \Nhalf + 1 \right)}.
  \end{equation}
  Truncating the series in Eq.~\eqref{eq:bayesian.normalization_constant.incomplete_gamma} for $k \ge k_0$
  \begin{equation}
    \label{eq:bayesian.normalization_constant.series}
    P\left( \Nhalf,x \right) = P_{k_0}\left( \Nhalf,x \right) + R_{k_0}\left( \Nhalf,x \right),
  \end{equation}
  with
  \begin{equation}
    P_{k_0}\left( \Nhalf,x \right)
    = \frac{x^{\Nhalf} \ee^{-x}}{\Gamma\left( \Nhalf + 1 \right)} \sum_{k=0}^{k_0} \frac{x^k}{{\left( \Nhalf + 1 \right)}_k}
  \end{equation}
  we can derive a bound on the truncation error $R_{k_0}(\Nhalf,x)$~\cite[Eq.~(2.18)]{Gil_2012_Efficient}
  \begin{equation}
    R_{k_0}(\Nhalf,x) \le \frac{x^{\Nhalf + k_0} \ee^{-x}}{\Gamma(\Nhalf + k_0 + 1)}\, \frac{\Nhalf + k_0}{\Nhalf + k_0 - x - 1}.
  \end{equation}
  Since $x \ll 1$, the term $x^{k_0}$ ensures that we can make the error in computing $\alpha$ exponentially small using only polynomial time in evaluating $P_{k_0}(\Nhalf, x)$.\\

  Assume that we have computed $\tilde\alpha = \alpha-\epsilon$ for some truncation error $\epsilon = R_{k_0}(\Nhalf,x) > 0$.
  We may now use the (postulated) efficient algorithm for Prob.~\ref{prob:bayesian.trucated_cr} to compute the radius of the manifestly positive MVCR $r^+_{\tilde\alpha}$ and, hence, using Eq.~\eqref{eq:bayesian.criterion} the normalization constant:
  Since $C>1$, we have with $r_{\alpha} = r$
  \begin{equation}
    r_\frac{\tilde\alpha}{C} = r_\frac{\alpha-\epsilon}{C} < r_{\alpha} \implies \Eps(r_\frac{\tilde\alpha}{C}) \subset \States \implies r_\frac{\tilde\alpha}{C} = r^+_{\tilde\alpha} \le r_{\alpha}.
  \end{equation}
  Therefore, the ellipsoid with radius $r^+_{\tilde\alpha}$ is also contained in the psd states.
  The same holds true if we replace $r^+_{\tilde\alpha}$ by the actual output $r^+_{\tilde\alpha} \pm \delta$ of the postulated efficient algorithm for Prob.~\ref{prob:bayesian.cr}
  Here, $\delta$ denotes the (selectable) accuracy.
  By choosing $\delta$ small enough and possibly replacing the original radius $r$ by $r - \delta$, we can ensure that
  \begin{equation}
    \label{eq:bayesian.normalization_constant.small_enough_r}
    \Eps(r^+_{\tilde\alpha} \pm \delta) \subset \States,
  \end{equation}
  as well.
  Therefore, Eq.~\eqref{eq:bayesian.criterion} holds and we find
  \begin{align}
    \label{eq:bayesian.normalization_constant.almost_c}
    \frac{\tilde\alpha}{C}
    &= P\left( \Nhalf, \tfrac{{r^+_{\tilde\alpha}}^2}{2} \right) \\
    &= P\left( \Nhalf, \tfrac{{(r^+_{\tilde\alpha} \pm \delta)}^2}{2} \right) - \frac{1}{\Gamma(\Nhalf)} \int_{\tfrac{{r^+_{\tilde\alpha}}^2}{2}}^{\tfrac{{(r^+_{\tilde\alpha} \pm \delta)}^2}{2}}\, t^{\Nhalf - 1} \ee^{-t} \dd t.
  \end{align}
  The first addend on the right hand side can be evaluated using the same series expansion as in Eq.~\eqref{eq:bayesian.normalization_constant.series}, since we are in the same regime $\tfrac{{r^+_{\tilde\alpha}}^2}{2} \ll \Nhalf$.
  The second addend can be bounded by
  \begin{equation}
    \label{eq:bayesian.normalization_constant.upper_bound}
    \abs{\frac{1}{\Gamma(\Nhalf)} \int_{\tfrac{{r^+_{\tilde\alpha}}^2}{2}}^{\tfrac{{(r^+_{\tilde\alpha} \pm \delta)}^2}{2}}\, t^{\Nhalf - 1} \ee^{-t} \dd t}
    < \frac{\left( 2 {r^+_{\tilde\alpha}} \delta + \delta^2 \right)}{2}
  \end{equation}
  since
  \begin{equation}
    \frac{t^{\Nhalf - 1} \ee^{-t}}{\Gamma(\Nhalf)} < 1.
  \end{equation}
  Let us assume w.l.o.g.\ $r^+_{\tilde\alpha} \le 1$.
  This bound, as well as the error bound $\epsilon' > 0$ for the finite series-evaluation of $P$ in~\eqref{eq:bayesian.normalization_constant.almost_c} leads to
  \begin{equation}
    \frac{\tilde\alpha}{C} = P_{k_0}\left( \Nhalf, \tfrac{{(r^+_{\tilde\alpha} \pm \delta)}^2}{2} \right) + \epsilon' \pm D \delta
  \end{equation}
  for some appropriate constant $D$.
  A little arithmetic gives
  \begin{equation}
    \label{eq:bayesian.normalization_constant.formula_c}
    C = \frac{\tilde\alpha}{P_{k_0}(\ldots)} \, \left( 1 - \frac{\epsilon' \pm D\delta}{P_{k_0}(\ldots) + \epsilon' \pm D\delta} \right).
  \end{equation}
  By assumption we can make both $\epsilon'$ and $\delta$ exponentially small using only polynomial time while $P_{k_0}(\Nhalf, x) \uparrow P(\Nhalf,x)$ for $k_0 \to \infty$, the correction to
  \begin{equation}
    \tilde C = \frac{\tilde\alpha}{P_{k_0}\left( \Nhalf, \tfrac{{(r^+_{\tilde\alpha} \pm \delta)}^2}{2} \right)}
  \end{equation}
  in Eq.~\eqref{eq:bayesian.normalization_constant.formula_c} can be made exponentially small using polynomial time.
  On the other hand, $\tilde C$ can be computed in polynomial time as well.
\end{proof}

We now have all the necessary parts for the proof of the main theorem, which will conclude this section.

\begin{proof}[Proof of Thm.~\ref{thm:bayesian.hardness}]
  The proof follows the outline stated at the beginning of this section:
  First, we encode the ellipsoid of Problem~\ref{prob:ortho.ellpos} to be checked as a MVCR of a Gaussian with mean $\varrho_0$ and covariance matrix $\Sigma$ according to Lemma~\ref{lem:bayesian.encoded_ellipsoid}.
  Using Lemma~\ref{lem:bayesian.normalization_constant}, we compute an estimate $\tilde C$ to the normalization constant $C$.
  Using the techniques from the proof of the aforementioned Lemma, we may compute an estimate
  \begin{equation}
    \alpha = C \, P\left( \Nhalf, 1 \right) = \tilde C (1 + \epsilon) \left( P_{k_0}\left( \Nhalf, 1 \right) + \epsilon' \right) = \tilde\alpha + \epsilon''.
  \end{equation}
  This can be done for exponential small errors $\epsilon, \epsilon'$ in polynomial time.
  Here, the computable value is given by
  \begin{equation}
    \tilde\alpha = \tilde C \, P_{k_0}\left(\Nhalf, 1 \right).
  \end{equation}
  An exponential small difference of $\alpha$ and $\tilde\alpha$ also implies an exponential small difference of $r^+_{1-\alpha}$ and $r^+_{\tilde\alpha}$:
  Set $x := r^+_{\alpha}$ and $\tilde x := r^+_{\tilde\alpha}$ and assume $x > \tilde x$ -- the opposite case can be treated along the same lines by choosing a larger constant as a bound for $\tilde x$.
  Following Eq.~\eqref{eq:bayesian.r_separation.p_diff}, we have
  \begin{align*}
    P\left( \Nhalf, \tfrac{x^2}{2} \right) - P\left( \Nhalf, \tfrac{{\tilde x}^2}{2} \right)
    &\ge \frac{\mathrm{e}^{-4}}{{(2\pi)}^{\frac{N}{2}} \, \abs{\Sigma}^{\frac{1}{2}}} \, \mathrm{Vol}\left(  \Eps(x) \setminus \Eps(\tilde x)  \right) \\
    &= \frac{\mathrm{e}^{-4}}{2^{\Nhalf} \Gamma(\Nhalf + 1)} \left( x^N - {\tilde x}^N \right).
  \end{align*}
  Since for fixed $N$, the left hand side can be made exponentially small in polynomial time by improving $\tilde\alpha$, so can the right hand side.
  Therefore, the difference $\abs{x - \tilde x}$ can be made exponentially small as well.

  Now, choose the errors $\epsilon$ and $\epsilon'$ in such a way that
  \begin{equation}
    \abs{r^+_{\alpha} - r^+_{\tilde\alpha}} \le \frac{\Delta}{4}.
  \end{equation}
  Here, $\Delta = 2^{-p(\log \norm{\vec a}_1)}$ is the (at worst exponentially small) gap from Lemma~\ref{lem:bayesian.r_separation}.
  Furthermore, we run the algorithm for computing $r^+_{\tilde\alpha}$ with precision $\delta = \frac{\Delta}{4}$ and denote the result by $\tilde r$.
  If $\abs{\tilde r - \sqrt{2}} \le \frac{\Delta}{2}$, we know that $r^+_{\alpha} = r_\frac{\alpha}{C}$ and the ellipsoid is fully contained in the psd states.
  Otherwise we know that it is not.
\end{proof}
\end{document}